\documentclass[11pt]{article}
\usepackage[margin=1in]{geometry}
\usepackage[ruled]{algorithm2e} % For algorithms

\usepackage{amsmath}
\usepackage{amsthm}
\usepackage{bbm}
\usepackage{amsfonts}
\usepackage{xcolor}
\usepackage{cancel}
\usepackage{hyperref}
\usepackage{graphicx}
\usepackage{mathtools}
\usepackage{thmtools} 
\usepackage{natbib}
\usepackage{bm}
\usepackage{float}

\graphicspath{{figs/}}

\usepackage{color}
\definecolor{darkblue}{rgb}{0.0,0.0,0.2}

\newenvironment{proofsketch}{%
  \proof}{\endproof}

\DeclarePairedDelimiterX{\infdivx}[2]{(}{)}{%
  #1\;\delimsize\|\;#2%
}
\newcommand{\kldiv}{D_{\text{KL}}\infdivx}

\newcommand{\reals}{\mathbb{R}}
\newcommand{\E}{\mathop{\mathbb{E}}}
\newcommand{\ones}{\mathbbm{1}}

\newcommand{\X}{\mathcal{X}}
\newcommand{\Y}{\mathcal{Y}}

\newcommand{\countv}{k_{\bm{x}}}
\newcommand{\emp}{\hat{P}_{\bm{x}}}
\newcommand{\game}{\mathcal{G}_n(\X, \Y, d, Q)}
\newcommand{\pmin}{p_0}
\DeclareMathOperator\supp{supp}

\newcommand{\sigmaprofile}{\boldsymbol{\sigma}}

\newtheorem{lemma}{Lemma}
\newtheorem*{informaltheorem}{Theorem (informal)}
\newtheorem{corollary}{Corollary}
\newtheorem{theorem}{Theorem}
\newtheorem*{theorem*}{Theorem}
\newtheorem{proposition}{Proposition}

\theoremstyle{definition}
\newtheorem{observation}{Observation}

\theoremstyle{definition}
\newtheorem{condition}{Condition}

\theoremstyle{definition}
\newtheorem{definition}{Definition}

\usepackage[colorinlistoftodos,textsize=tiny]{todonotes}
\usetikzlibrary{shapes, calc, arrows.meta, positioning}
\tikzset{notestyleraw/.append style={inner sep = 2pt}}

\begin{document}

\title{Equilibria in Large Position-Optimization Games}

\author{Rafael Frongillo, Melody Hsu, Mary Monroe, and Anish Thilagar \\
University of Colorado Boulder}
\date{}
\maketitle

\begin{abstract}
    We propose a general class of symmetric games called position-optimization games.
    Given a probability distribution $Q$ over a set of targets $\Y$, the $n$ players each choose a position in a space $\X$.
    A player's utility is the $Q$-mass of targets they are closest to under some proximity measure, with ties broken evenly.
    Our model captures Hotelling games and forecasting competitions, among other applications.
    We show that for sufficiently large $n$, both pure and symmetric mixed Nash equilibria exist, and moreover are \emph{extreme}: all players play on a finite set of \emph{pseudo-targets} $\X^* \subseteq \X$.
    We further show that both pure and symmetric mixed equilibria converge to the distribution $P$ on $\X^*$ induced by $Q$, and bound the convergence rate in $n$.
    The generality of our model allows us to extend and strengthen previous work in Hotelling games, and prove entirely new results in forecasting competitions and other applications.
\end{abstract}

\section{Introduction} \label{sec:intro}

Consider the following abstract, non-cooperative game: $n$ agents compete by playing over a common set of positions in some space $\X$, with the goal of capturing targets in some space $\Y$ distributed according to some measure $Q$.
An agent captures a target if their position minimizes some measure of proximity to it (with ties broken randomly).
We call this setting a \emph{position-optimization game}.
The model captures a variety of previously studied games, including versions of Hotelling's co-location model, forecasting competitions, and Voronoi games (\S~\ref{subsec:intro-examples}).

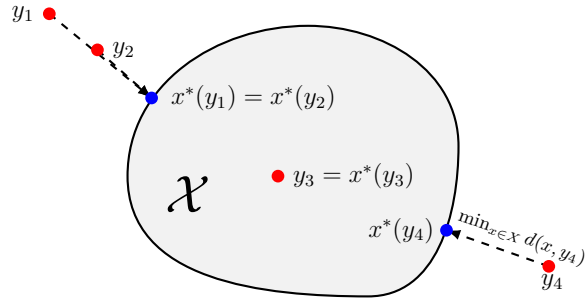
\begin{figure}[H]
\centering
\begin{minipage}[t]{0.9\linewidth}  % left half
\centering
\begin{tikzpicture}[>=latex, scale=0.8, transform shape]

    \def\blobX{(0,0) to[out=90,in=180] (3,2.5) to[out=0,in=90] (5.5,0.5) to[out=270,in=0] (4,-2) to[out=180,in=270] (0,0)}

    \fill[gray!10] \blobX;
    \draw[thick, color=black] \blobX;
    \node[font=\Huge, color=black] at (1, -0.3) {$\mathcal{X}$};

    % --- Define Points ---

    \coordinate (y_red) at (2.5, 0);
    
    % 2. The Blue Point (OUTSIDE X - Top Left)
    \coordinate (y_blue) at (-1.3, 2.7);
    \coordinate (x_star_blue) at (0.4, 1.3); % Visual projection onto boundary
    \coordinate (y_orange) at (-0.5, 2.1);
    
    % 3. The Green Point (OUTSIDE X - Bottom Right)
    \coordinate (y_green) at (7, -1.5);
    \coordinate (x_star_green) at (5.3, -0.9); % Visual projection onto boundary

    % Line for Blue
    \draw[dashed, black, thick, ->] (y_blue) to (x_star_blue);

    \draw[dashed, black, thick, ->] (y_orange) -- (x_star_blue);
    
    % Line for Green
    \draw[dashed, black, thick, ->] (y_green) -- (x_star_green)  node[midway, above, sloped, color=black, font=\footnotesize] {$\qquad \min_{x \in X} d(x,y_4)$};

    % --- Draw and Label Points ---

    \fill[red] (y_red) circle (3pt) node[right, black, font=\large] {$\; y_3 = x^*(y_3)$};

    \fill[red] (y_orange) circle (3pt) node[right, black, font=\large] {$\; y_2$};

    \fill[red] (y_blue) circle (3pt) node[left, black, font=\large] {$y_1 \;$};
    \fill[blue] (x_star_blue) circle (3pt) node[right, black, font=\large, xshift=2pt] {$\; x^*(y_1) = x^*(y_2)$};

    \fill[red] (y_green) circle (3pt) node[below, black, font=\large] {$\; y_4$};
    \fill[blue] (x_star_green) circle (3pt) node[left, black, font=\large, xshift=-2pt] {$x^*(y_4)$};

\end{tikzpicture}
\caption{An example of spaces $\X$ and $\Y = \{y_1, y_2, y_3, y_4\}$, here both subsets of some Euclidean space with proximity function $d$.
Note multiple targets ($y_1$, $y_2$) can map to the same pseudo-target, and targets in $\X$ are themselves pseudo-targets (see $y_3$).}
\label{fig:xstar-y}
\end{minipage}
\end{figure}

These games and their many variations have been studied extensively, but mainly under limited conditions---for specific choices of $\X$, $\Y$, and $Q$, or under strong structural constraints on agent strategies.
For example, the finite-location Hotelling game of \citet{nunez2016competing,nunez2017large} involves the special case where $\X$ is a finite set of retailers competing for the attention of consumers $\Y$.
In forecasting competitions, our model captures the special case where all agents have the same information over the outcome, which perhaps surprisingly is already a nontrivial game;
in the sole work addressing this forecasting game, \citet{laster1999rational} assume the $n$ agents form a continuum, i.e., they study the continuous limit as $n\to\infty$.
Similar to other variations of the Hotelling model (\S~\ref{sec:related-work}), the conclusions of these works is that for sufficiently large $n$, the equilibria approach a distribution determined by $Q$.
For example, in a Hotelling game, retailers will follow the distribution of consumer preferences over positions when they set up shop; in forecasting competitions, forecaster reports will mimic the underlying distribution over outcomes. 

Despite using similar techniques, the above results all appear in disjoint models---sometimes only loosely related, as in Hotelling versus forecasting---with no clear technical connection between them.
Moreover, several important questions remain unanswered.
Most glaringly, there is no existing work on general equilibria of forecasting competitions for any finite value of $n$.
The finite-location Hotelling model does give results for sufficiently large $n$, but without convergence rates in $n$, which are crucial to understand when one can rely on properties of the limiting behavior in practice.

In this paper, we give general equilibrium results and convergence rates for position-optimization games for large $n$, illuminating that these disjoint settings are all special cases of the same fundamental structure.
We show the existence of symmetric mixed Nash equilibria for all $n$, and pure equilibria for sufficiently large $n$.
Furthermore, we show that both pure and mixed equilibria converge at a rate of $O(1/n)$ to $P$, the projection of $Q$ onto the \emph{pseudo-targets} $\X^*$. (We define $\X^* \subseteq \X$ as the positions that are optimal for some $y\in\Y$; see Figure~\ref{fig:xstar-y}).
For example, in the finite-location Hotelling game of \citet{nunez2016competing}, $\X^*$ is the set of all first-ranked locations of consumers; in forecasting competitions, $\X^*$ is the set of forecasts giving probability 1 or 0 to every event.)

\begin{informaltheorem}
    Assume $\X^*$ is a finite set.
    For sufficiently large $n$, both pure and symmetric mixed equilibrium strategies are supported on $\X^*$.
    Moreover, the distribution over $\X^*$ induced by pure equilibria (via the empirical distribution of agent positions) and by symmetric mixed equilibria (via the underlying mixed strategy) both converge to $P$ at a rate of $O(1/n)$.
\end{informaltheorem}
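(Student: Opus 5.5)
The plan is to work with the model as set up: each target $y$ has a nonempty set of optimal positions, and $x^*(y)$ is its unique pseudo-target. We assume the standard regularity conditions the paper will impose. These are that $\X^*=\{x_1,\dots,x_m\}$ is finite, that every $x_i$ has $p_i:=P(x_i)\ge \pmin>0$, and that there is a strict \emph{separation gap}. The gap means a non-pseudo-target position $x\notin\X^*$ captures nothing outright whenever all $m$ pseudo-targets are occupied. More precisely, it ties for or wins only a set of $Q$-mass that is dominated, point by point, by what it would get at some pseudo-target.

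The utility of a player at $x_i$, with $k_i$ players there, is $p_i/k_i$ as long as every pseudo-target is occupied. This holds because each $y$ with $x^*(y)=x_i$ is captured by the players at $x_i$, who tie. The game thereby becomes a ``bins'' game: utility $p_i/k_i$ for a player in bin $i$.

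For the pure equilibria, the first step is to show that every pseudo-target is occupied once $n$ is large. If some $x_i$ were empty, a player could move there and collect at least $p_i\ge \pmin$. But some occupied location has at least $n/m$ players, so its occupants earn at most $1/(n/m)=m/n$, which is below $\pmin$ for $n>m/\pmin$. Next, any player at $x\notin\X^*$ earns $0$ (by the gap condition) and would deviate to some $x_i$, earning $p_i/(k_i+1)>0$. So for large $n$ every pure equilibrium is supported on $\X^*$.

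The pure equilibria are then exactly the occupancy vectors $k$ with $p_i/k_i\ge p_j/(k_j+1)$ for all $i,j$. Existence follows by a greedy/d'Hondt allocation: assign players one at a time to the bin maximizing $p_j/(k_j+1)$, and check that the inequality is preserved. For the rate, let $c=\min_i p_i/k_i$. The equilibrium inequalities give $p_j/k_j\le p_j/(k_j+1)+p_j/(k_j(k_j+1))$, so every ratio $p_i/k_i$ lies in $[c,\,c(1+1/k_{\min})]$. Summing $k_i\approx p_i/c$ shows $c\approx 1/n$ and $|k_i/n-p_i|=O(1/n)$, with constants depending on $m$ and $\pmin$.

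For symmetric mixed equilibria, existence for every $n$ needs a separate argument. Restricted to $\X^*$ the game is finite, so Nash's theorem applied to the symmetrized game gives a symmetric equilibrium $\sigma$. It then remains to check that no deviation outside $\X^*$ is profitable; for large $n$ this follows from the gap condition, because the probability that some pseudo-target is left empty decays exponentially. For small $n$ we would appeal to the general existence result for symmetric mixed equilibria in the paper (e.g.\ via compactness and continuity on $\X$).

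The characterization for a symmetric mixed $\sigma$ uses the payoff at $x_i$ when the other $n-1$ players draw from $\sigma$:
\[ u(x_i)=p_i\,\E\Big[\tfrac{1}{1+B_i}\Big]=p_i\,\frac{1-(1-\sigma_i)^n}{n\sigma_i}, \]
where $B_i\sim\mathrm{Bin}(n-1,\sigma_i)$. This formula requires that all pseudo-targets are occupied, which holds up to an exponentially small error once each $\sigma_i\gtrsim \pmin$. Indifference across the support, together with the monotonicity of $s\mapsto(1-(1-s)^n)/(ns)$, forces full support on $\X^*$. It also gives $p_i/\sigma_i=p_j/\sigma_j\,(1+O(e^{-cn}))$, so $\sigma$ is in fact exponentially close to $P$, which is stronger than the claimed $O(1/n)$.

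The main obstacle is the support claim for mixed strategies when $\X$ is a general space. A priori, $\sigma$ could put small mass near $\X^*$, or on positions that tie with pseudo-targets for certain $y$, and then the event that some pseudo-target is empty is not negligible. We would first show $\sigma(x_i)\ge \pmin/2$ for every $i$ by a bootstrapping argument: if $\sigma(x_i)$ were small, then deviating to $x_i$ earns about $p_i$, while the average equilibrium payoff is $1/n$. Only with that lower bound in hand do the exponential bounds and the gap condition rule out non-pseudo-target positions.
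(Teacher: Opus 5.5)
Your pure-equilibrium argument is the paper's: coverage by moving a low-payoff player onto an empty pseudo-target, zero payoff off $\X^*$ once $\X^*$ is covered, the greedy allocation of Algorithm~\ref{alg:pure-construction}, and the inequalities $p_i/k_i\ge p_j/(k_j+1)$ with $\sum_i k_i=n$ (Lemmas~\ref{lem:pure-lower-bound} and~\ref{lem:wp-convergence}). Your ``gap'' is not an extra assumption: it follows from Condition~\ref{cond:ties-finite-x}, since for $Q$-a.e.\ $y$ every $x\neq x^*(y)$ is strictly farther. One step is wrong as written. ``Some occupied location has at least $n/m$ players'' fails when players sit at many distinct positions outside $\X^*$. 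Use instead the constant-sum fact (Observation~\ref{obs:agent-low-util}) that some player earns at most $\frac1n<\pmin$. You should also record two things. First, $c\le\frac1n$ forces $k_{\min}\ge \pmin n/2$, which is what makes $1/k_{\min}=O(1/n)$. Second, the pairwise inequalities suffice for equilibrium only when every $k_i\ge2$, since a lone occupant who leaves can keep part of the vacated pseudo-target's mass; the greedy output satisfies this for $n>2/\pmin$.

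For mixed equilibria, your route to extremeness is Proposition~\ref{prop:mixed-lower-bound}, Lemma~\ref{lem:mixed-extreme} and Proposition~\ref{prop:extreme-equilibria}: a lower bound on every $\sigma_x$, a union bound on non-coverage, and zero payoff off $\X^*$ on the covered event. The bootstrapping needs to be quantitative, because ``deviating earns about $p_i$'' is only true when $n\sigma_i\ll1$. What you need is that $u(x_i)\ge p_i\,\E[1/(1+B_i)]=p_ih(\sigma_i)$ holds \emph{without} coverage, because targets with $x^*(y)=x_i$ are won exactly by the players at $x_i$. Here $h(s)=\frac1n\sum_{k=0}^{n-1}(1-s)^k$ is decreasing, so $\sigma_i<p_i/2$ gives $u(x_i)>\frac2n\bigl(1-(1-p_i/2)^n\bigr)>\frac1n$ once $n>2\log 2/\pmin$.

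The genuine difference from the paper is the upper bound. The paper bounds the non-coverage term crudely by $1/n^2$ and rewrites $u(x)=\frac1n$ as $G(\sigma_x)\ge p_x\ge\overline G(\sigma_x)$. It then needs monotonicity and inversion of $G$ and $\overline G$ (Lemmas~\ref{lem:G-equality} and~\ref{lem:monotonicity-with-ties}), plus separate casework for $p_x\ge1-1/\sqrt n$ (Proposition~\ref{prop:mixed-upper-bound}), ending at $|\sigma_x-p_x|\le\frac1n$. You instead sandwich $p_ih(\sigma_i)\le u(x_i)\le p_ih(\sigma_i)+\sum_{j\neq i}(1-\sigma_j)^{n-1}$ and use $u(x_i)=\frac1n$ at each atom. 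Since every $\sigma_j\ge\pmin/2$, you can read off $\sigma_i=p_i\bigl(1+O(nm(1-\pmin/2)^{n})\bigr)$. This is correct, more elementary, and sharper. It shows the paper's $\frac1n$ is an artifact of the $1/n^2$ error term and of taking the supremum of $|G^{-1}(p)-p|$ down to $p=\frac1n$, which is irrelevant once $p_x\ge\pmin$.

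What each approach buys: the paper's route gives explicit uniform thresholds and the exact description $\sigma_x=G^{-1}(p_x)$ when $|\X^*|=2$. Yours gives an exponential rate, but its constants depend on $m$ and $\pmin$ and become informative only once $n\gtrsim\frac{2}{\pmin}\log(nm)$.
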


Our results show that, even in this general model, agents continue to mimic the underlying distribution of targets through their chosen positions for large $n$.
As discussed further in \S~\ref{sec:related-work}, our result is the first to characterize convergence rates of the game, and the first to characterize general forecasting competition equilibria.
We also extend previous asymptotic results of the original Hotelling game to non-differentiable (specifically, finite-support) consumer distributions, and strengthen the assumptions and results for finite-location Hotelling games.
We suspect that our work implies novel results for other unexplored instances of this general game as well.

\subsection{Motivating examples} \label{subsec:intro-examples}
    Position-optimization games arise across many domains, including network science and voting theory. 
    We outline several below, and later reference these settings to illustrate our general results. 
    
    \begin{enumerate}
        \item \emph{Forecasting competitions.}
        In forecasting competitions, forecasters compete to make the best predictions of $m$ random events.
        One example is the machine learning competition platform Kaggle, where forecasters are provided with the same set of training data and compete to develop the best prediction models.
        These competitions typically score each forecaster according to a strictly proper scoring rule, which is known to be truthful in isolation.
        However, others have observed that the winner-takes-all nature of the competition causes strategic behavior in the simplest one-event case~\citep{lichtendahl2007probability}.
        This tension has motivated extensive game-theoretic study of such competitions \citep{frongillo2021efficient,witkowski2023incentive,monroe2025hedging}.
        Under the position-optimization game framework, we observe strategic non-truthful reporting, even when agents share a common-knowledge belief $Q$.
        
        Formally, we model the random events as a vector $Y \in \{0,1\}^m$ with distribution $Q \in \Delta\left(\{0,1\}^m\right)$.
        Each forecaster $i$ submits a vector of marginals for each event $x_i \in [0,1]^m =: \X$.
        The winner is the forecaster whose prediction has the highest accuracy as judged e.g.\ by the Brier scoring rule, which corresponds to the $x_i$ closest to $Y$ in $\ell_2$-distance (ties broken at random).
        We visualize this model of forecasting competitions for $m=3$ in Figure~\ref{fig:forecasting}.
        
        \item \emph{Finite-location Hotelling games.}
        Hotelling games are a well-studied class of spatial games involving a set of \emph{consumers} and \emph{retailers.}
        Retailers choose where to set up their store locations in some space, in order to attract the largest mass of consumers under some distribution.
        Consider the finite-location model introduced by \citet{nunez2017large}.
        Let $\mathcal{S}$ be a compact subset of a Euclidean space with distance measure $d$.
        Then retailers choose locations $\bm{x} = (x_1, \ldots, x_n)$ from some finite set $\X = \{1, \ldots, k\} \subset \mathcal{S}$.
        Consumers in $\mathcal{S}$ are distributed according to an absolutely continuous distribution $\lambda$ over $\mathcal{S}$; we need not assume anything about the cardinality of consumers. 
        Each retailer's utility is then the share of customers who patronize their shops, given each customer will visit the location that is preferred most in $\bm{x}$ under distance $d$.
        We give a visualization of this game in Figure~\ref{fig:finite-hotelling}.

        \item \emph{Classic Hotelling games.}
        The classic Hotelling game allows retailers to choose any location on the unit interval, unlike the finite-location version. 
        Consumers follow some distribution on the same interval.
        
        \item \emph{Spatial voting.}
        Closely related to Hotelling games is the model of spatial voting (see e.g. \citet{merrill1999unified}), often explored in the literature on metric distortion
        \citep{anshelevich2018approximating}. 
        We adapt the model in this literature to our setting, where candidates can strategize over some underlying ideological space and the winner is chosen according to plurality rule.
        
        \item \emph{Discrete Voronoi games.}
        In one-round discrete Voronoi games \citep{durr2007nash,sun2020voronoi}, a set of strategic influencers place themselves on vertices in a graph in order to attract the most users as measured by shortest path distance. 
    \end{enumerate}

\begin{figure}[htbp]
\centering
\begin{minipage}[t]{0.48\linewidth}
\centering
\includegraphics[scale=0.23]{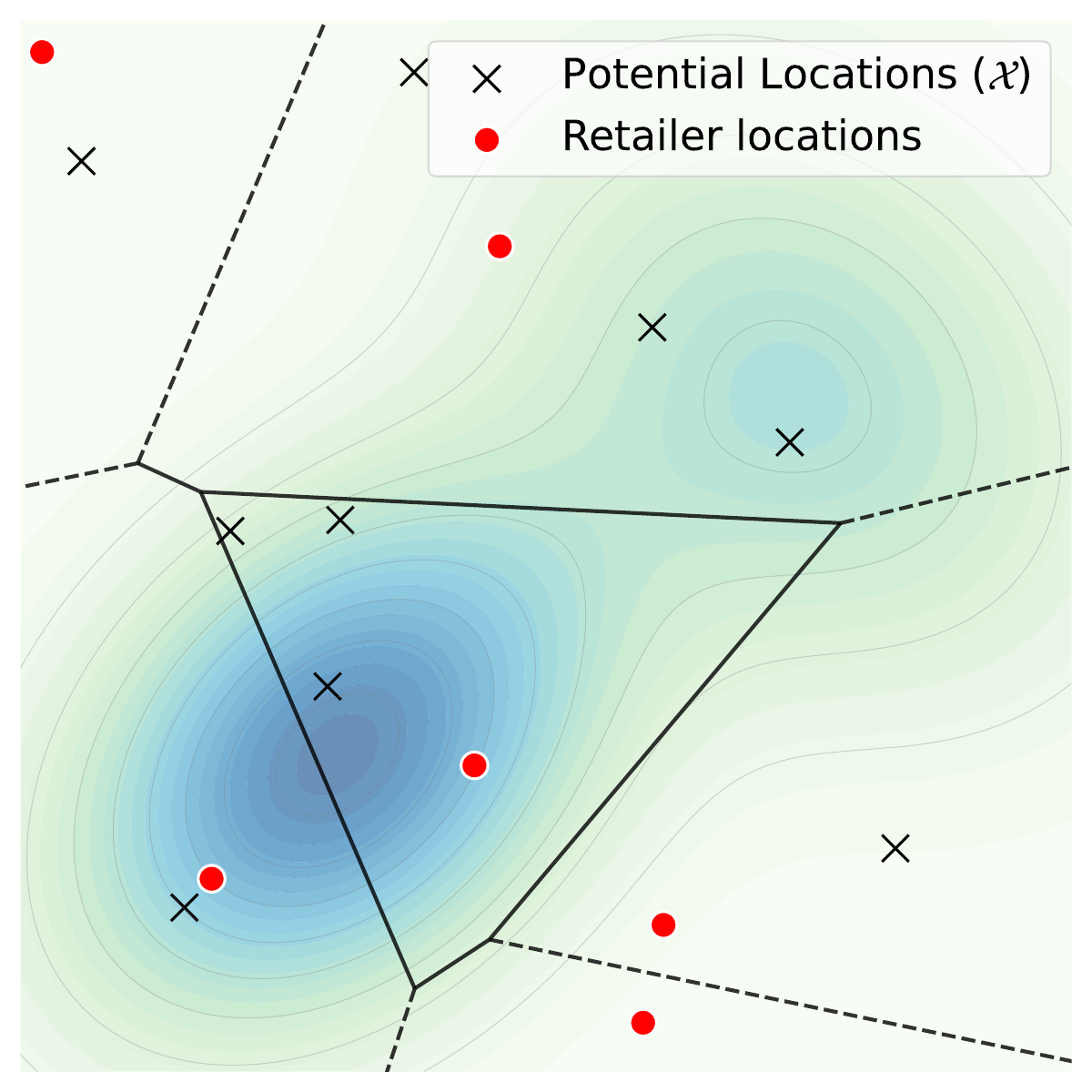}

\caption{The finite-location Hotelling game, where $\X$ is a finite set of locations in a Euclidean space.  Retailers choose locations from $\X$ (red). Consumers follow a continuous distribution, and retailers win mass inside their Voronoi cell relative to the other retailer locations.}
\label{fig:finite-hotelling}

\end{minipage}
\hfill
% ---------- RIGHT COLUMN: placeholder for another figure ----------
\begin{minipage}[t]{0.48\linewidth}
\centering
%\framebox[0.95\linewidth]{\parbox[c][3cm][c]{0.9\linewidth}{\centering forecasting example here}}
\resizebox{0.6\linewidth}{!}{%
\begin{tikzpicture}[scale=4]
% --- cube vertices ---
\coordinate (A) at (0,0,0);
\coordinate (B) at (1,0,0);
\coordinate (C) at (1,1,0);
\coordinate (D) at (0,1,0);

\coordinate (E) at (0,0,1);
\coordinate (F) at (1,0,1);
\coordinate (G) at (1,1,1);
\coordinate (H) at (0,1,1);

% --- cube edges ---
\draw (A)--(B)--(C)--(D)--cycle;
\draw (E)--(F)--(G)--(H)--cycle;
\draw (A)--(E);
\draw (B)--(F);
\draw (C)--(G);
\draw (D)--(H);

% --- black dots on vertices ---
\foreach \v in {A,B,C,D,E,F,G,H} {
  \fill[black] (\v) circle (0.02);
}

% --- upper-right-front vertex ---
\coordinate (G) at (1,1,1);

% --- gray sphere centered at G ---
\shade[ball color=gray!40, opacity=0.5] (G) circle (0.28);

% --- blue point at the same vertex ---
\fill[blue] (G) circle (0.03);

% --- red points ---
\fill[red] (0.2,0.3,0.15) circle (0.03);
\fill[red] (0.25,0.45,0.3) circle (0.03);
\fill[red] (0.5,0.2,0.25) circle (0.03);
\fill[red] (0.4,0.1,0.6) circle (0.03);
\fill[red] (0.72,0.83,0.9) circle (0.03);

% --- dashed segment ---
\draw[dashed, thick] (0.75,0.85,0.9) -- (G);

\end{tikzpicture}
}
\caption{A forecasting competition with $m = 3$ binary events.  Targets correspond to vertices of the cube $\{0,1\}^3$.  Forecasters submit predictions (red) in the filled cube $[0,1]^3$.  The outcome (blue) is drawn from distribution $Q$ over vertices, with the winner minimizing $\ell_2$ distance.}
\label{fig:forecasting}
\end{minipage}
\label{fig:shops-and-agents-left-right}
\end{figure}

    \subsection{Related work}
    \label{sec:related-work}
    
    The majority of related work studies variations of the classic Hotelling game, where firms compete for consumers by location (and not on price), with the firm capturing the set of consumers closest to them.%
    \footnote{This version is called the \emph{pure} Hotelling game is the literature. In this paper, we always mean the pure Hotelling game, but omit the word "pure" to avoid confusion with pure and mixed strategies.  We refer the reader to \cite{DREZNER20245} for a more detailed survey of spatial competition.}
    \citet{osborne1986nature} study the game on the unit interval where firms employ mixed strategies and consumer distributions are nonuniform (but still twice-differentiable).
    In the case where the CDF of consumers $F$ is twice-differentiable, the authors show that the distribution of retailers asymptotically approaches $F$ over $n$. 
    Our paper is the first to derive results under the Hotelling model for non-differentiable $F$, in particular for the case where the support of the consumer distribution is finite (and the position set is still infinite).

    \citet{nunez2016competing} study an instantiation of our game where the consumer space is abstracted away: each consumer is identified with a strict ranking over locations in finite set $\X$.
    Their results show that pure equilibria exist, and the empirical distribution in pure equilibria converges to the true underlying mass of first-place consumer rankings over the finite set of locations.
    However, they do not provide rates of convergence in $n$; and while their equilibrium construction is only shown to terminate in finite time, our simpler construction admits an explicit runtime bound of $O(|\X^*|n).$
    The authors also assume preferences are strict, whereas we only require that consumers have a unique first preference. 
    Moreover, the authors assume no matter where retailers are located, each receives at least one consumer. 
    We do not need this assumption, but instead derive it in our characterization of equilibria.
    
    A follow-up paper~\citep{nunez2017large} derives similar results for symmetric mixed equilibria.
    Here, the authors study a specific case of the general finite-location setting in their previous paper~\citep{nunez2016competing}: the consumer set $\Y$ is a compact subset of a Euclidean space, and retailers can choose from a finite set of locations in the same space.
    Given the consumer distribution over target set $\Y$ is absolutely continuous, the authors prove an analagous result that the symmetric mixed equilibria converges to the true underlying mass of consumer preferences.
    Again, they do not provide rates of convergence in $n$.
    Our mixed equilibrium results are more general, as we do not impose Euclidean structure on the sets $\X$ and $\Y$, nor do we assume that the set $\X$ is finite, a case crucial for forecasting competitions. 
    
    In the forecasting competition literature, 
    \citet{laster1999rational} study forecasting competitions with one event ($\Y = \{0,1\}$), and assume a continuum of agents.
    \citet{ottaviani2006strategy} study settings where the event is continuous with a Normal distribution $Q$.
    Both papers observe in their specific cases that symmetric mixed equilibria will converge to the event distribution for infinitely many players; neither give convergence rates. 
    \citet{lichtendahl2007probability} observe that truthfulness is not an equilibrium for the one-event setting.
    Given this fact, all further work in the area either restricts to two-player settings with specific instantiations of $Q$ \citep{monroe2025hedging}, or pivots to proposing new competition mechanisms~\citep{frongillo2021efficient,witkowski2023incentive}.
    Thus our paper is the first to explicitly characterize general equilibria in traditional forecasting competitions.
    
\section{Setting}\label{sec:setting}
\subsection{Model}
Let $\X$ be an arbitrary action set; we call $x\in\X$ a \emph{position}.
Let $\Y$ be an arbitrary target set; we call $y \in \Y$ a \emph{target}.
We place no assumptions of finiteness on $\X$ or $\Y$, nor do we assume that they coincide or are subsets of the same space. 
Let $d:\X\times\Y\rightarrow[0,\infty]$ be a \emph{proximity function} over positions and targets.
Let $x^*(y)=\arg\min_{x\in\X}d(x,y)$ be that minimizer, which we refer to as a \emph{pseudo-target}.
Let $\X^*=\bigcup_{y\in\Y}x^*(y)$ be the set of all pseudo-targets for each $y\in\Y$.
Define $\Delta(\Y)$ to be the set of all mass distributions over $\Y$, and let $Q\in\Delta(\Y)$ be a distribution over targets.
We impose the following conditions on $Q$ and $\X^*$:

\begin{condition} \label{cond:ties-finite-x}
$Q(\{y: |x^*(y)| > 1\}) = 0$ and $|\X^*| < \infty$.
\end{condition}

The first part of Condition~\ref{cond:ties-finite-x}, consistent with \citet{nunez2017large}, allows us to ignore ties; relaxing this assumption would require a more intricate analysis.
The second part, requiring that the set of pseudo-targets is finite, is more fundamental to our approach and thus appears more challenging to relax; see \S~\ref{sec:discussion} for discussion.

\begin{definition}
    A \emph{position-optimization game} $\game$ is a game on $n$ agents satisfying Condition~\ref{cond:ties-finite-x}.
    The pure strategy of each agent $i \in [n]$ is a position $x_i\in\mathcal{X}$; let $\bm{x}=(x_1,\ldots,x_n)$ be the vector of $n$ players' positions. 
    Let $X_{\min}(\bm{x}, y)=\arg\min_{x_i \in \bm{x}}d(x_i,y)$ be the set of player positions that minimize proximity to target $y$.
    Let
    \begin{equation}\label{eq:pi}
        \pi_i(x_i, \bm{x}_{-i}, y) = \frac{\mathbbm{1}_{x_i\in X_{\min}(\bm{x}, y)}}{| X_{\min}(\bm{x}, y)|}
    \end{equation}
    represent the share of target $y$ that agent $i$ receives by playing location $x_i$, where target $y$ randomizes uniformly over its minimum proximity set $X_{\min}(\bm{x}, y)$.
    Then agent $i$'s utility in the game is
    \begin{equation}\label{eq:gen-utility}
        u_i(x_i,\bm{x}_{-i})=\E_{y \sim Q} \pi_i(x_i, \bm{x}_{-i}, y).
    \end{equation}
    That is, the objective of each of agent is to choose a location $x_i$ that minimizes the proximity function $d$ to $y\in\Y$ subject to the distribution $Q$ over targets and to the locations of the other players $\bm{x}_{-i}$.
\end{definition}

We can observe that $\game$ is a symmetric game, since for any permutation $\pi$ over the $[n]$ and for all $i \in [n]$, $u_{\pi(i)}(x_1, \ldots,x_i,\ldots, x_n) = u_i(x_{\pi(1)}, \ldots, x_{\pi(i)},\ldots, x_{\pi(n)})$. 

\paragraph{Mixed strategies.}
We will also consider mixed strategies, where each agent $i$ chooses a strategy $\sigma_i \in \Delta (\X)$.
A mixed strategy profile for all players is given by $\sigmaprofile = (\sigma_1, \dots, \sigma_n)$. 
Let $\sigmaprofile_{-i} = (\sigma_1, \dots, \sigma_{i-1}, \sigma_{i+1}, \dots, \sigma_n)$ indicate the strategy of all players other than $i$.
Under strategy profile $\sigma$, agent $i$ receives expected utility
\begin{equation}
    \label{eq:expected-utility-mixed}
    u_i(\sigma_i, \sigmaprofile_{-i}) = \E_{x_i \sim \sigma_i, x_{-i} \sim \sigmaprofile_{-i} } \left[u_i(x_i, \bm{x}_{-i}) \right].
\end{equation}

For any subset $I \subseteq \X$, $\sigma_i(I)$ denotes the probability that player $i$ chooses a reported position $x_i \in I$.
In an abuse of notation, for a single position $x \in \X$ we let $\sigma(x)$ denote $\sigma(\{x\})$.
We use the shorthand $u_i(x_i, \sigmaprofile_{-i}) = u_i(\ones[x_i], \sigmaprofile_{-i})$ to be the utility of agent $i$ when they play the pure strategy of always reporting position $x_i$.

We can now define several notions of Nash equilibria over the game $\game$.

\begin{definition}[Pure Nash equilibrium]
    A \emph{pure Nash equilibrium}  is a set of positions $\bm{x} \in \X^n$ such that for all $i \in [n]$ and all $x_i' \in \X$, $u_i(x_i',\bm{x}_{-i}) \leq u_i(x_i,\bm{x}_{-i})$.
\end{definition}

\begin{definition}[Mixed Nash equilibrium]
  A \emph{mixed Nash equilibrium} is a strategy profile $\sigmaprofile \in \Delta (\X)^n$ such that for all $i \in [n]$ and all $\sigma_i' \in \Delta (\X)^n$, $u_i(\sigma_i',\sigmaprofile_{-i}) \leq u_i(\sigma_i,\sigmaprofile_{-i})$.
\end{definition}

\begin{definition}[Symmetric Mixed Nash equilibrium]
  A \emph{symmetric mixed Nash equilibrium} is a single mixed strategy $\sigma \in \Delta (\X)$ such that $\sigmaprofile = (\sigma, \sigma, \ldots, \sigma)$ is a mixed Nash equilibrium. 
\end{definition}

We define strategy profiles as \emph{extreme} when agents play positions exclusively in $\X^*$:

\begin{definition}
    A pure strategy profile $\bm{x}$ is \emph{extreme} if for all $i \in [n]$, $x_i \in \X^*$.
    A mixed strategy profile $\sigmaprofile$ is extreme if for all $i \in [n]$, $\supp(\sigma_i) \subseteq \X^*$.
\end{definition}

We will later find in both pure and symmetric mixed settings that for large enough $n$, all equilibria are extreme.
It is thus useful for us to compare each equilibrium's distribution over $\X^*$ to $P$, the projection of $Q$ onto $\X^*$.
Formally, for an instance of $\game$, let
\[
P(x) = Q(\{y \in \Y: x^*(y) = x\})
\]
be the total mass of targets $y \in \Y$ for which the pseudo-target $x$ minimizes proximity.
In other words, $P$ projects the distribution $Q$ over targets onto the set of minimizers $\X^*$, so that we can relate the distribution of agent strategies over $\X^*$ back to $Q$.
We also denote $\pmin = \min_{x \in \X^*} P(x)$ as the pseudo-target with the smallest mass relative to $P$. 

\subsection{Motivating example models}
We can now outline how the model for $\game$ maps to the forecasting and Hotelling games introduced in \S~\ref{sec:intro}; for the others, we refer the reader to Appendix~\ref{appendix:setting}.
\begin{enumerate}
    \item \emph{Forecasting competitions.} 
    The $n$ strategic agents are forecasters.
    $\Y := \{0,1\}^m$, i.e. each target $y$ is a specific instantiation of the random event vector $Y$ with distribution $Q$; $\X := [0,1]^m$, so each position corresponds to a forecast over the random event vector $Y$; and $d(x, y) := \lVert x - y \rVert_2$.
    It follows that $\X^* = \Y$ with $x^*(y) = y$; and since $y = x^*(y)$, $P(y) = Q(y)$.
    That is, the set of pseudo-targets is the set of vertices (or event vectors) on the hypercube, and so the distribution over $\Y$ is thus exactly the distribution over $\X^* = \Y$.
    Since $|x^*(y)| = 1$ for all $y$ and $|\X^*| = |\Y|$, Condition~\ref{cond:ties-finite-x} is satisfied.
    The ex-ante utility of an agent is then 
    \[u_i(x_i, \bm{x}_{-i}) = \E_{y \sim Q} \frac{\ones_{d(x_i,y) \leq \max_{j \neq i} d( x_j,y)}}{\left| \arg\min_{j \in [n]} d(x_j, y) \right|}.\]

    \item \emph{Finite-location Hotelling games.}
    The $n$ strategic agents are retailers, playing in space $(\mathcal{S}, d).$
    $\X := \{1, \ldots, k\} \subset \mathcal{S}$ is a finite set of locations retailers can choose from.
    The target set $\Y$ is the (potentially non-finite) set of consumers, with corresponding distribution $\lambda$ over $\mathcal{S}$.
    The distance function is simply $d(x, y)$.
    Each consumer $y$ can be mapped to a (potentially non-strict) ranking $\pi_y$ over $\X$ induced by $d$. 
    The pseudo-target set for consumer $y$ is $x^*(y) = \{x \in \X: \pi_y(x) = 1\}$, so that $\X^* = \bigcup_{y \in \Y} \{x \in \X: \pi_y(x) = 1\}$ is finite. 
    Since the distribution $\lambda$ is assumed to be absolutely continuous, $\lambda(y: |x^*(y)| > 1) = 0$; therefore, Condition~\ref{cond:ties-finite-x} is satisfied.
    Then $P(x) = \lambda(\{ y \in \Y: \pi_y(1) = x\})$, i.e. $P(x)$ is the mass of consumers who prefer location $x$ first. 
    Let $\countv(x)$ be the number of agents in location profile $\bm{x}$.
    The utility of each retailer is
    \[u_i(x_i, \bm{x}_{-i}) = \frac{\lambda(\{ y \in \Y: \pi_y(x_i) \leq \pi_y(x_j) \; \forall j \neq i\})}{\countv(x_i)}.\]

    One could also map the more general setting of~\citet{nunez2016competing} to our game, where the sets $\X$ and $\Y$ need not be in the same space. 
    Here, each consumer is mapped explicitly to some ranking $\pi$ and the distance function is $d(x,y) = j$ s.t. $y(j) = x$. 
    Then we need to assume preferences are strict over the first position to guarantee that $x^*(y)$ is unique, since the underlying distribution of consumers is not assumed to be absolutely continuous. 

    \item \emph{Classic Hotelling games.}
    Again, the $n$ strategic agents are retailers, now playing in space $[0,1] =: \X$.
    $Q$ represents a discrete consumer distribution over $[0,1]$, and the target set $\supp(Q) =: \Y$ is the set of consumers.
    The distance function is $d(x, y) := |x - y|$, so that $x^*(y) = y$. 
    $|x^*(y)| = 1$ for all $y \in \Y$ and $\X^*$ is finite because $Q$ is a discrete distribution; thus, Condition~\ref{cond:ties-finite-x} is satisfied. 
    For each $x \in X^*$, $P(x) = Q(x)$, and 
    \[u_i(x_i, \bm{x}_{-i}) = \frac{Q(\{ y \in \Y: |x_i \leq x_j| \; \forall j \neq i\})}{\countv(x_i)}.\]
    
\end{enumerate}

\section{Pure Nash Equilibria} \label{sec:pure}
In this section, we study pure Nash equilibria, where agent strategies over positions are deterministic. 
Throughout, $\countv(x)$ denotes the number of agents on position $x \in \X$ for strategy profile $\bm{x} \in \X^n.$

\subsection{Initial observations}
We begin by making some observations about pure strategy profiles which will be useful later on to characterize equilibria.
First, our game is constant-sum, since
\begin{align*}
    \sum_{i \in [n]} u_i(x_i, \bm{x}_{-i}) &= \sum_{i \in [n]} \E_{y \sim Q}
    \frac{\mathbbm{1}_{x_i\in X_{\min}(\bm{x}, y)}}{| X_{\min}(\bm{x}, y)|}
    %&=  \E_{y \sim Q} \sum_{i \in [n]} \frac{\mathbbm{1}_{x_i\in X_{\min}(\bm{x}, y)}}{| X_{\min}(\bm{x}, y)|} \\
    = \E_{y \sim Q} \frac{1}{|X_{\min}(\bm{x}, y)|} \sum_{i \in [n]} \mathbbm{1}_{x_i\in X_{\min}(\bm{x}, y)}
    = 1.
\end{align*}
    
\begin{observation} \label{obs:utility-sum}
    For any strategy profile $\bm{x}$, $\sum_{i \in [n]} u_i(x_i, \bm{x}_{-i}) = 1$. 
\end{observation}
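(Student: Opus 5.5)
The plan is to swap the two sums and then evaluate the inner sum pointwise in $y$. Fix a profile $\bm{x}$ and write $u_i(x_i,\bm{x}_{-i}) = \E_{y\sim Q}\, \pi_i(x_i,\bm{x}_{-i},y)$ using \eqref{eq:gen-utility}. The sum over $i\in[n]$ is finite, so linearity of expectation lets us move it inside:
\[
\sum_{i\in[n]} u_i(x_i,\bm{x}_{-i}) = \E_{y\sim Q} \sum_{i\in[n]} \pi_i(x_i,\bm{x}_{-i},y).
\]

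The key step is to show that for every target $y$ the inner sum equals $1$. Interpret $X_{\min}(\bm{x},y)$ as the set of players (indices) attaining $\min_{j\in[n]} d(x_j,y)$, consistent with the displayed forecasting utility, whose denominator is $|\arg\min_{j\in[n]} d(x_j,y)|$.

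\textbf{Step 1: the set is nonempty.} This set is nonempty because the minimum over the finite family $\{d(x_j,y)\}_{j\in[n]}\subseteq[0,\infty]$ is always attained. This holds even when some values equal $\infty$, since then all players tie.

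\textbf{Step 2: the pointwise identity.} By \eqref{eq:pi}, each player in the set receives $1/|X_{\min}(\bm{x},y)|$ and every other player receives $0$. Summing therefore gives
\[
\sum_{i\in[n]} \pi_i(x_i,\bm{x}_{-i},y) = |X_{\min}(\bm{x},y)|\cdot \frac{1}{|X_{\min}(\bm{x},y)|} = 1.
\]

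\textbf{Step 3: conclude.} Taking the expectation of the constant $1$ under the probability measure $Q$ gives exactly $1$, which proves Observation~\ref{obs:utility-sum}.

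The only delicate point is bookkeeping when several players share the same position. Read literally, ``$\arg\min_{x_i\in\bm{x}}$'' is a set of positions, and its cardinality would undercount co-located players. The count must instead be over players, so that $k$ players at a common minimizing position each get $1/k$ of $y$, as in the Hotelling utilities with $\countv(x_i)$ in the denominator. Once $X_{\min}$ is read as a multiset (equivalently, a set of indices), the identity above is exact.

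A second, minor point is measurability of $y\mapsto \pi_i(x_i,\bm{x}_{-i},y)$. This is needed for \eqref{eq:gen-utility} to make sense at all, and I take it as implicit in the model. Condition~\ref{cond:ties-finite-x} is not needed for this argument.
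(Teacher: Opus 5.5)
Your proof is correct and is essentially the paper's argument. The paper likewise moves the finite sum inside the expectation and uses the pointwise identity $\frac{1}{|X_{\min}(\bm{x},y)|}\sum_{i\in[n]} \mathbbm{1}_{x_i\in X_{\min}(\bm{x},y)} = 1$. Your remark that $X_{\min}$ must be counted over players (with multiplicity) rather than over distinct positions is exactly what that last equality tacitly requires, and it matches the $\countv(x_i)$ denominators the paper uses elsewhere.
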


As a result, there exists an agent with utility at most $\frac{1}{n}$: otherwise, $u_i(x_i, \bm{x}_{-i}) > \frac{1}{n}$ for all agents $i$, and $\sum_{i \in [n]} u_i(x_i, \bm{x}_{-i}) > \sum_{i \in [n]} \frac{1}{n} = 1$.
Symmetrically, there exists an agent with utility at least $\frac{1}{n}$. 

\begin{observation} \label{obs:agent-low-util}
    Under any strategy profile $\bm{x}$, there exists an agent $i$ such that $u_i(x_i, \bm{x}_{-i}) \leq \frac{1}{n},$ and an agent $j$ such that $u_j(x_j, \bm{x}_{-j}) \geq \frac{1}{n}.$
\end{observation}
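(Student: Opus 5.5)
The plan is a pigeonhole/averaging argument built on Observation~\ref{obs:utility-sum}, which says that for every profile $\bm{x}$ the utilities sum to exactly one:
\[
\sum_{i \in [n]} u_i(x_i, \bm{x}_{-i}) = 1 .
\]
Equivalently, the average utility $\frac{1}{n}\sum_{i} u_i(x_i,\bm{x}_{-i})$ equals $\frac{1}{n}$. The whole proof is then the elementary fact that a finite collection of reals cannot lie entirely strictly above, nor entirely strictly below, its own mean.

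\textbf{Step 1: an agent with utility at most $\frac1n$.} Suppose for contradiction that $u_i(x_i,\bm{x}_{-i}) > \frac{1}{n}$ for every $i \in [n]$. Summing these $n$ strict inequalities gives
\[
\sum_{i\in[n]} u_i(x_i,\bm{x}_{-i}) > n\cdot \tfrac{1}{n} = 1,
\]
which contradicts Observation~\ref{obs:utility-sum}. So some agent $i$ satisfies $u_i(x_i,\bm{x}_{-i}) \le \frac1n$.

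\textbf{Step 2: an agent with utility at least $\frac1n$.} This is symmetric. If every agent had $u_j(x_j,\bm{x}_{-j}) < \frac{1}{n}$, the sum would be strictly less than $1$, again a contradiction. Alternatively, one can take $i$ to be an agent of minimal utility and $j$ an agent of maximal utility; then $u_i \le \frac1n \le u_j$ follows directly because the mean lies between the minimum and the maximum.

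\textbf{Main obstacle.} There is no real obstacle here. The only step needing care is Observation~\ref{obs:utility-sum} itself. Exchanging the sum with the expectation is harmless because the sum is finite and each $\pi_i$ is bounded in $[0,1]$. We also need $|X_{\min}(\bm{x},y)| \ge 1$ for every $y$, so that the inner sum equals $|X_{\min}(\bm{x},y)|$ and the ratio is exactly $1$. The minimum defining $X_{\min}(\bm{x},y)$ is over the finitely many positions $x_1,\dots,x_n$, so it is always attained, even when $d$ takes the value $\infty$.
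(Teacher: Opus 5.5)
Your proposal is correct and is exactly the paper's argument: it derives the observation from the constant-sum identity of Observation~\ref{obs:utility-sum}, since if all utilities were strictly above (respectively below) $\frac{1}{n}$ their sum would exceed (respectively fall short of) $1$. Your extra remarks justifying Observation~\ref{obs:utility-sum} (a finite sum, and $|X_{\min}(\bm{x},y)|\ge 1$) are fine, provided $X_{\min}(\bm{x},y)$ is read as counting agents with multiplicity, which is the convention the paper implicitly uses.
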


We can now relate the utility an agent gains by choosing a pseudo-target $x$ (whether as a strategy or deviation) to $P(x)$ and $\countv(x)$.
Intuitively, an agent playing some pseudo-target $x \in \X^*$ is always in the minimum set $X_{\min}$ for any of the targets $y$ associated with $x$, so that their utility share is at least $\frac{P(x)}{\countv(x)}.$
Using the same logic, if an agent then deviates to some $x' \in \X^*$, their new utility will be at least $\frac{P(x')}{\countv(x')+1}$.

\begin{definition}
    We say a strategy profile $\bm{x}$ \emph{covers $\X^*$} if $\countv(x) \geq 1$ for all $x \in \X^*$.
\end{definition}

\begin{lemma} \label{obs:pure-utility-bound}
    Consider a strategy profile $\bm{x}$ where for some agent $i$, $x_i \in \X^*.$
    Then
    $ u_i(x_i, \bm{x}_{-i}) \geq \frac{P(x_i)}{\countv(x_i)}$, and equality holds if $\bm{x}$ covers $\X^*$.
    For any deviation $x' \neq x_i$, $x' \in \X^*$, $ u_i(x', \bm{x}_{-i}) \geq \frac{P(x')}{\countv(x')+1}$, and equality holds if $\bm{x}$ covers $\X^*$ and $\countv(x_i) \geq 2$.
\end{lemma}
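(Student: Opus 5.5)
The plan is to work pointwise in the target $y$ and then integrate against $Q$. By Condition~\ref{cond:ties-finite-x}, up to a $Q$-null set every $y$ has a unique pseudo-target, which we write $x^*(y)$ as an element. Partition $\Y$ (up to null sets) into the cells $A_x = \{y : x^*(y) = x\}$ for $x \in \X^*$, so that $Q(A_x) = P(x)$. We then have
\[
u_i(x_i,\bm{x}_{-i}) \;=\; \sum_{x\in\X^*} \E_{y\sim Q}\bigl[\pi_i(x_i,\bm{x}_{-i},y)\,\mathbbm{1}_{y\in A_x}\bigr].
\]
The lower bounds will come from the cell $A_{x_i}$ alone. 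The equality claims will come from showing that every other cell contributes zero and that the share on $A_{x_i}$ is exactly $1/\countv(x_i)$.

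For the first claim, take $y \in A_{x_i}$. Since $x_i = x^*(y)$ minimizes $d(\cdot,y)$ over all of $\X$, it minimizes over the players' positions, so $x_i \in X_{\min}(\bm{x},y)$. Every $x \in X_{\min}(\bm{x},y)$ then satisfies $d(x,y) = \min_{\X} d(\cdot,y)$, so $x \in x^*(y) = \{x_i\}$. Hence $X_{\min}(\bm{x},y)$ consists exactly of the players located at $x_i$, and $\pi_i = 1/\countv(x_i)$ there; here I read $|X_{\min}|$ as counting players, consistent with the utility formulas given earlier. Integrating gives $u_i \ge P(x_i)/\countv(x_i)$.

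For equality, suppose $\bm{x}$ covers $\X^*$ and take $y \in A_x$ with $x \ne x_i$. Some player sits at $x = x^*(y)$, so the minimum over players equals the global minimum, which is attained uniquely at $x$. Therefore $x_i \notin X_{\min}(\bm{x},y)$ and $\pi_i = 0$ on $A_x$. Summing over cells gives equality.

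The deviation claim uses the same argument applied to the profile $(x', \bm{x}_{-i})$. In that profile the count at $x'$ is $\countv(x') + 1$, because $x' \ne x_i$ means player $i$ was not previously counted at $x'$. This gives the bound $P(x')/(\countv(x')+1)$. For equality, we need the deviated profile to still cover $\X^*$. The only position that loses a player is $x_i$, and $\countv(x_i) \ge 2$ guarantees it keeps at least one, so coverage persists. Applying the first part's equality case then finishes the argument.

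The only delicate point is the bookkeeping of what $|X_{\min}|$ counts (players versus distinct positions) and the measure-zero tie set. I would handle the latter by simply discarding $\{y : |x^*(y)| > 1\}$ before integrating.
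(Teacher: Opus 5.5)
Your proof is correct and takes essentially the same approach as the paper's. It restricts to targets whose pseudo-target is $x_i$, where agent $i$ ties exactly with the $\countv(x_i)$ co-located players, uses coverage to make every other cell contribute zero, and handles the deviation by applying the same argument to $(x', \bm{x}_{-i})$, with $\countv(x_i)\ge 2$ preserving coverage. Your explicit reading of $|X_{\min}|$ as a count of players is the intended one (it is needed for the equality case and matches Observation~\ref{obs:utility-sum}), and it makes your version cleaner than the paper's.
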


We also note that if a strategy profile covers $\X^*$, deviations to any position outside the set of pseudo-targets cannot be profitable.

\begin{lemma} \label{lem:covered-deviations}
    If $\bm{x}$ is a strategy profile that covers $\X^*$, then all strictly profitable deviations $x'$ satisfy $x' \in \X^*$ for any agent.
\end{lemma}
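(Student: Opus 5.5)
We need to show: if $\bm{x}$ covers $\X^*$, then any agent $i$ and any $x' \notin \X^*$ has $u_i(x', \bm{x}_{-i}) \le u_i(x_i, \bm{x}_{-i})$. The plan is to compare, target by target, what the deviating agent receives at $x'$ with what it would receive if it instead moved to the pseudo-target $x^*(y)$ for that target, and then conclude by a pointwise bound integrated against $Q$.

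\textbf{Step 1: the deviator captures only targets already taken by others.} By Condition~\ref{cond:ties-finite-x} we may discard the $Q$-null set of targets with $|x^*(y)|>1$. For every remaining target $y$, let $z = x^*(y)$ be its unique minimizer. Since $x' \notin \X^*$, we have $x' \neq z$, and uniqueness of the minimizer gives $d(x',y) > d(z,y)$.

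Next we need someone other than $i$ sitting at $z$. Because $\bm{x}$ covers $\X^*$, some agent occupies $z$. If that agent is not $i$, or if $\countv(z) \ge 2$, then some $j \ne i$ has $x_j = z$. In that case $d(x_j,y) < d(x',y)$, so $\pi_i(x', \bm{x}_{-i}, y) = 0$. The remaining case is that $i$ is the unique agent at $z$. Then after $i$ deviates, $z$ is no longer covered by $\bm{x}_{-i}$, and $i$ may capture $y$.

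\textbf{Step 2: pointwise bound.} Targets $y$ whose pseudo-target is held by another agent contribute $0$ to $u_i(x', \bm{x}_{-i})$. Targets $y$ with $x^*(y) = x_i$ and $\countv(x_i) = 1$ contribute at most $1$. On these latter targets, agent $i$ is the unique closest player in the original profile, since every other agent sits at a point other than $z$ and so is strictly farther from $y$. Hence $i$ receives exactly $1$ there originally.

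Integrating over $Q$ gives
\[
u_i(x',\bm{x}_{-i}) \le \ones[\countv(x_i)=1 \text{ and } x_i\in\X^*]\cdot P(x_i) \le u_i(x_i,\bm{x}_{-i}),
\]
where the last inequality is Lemma~\ref{obs:pure-utility-bound}. If $x_i \notin \X^*$, or if $\countv(x_i) \ge 2$, the right-hand side of the first inequality is $0$, and the bound is trivial since utilities are nonnegative. Therefore no $x' \notin \X^*$ is a strictly profitable deviation.

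\textbf{Main obstacle.} The delicate point is the case where $i$ is the sole occupant of a pseudo-target. Removing $i$ uncovers that pseudo-target, so $\bm{x}_{-i}$ no longer covers $\X^*$ and the naive argument fails. The fix is to bound the gain on exactly those targets by what $i$ already earned from them at $x_i$. The tie-handling via the null set also needs care, so that strict inequalities hold $Q$-almost surely.
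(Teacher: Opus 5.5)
Your proof is correct and takes essentially the same route as the paper's: targets whose pseudo-target is occupied by some agent other than $i$ contribute nothing to the deviation $x'$, and when $i$ is the sole occupant of $x_i$, the deviation can recover at most $P(x_i)$, which $i$ already earns at $x_i$. You are a bit more careful than the paper in explicitly handling the case $x_i \notin \X^*$ and the $Q$-null set of tied targets, both of which the paper's proof leaves implicit.
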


\begin{proof}
    Consider some deviation $x' \in \X \backslash \X^*$ from $x_i \in \X^*$, with $\Y_i = \{ y \in \Y: x^*(y) = x_i \}$ indicating the set of targets corresponding to pseudo-target $x_i$.
    Then since $\countv(x) \geq 1$ for all $x \in \X^*$ such that $x \neq x_i$,
    it follows that for all $y \in \Y \backslash \Y_i$, $\mathbbm{1}_{x'\in X_{\min}((x', \bm{x}_{-i}), y)} = 0$.

    We have two cases: $\countv(x_i) = 1,$ or $\countv(x_i) > 1$.
    If $\countv(x_i) > 1,$ then $\mathbbm{1}_{x'\in X_{\min}(\bm{x}, y)} = 0$ for all $y \in \Y_i$, and $u_i(x', \bm{x}_{-i}) = 0 \leq u_i(x_i, \bm{x}_{-i}).$
    If $\countv(x_i) = 1$, by Lemma~\ref{obs:pure-utility-bound} $u_i(x_i, \bm{x}_{-i}) = P(x_i)$. 
    Meanwhile, the most one can gain from deviation $x'$ is
    $u_i(x', \bm{x}_{-i}) \leq Q(\Y_i) = P(x_i) = u_i(x_i, \bm{x}_{-i})$.
\end{proof}

\subsection{Pure equilibrium characterization}

We begin our characterization by proving that all pure equilibria, where they exist, are extreme and cover pseudo-targets in the strict case where $n > \frac{1}{\pmin}$.
We show this bound is tight by giving an example of a non-extreme pure equilibrium for $n = \frac{1}{\pmin}$ in Proposition~\ref{prop:pure-non-extreme}, whose proof we leave to Appendix~\ref{appendix:pure-equilibria}.

\begin{theorem} \label{thm:pure-extreme}
For $\game$, if $n > \frac{1}{\pmin}$, then all pure equilibria are extreme and cover $\X^*$.
\end{theorem}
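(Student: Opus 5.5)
The plan is to take an arbitrary pure equilibrium $\bm{x}$ with $n > \frac{1}{\pmin}$ and prove two things in order: first that $\bm{x}$ covers $\X^*$, then that it is extreme. Both steps are deviation arguments. The hypothesis $n > \frac{1}{\pmin}$ forces $\pmin > 0$, so $P(x) \ge \pmin > 0$ for every $x \in \X^*$. I will use one basic fact from Condition~\ref{cond:ties-finite-x}: for $Q$-almost every $y$, $x^*(y)$ is a single point, so $d(x^*(y), y) < d(x, y)$ for every $x \neq x^*(y)$. Hence, $Q$-almost surely, any agent placed at $x^*(y)$ lies in $X_{\min}$ for $y$, and an agent elsewhere does not lie in $X_{\min}$ whenever some agent occupies $x^*(y)$.

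\emph{Covering.} Suppose some $x \in \X^*$ has $\countv(x) = 0$. By Observation~\ref{obs:agent-low-util}, some agent $i$ has $u_i(x_i, \bm{x}_{-i}) \le \frac{1}{n} < \pmin$. Let agent $i$ deviate to $x$; note $x_i \neq x$ since $x$ is unoccupied. After the move, $i$ is the only agent at $x$. By the fact above, $i$ is then the unique minimizer for almost every $y$ with $x^*(y) = x$, so $u_i(x, \bm{x}_{-i}) \ge P(x) \ge \pmin > \frac{1}{n}$. This is a strictly profitable deviation, which is a contradiction. Hence $\bm{x}$ covers $\X^*$.

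\emph{Extremeness.} Suppose some agent $i$ plays $x_i \notin \X^*$. Since $\bm{x}$ covers $\X^*$ and $x_i \notin \X^*$, every pseudo-target is occupied by an agent other than $i$. So for almost every $y$, some other agent sits at $x^*(y)$ and is strictly closer than $x_i$, giving $\pi_i = 0$; therefore $u_i(x_i, \bm{x}_{-i}) = 0$. Now let $i$ deviate to any $x \in \X^*$. For almost every $y$ with $x^*(y) = x$, agent $i$ is in $X_{\min}$ and shares $y$ with the other $\countv(x)$ agents at $x$. This is the argument behind Lemma~\ref{obs:pure-utility-bound}, applied directly since that lemma is stated for $x_i \in \X^*$. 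It gives $u_i(x, \bm{x}_{-i}) \ge \frac{P(x)}{\countv(x)+1} > 0$, another strictly profitable deviation. So every $x_i \in \X^*$, and $\bm{x}$ is extreme.

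The main point needing care is the tie handling: I must use Condition~\ref{cond:ties-finite-x} to ensure that occupying $x^*(y)$ strictly beats every other position, up to $Q$-null sets. The order of the steps also matters, because the extremeness step relies on coverage to conclude that an off-$\X^*$ agent earns zero.
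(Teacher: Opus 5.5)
Your proof is correct and follows essentially the same route as the paper. First you show coverage by letting the low-utility agent from Observation~\ref{obs:agent-low-util} move to an unoccupied pseudo-target, earning at least $P(x) > \frac{1}{n}$; then you use coverage to show that an off-$\X^*$ agent earns $0$, while moving to any pseudo-target earns at least $\frac{P(x)}{\countv(x)+1} > 0$. You derive these deviation bounds directly from Condition~\ref{cond:ties-finite-x} rather than citing Lemma~\ref{obs:pure-utility-bound}, whose hypothesis $x_i \in \X^*$ need not hold for the deviating agent, and this is slightly more careful than the paper's write-up.
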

\begin{proof}
    We first show that for all $x \in \X^*$, $\countv(x) \geq 1$.
    Assume not, so that there is some $\hat x \in \X^*$ with $\countv(\hat x) = 0$.
    Then by Observation~\ref{obs:agent-low-util}, there exists some agent $i$ with utility $u_i(x_i, \bm{x}_{-i}) \leq \frac{1}{n}$.
    By Lemma~\ref{obs:pure-utility-bound}, agent $i$ can deviate to $\hat x$ and gain utility $u_i(\hat x, \bm{x}_{-i}) \geq \frac{P(\hat x)}{\countv(\hat x) + 1} \geq P(\hat x) > \frac{1}{n}$.
    
    Next we show any equilibrium profile is extreme.
    For contradiction, assume that there is some equilibrium with at least one agent $i$ playing $x_i \in \X \backslash \X^*.$
    Then since all targets are covered, for all $y \in \Y,$ $\mathbbm{1}_{x_i\in X_{\min}(\bm{x}, y)} = 0$.
    Thus $u_i(x_i, \mathbf{x}_{-i}) = 0$.
    For some $y \in \Y$, agent $i$ can then deviate to $x' = x^*(y)$ and gain win probability $u_i(x', \bm{x}_{-i}) \geq \frac{P(x')}{\countv(x')+1} > 0$.
\end{proof}

\begin{proposition} \label{prop:pure-non-extreme}
    There is a setting of $\game$ with $n = \frac{1}{\pmin}$ where an equilibrium exists that does not cover $\X^*$, and is not extreme.
\end{proposition}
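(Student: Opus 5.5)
To prove the proposition, I will build one explicit small instance and check the equilibrium conditions directly. The instance I plan to use is a two-target classic Hotelling game. Take $\X = [0,1]$, $\Y = \{0,1\}$, $d(x,y) = |x-y|$, and $Q$ uniform on $\Y$, so $Q(0)=Q(1)=\tfrac12$. Then $x^*(y) = y$ is unique for each target, and $\X^* = \{0,1\}$ is finite. So Condition~\ref{cond:ties-finite-x} holds, with $P(0)=P(1)=\tfrac12$ and $\pmin = \tfrac12$. I take $n = 2 = \frac{1}{\pmin}$.

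The candidate equilibrium is $\bm{x} = (\tfrac12, \tfrac12)$. It does not cover $\X^*$, since $\countv(0)=\countv(1)=0$. It is not extreme, since $\tfrac12 \notin \X^*$. Both agents are at distance $\tfrac12$ from each target, so $X_{\min}(\bm{x},y) = \{\tfrac12\}$ for both $y$. Each target is therefore split evenly, and $u_1 = u_2 = \tfrac12$.

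Next I check that no deviation is strictly profitable. By symmetry it suffices to consider agent $1$ moving to some $x' \in [0,1]$ while agent $2$ stays at $\tfrac12$. There are three cases.
\begin{itemize}
\item If $x' < \tfrac12$, then $|x'-0| < \tfrac12$, so agent $1$ wins target $0$ outright. Also $|x'-1| > \tfrac12$, so agent $1$ loses target $1$. The deviation utility is exactly $Q(0) = \tfrac12$.
\item If $x' > \tfrac12$, the symmetric argument gives utility $Q(1) = \tfrac12$.
\item If $x' = \tfrac12$, nothing changes and the utility stays $\tfrac12$.
\end{itemize}
Every deviation yields utility exactly $\tfrac12$, which equals the current utility. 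Hence $\bm{x}$ is a pure equilibrium.

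The only delicate point is that the boundary case $n = \frac{1}{\pmin}$ makes the deviation in the proof of Theorem~\ref{thm:pure-extreme} merely weakly profitable. Deviating to an uncovered pseudo-target gives $P(\hat x) = \frac1n$, which equals, rather than exceeds, the deviator's current utility. The construction is designed so that this equality is attained simultaneously for every deviation, which is exactly what the three cases above confirm.
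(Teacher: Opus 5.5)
Your proof is correct and is essentially the paper's own argument. The paper uses the single-event forecasting instance $\X=[0,1]$, $\Y=\X^*=\{0,1\}$, $P(0)=P(1)=\tfrac12$, $n=2$; there $\ell_2$ distance is just $|x-y|$, so it is the same as your Hotelling instance. It takes the profile $(\tfrac12,\tfrac12)$ and checks that any deviation to one side captures exactly one target, for utility $\tfrac12$. One cosmetic point: $X_{\min}(\bm{x},y)$ should be read as counting the two agents at $\tfrac12$, not as the one-element set $\{\tfrac12\}$, since that count is what makes each target split evenly.
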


Meanwhile, we prove existence of pure equilibria in settings where $n \geq \frac{2}{\pmin}$.
% By Lemma~\ref{lemma:no-pure-nash}, then, this bound on $n$ is tight.
Moreover, we show this bound is tight by giving examples of the game with no pure equilibria for $n$ below and arbitrarily close to $\frac{2}{\pmin}$ (Lemma~\ref{lemma:no-pure-nash-2/n}).
To prove existence, we construct an equilibrium explicitly using Algorithm \ref{alg:pure-construction}, which runs in time $O(|\X^*| n).$
Intuitively, the algorithm begins by assigning agents to each pseudo-target proportional to the underlying distribution $P$, such that there are at least two agents on each.
Then we perform a greedy routine, iteratively adding an agent to the pseudo-target that maximizes utility until we reach $n$ agents. 

\begin{algorithm} 
\caption{Pure equilibrium generator}
\label{alg:pure-construction}
    $n \geq \frac{2}{\pmin}$\;
    $n_0 \gets \sum_{x \in \X^*} \Bigl\lfloor \frac{2 P(x)}{\pmin} \Bigr\rfloor$\;
    $k_{n_0}(x) \gets \left\lfloor \frac{2 P(x)}{\pmin} \right\rfloor$ for all $x \in \X^*$\;
    \For{$t=n_0 + 1,\ldots,n$} {
    	$x_t^* \gets \text{argmax}_x \frac{P(x)}{k_{t-1}(x) + 1}$ \label{alg:argmax}\;
    	$x_t \gets x_t^*$\;
        $k_t(x_t^*) \gets k_{t-1}(x_t^*) + 1$\;
        $k_t(x) \gets k_{t-1}(x)$ for all $x \in \X^*$ s.t. $x \neq x_t^*$\;
    }
\end{algorithm}

\begin{theorem}\label{thm:pure-existence}
For $\game$, if $n \geq \frac{2}{\pmin}$, then a pure Nash equilibrium exists that is extreme with at least two agents on each pseudo-target.
%If $n > \frac{1}{\pmin}$, then all pure equilibria are extreme and cover $\X^*$;
If $n > \frac{2}{\pmin}$, then in all pure equilibria there are at least two agents on each pseudo-target.
\end{theorem}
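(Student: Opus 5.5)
For existence, I will show that the profile produced by Algorithm~\ref{alg:pure-construction} is a pure Nash equilibrium.

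\textbf{Initialization is feasible.} Write $k(x)=k_n(x)$ for the final counts. Since $\lfloor 2P(x)/\pmin\rfloor \ge 2$ for every $x\in\X^*$, every pseudo-target starts with at least two agents. Moreover $n_0 \le \sum_x 2P(x)/\pmin = 2/\pmin \le n$, so the initialization uses no more agents than are available. The resulting profile is extreme and covers $\X^*$ with $k(x)\ge 2$ everywhere.

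\textbf{Reduction to a comparison of ratios.} By Lemma~\ref{lem:covered-deviations}, only deviations into $\X^*$ can be strictly profitable. Because the profile covers $\X^*$ and every $k(x)\ge 2$, Lemma~\ref{obs:pure-utility-bound} gives exact utilities: an agent on $x$ earns $P(x)/k(x)$, and deviating to $x'\neq x$ earns $P(x')/(k(x')+1)$. It therefore suffices to prove
\[
\frac{P(x)}{k(x)} \;\ge\; \frac{P(x')}{k(x')+1} \qquad \text{for all } x\neq x' \in \X^*.
\]

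\textbf{Proving the inequality.} Let $M_t=\max_{x'} P(x')/(k_t(x')+1)$. This quantity is nonincreasing in $t$, since counts only grow.

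Case 1: $x$ received an agent during the greedy phase. Let $t$ be the last such step. The greedy choice gives $P(x)/k_t(x)=M_{t-1}\ge M_n \ge P(x')/(k(x')+1)$.

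Case 2: $x$ never received a greedy agent. Then $k(x)=\lfloor 2P(x)/\pmin\rfloor \le 2P(x)/\pmin$, so $P(x)/k(x)\ge \pmin/2$. On the other side, $k(x')+1 \ge \lfloor 2P(x')/\pmin\rfloor+1 > 2P(x')/\pmin$, so $P(x')/(k(x')+1)<\pmin/2$. The inequality follows.

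The main care point is this second case. It is exactly where the initialization proportional to $2P/\pmin$ matters: it makes every initial ratio $P(x)/k(x)$ at least $\pmin/2$ while every "one more agent" ratio $P(x')/(k(x')+1)$ is strictly below $\pmin/2$.

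\textbf{Second claim.} Let $n>2/\pmin$. Then $n>1/\pmin$, so by Theorem~\ref{thm:pure-extreme} every equilibrium $\bm{x}$ is extreme and covers $\X^*$. Suppose for contradiction that some $x$ has $\countv(x)=1$; that agent earns $P(x)$.

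By Observation~\ref{obs:agent-low-util}, some agent $j$ earns at most $1/n<\pmin/2$. This agent cannot be the one on $x$, since that agent earns $P(x)\ge\pmin$. So $j$ sits on some other pseudo-target $x_j\neq x$. If $j$ deviates to $x$, then $x$ holds two agents and, by Lemma~\ref{obs:pure-utility-bound}, $j$ earns at least $P(x)/2\ge \pmin/2>1/n$. This is a profitable deviation, contradicting equilibrium.

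The one subtlety here is that Lemma~\ref{obs:pure-utility-bound} only guarantees the inequality $\ge$ when $\countv(x_j)$ may equal $1$. That is fine, because a lower bound on the deviation payoff is all the argument needs.
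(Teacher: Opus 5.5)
Your proof is correct and follows the paper's approach. For existence, you verify that Algorithm~\ref{alg:pure-construction}'s output is an equilibrium using the same two facts: the $\pmin/2$ threshold for the initial allocation, and the greedy argmax for later additions. Your monotone $M_t$ simply unrolls the paper's induction over $t$. For the second claim, you use the same deviation argument via Observation~\ref{obs:agent-low-util}, and your explicit check that the low-utility agent cannot be the lone occupant of $x$ (that agent earns $P(x)\ge\pmin>1/n$) fills a small step the paper leaves implicit.
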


% if n > 2/pmin, then a pure equilibrium is guaranteed to exist, and such equilibria are extreme.
% some kind of graphic for where equilibria are guaranteed to exist for the range n = 1/pmin to n > 2/pmin

\begin{proofsketch}
We give a brief sketch of correctness, and defer the full proof to Appendix \ref{appendix:pure-equilibria}.
First, note that for $n \geq \frac{2}{\pmin}$, $n \geq n_0$.
By construction, the initial assignment $\bm{x}^{(n_0)}$ is an extreme equilibrium.
Every position $x\in\X^*$ has at least two players assigned to it and deviating from their assigned position $x$ to some other $x'\in\X^*$ will strictly decrease a player's utility from $\frac{P(x_i)}{k_{n_0}(x_i)}$ to $\frac{P(x')}{\lfloor\frac{P(x')}{\pmin}\rfloor + 1}$ by Lemma~\ref{lem:covered-deviations}.
Then, note that if $\bm{x}^{(t)}$ is an extreme equilibrium $\bm{x}^{(t+1)}$ must be as well.
$\bm{x}^{(t+1)}$ is constructed by taking $\bm{x}^{(t)}$ and assigning the $(t+1)$st player to $x_t = \arg\max_x\frac{P(x)}{k_{t-1}(x)+1}$, the position that maximizes their utility.
They therefore have no incentive to deviate.
By symmetry, the other players assigned $x_t$ are also optimal, and any player on $x \neq x_t$ would not deviate since they were already playing optimally.
Therefore, by induction, every $\bm{x}^{(t)}$ for $t > n_0$ is also an extreme equilibrium, with at least two agents on each pseudo-target.
\end{proofsketch}

We now present counterexamples where no pure equilibria exist.
While~\citet{nunez2016competing} give an example of the finite-location Hotelling game with no pure equilibria where $n = 2$ and $\pmin = \frac{1}{3}$ (i.e. $n < \frac{1}{\pmin}$), we give examples of the game for any $\epsilon \in (0, \frac{1}{2}]$ with $\pmin = \frac{2-2\epsilon}{n}$ where no pure equilibrium exists, thus showing that our existence result in Theorem~\ref{thm:pure-existence} is tight.

\begin{lemma} \label{lemma:no-pure-nash-2/n}
    For any $\epsilon \in (0, \frac{1}{2}]$, there is an example of the game $\game$ with $\pmin = \frac{2-2\epsilon}{n}$ where no pure Nash equilibrium exists.
\end{lemma}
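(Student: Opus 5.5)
The plan is to build an explicit instance with a small number of pseudo-targets, one of which, $a$, has the minimal mass $P(a)=\pmin=\frac{2-2\epsilon}{n}$. We then use Theorem~\ref{thm:pure-extreme} to cut down the candidate equilibria to a finitely parametrized family and rule each candidate out.

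\textbf{Reduction to extreme covering profiles.} Since $\epsilon\le\frac12$, we have $n\pmin=2-2\epsilon\ge 1$. When the inequality is strict, Theorem~\ref{thm:pure-extreme} says any pure equilibrium must be extreme and cover $\X^*$. The boundary case $\epsilon=\frac12$ must be checked separately. By Lemma~\ref{lem:covered-deviations}, only deviations inside $\X^*$ then matter. By Lemma~\ref{obs:pure-utility-bound}, every utility in such a profile is the explicit quantity $P(x)/\countv(x)$. A deviation from $x_i$ to $x'$ yields exactly $P(x')/(\countv(x')+1)$ when $\countv(x_i)\ge 2$. When $\countv(x_i)=1$ it yields at least that amount.

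So a pure equilibrium is exactly a vector of counts $\countv(x)\ge 1$ summing to $n$ that satisfies an apportionment-type stability condition. For every $x$ with $\countv(x)\ge2$ and every $x'\neq x$ we need
\[
\frac{P(x)}{\countv(x)} \;\ge\; \frac{P(x')}{\countv(x')+1},
\]
and for singletons $x$ the analogous inequality must hold with the singleton's full mass $P(x)$ on the left.

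\textbf{The key dichotomy.} Since $n\pmin<2$, a covering profile has either $\countv(a)=1$ or $\countv(a)\ge 2$. If $\countv(a)\ge2$, each agent on $a$ earns at most $\pmin/2=\frac{1-\epsilon}{n}<\frac1n$. If $\countv(a)=1$, the lone agent earns $\pmin$, and we need some agent elsewhere to prefer moving to $a$ for utility $\pmin/2$.

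With only one other pseudo-target $b$, these two cases reduce to the complementary inequalities $\pmin(n+1)>2$ and $\pmin(n+1)<2$. Hence one case is always stable, and two pseudo-targets cannot work.

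I will therefore take two large pseudo-targets $b,c$ of equal mass $\frac{1-\pmin}{2}$, or possibly several such targets. The point is to exploit integrality of the counts $\countv(b),\countv(c)$. The remaining $n-\countv(a)$ agents must split between $b$ and $c$, and the goal is that every split has an incentive to move.

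Concretely, I would choose the parity of $n$ (and, if needed, unequal masses on $b,c$) so that two things happen. When $\countv(a)=1$, the agents on the more crowded of $b,c$ earn strictly less than $\pmin/2$ and move to $a$. When $\countv(a)=2$, an agent on $a$ strictly prefers $P(x')/(\countv(x')+1)$ for the less crowded $x'\in\{b,c\}$. Counts $\countv(a)\ge3$ are ruled out in the same way as $\countv(a)=2$, since the utility on $a$ only gets smaller.

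The ambient space $\X$ can be realized concretely, for example as finitely many locations in a finite-location Hotelling game or as points on a line, so that Condition~\ref{cond:ties-finite-x} holds.

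\textbf{Main obstacle.} The hard step is choosing the masses and $n$, as functions of $\epsilon$, so that all the strict inequalities hold simultaneously across the whole range $\epsilon\in(0,\frac12]$. The $b$/$c$ imbalance created by integrality must be large enough to break the near-tie identified above. I would first try $n$ odd with symmetric $b,c$ and verify the inequalities, which amounts to comparing quantities like $\frac{1-\pmin}{n\pm1}$ with $\frac{\pmin}{2}$. If that fails for small $\epsilon$, I would make $n$ depend on $\epsilon$, or add more large pseudo-targets to increase the granularity.
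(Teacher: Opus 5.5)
Your reduction to covering count vectors is fine, but the mechanism in your ``key dichotomy'' cannot produce non-existence, however you tune the masses, the parity of $n$, or the number of large pseudo-targets. The payoffs you work with, $P(x)/\countv(x)$ and $P(x')/(\countv(x')+1)$, define a singleton congestion game on $\X^*$, which has the Rosenthal potential $\Phi(k)=\sum_{x\in\X^*}P(x)\sum_{j=1}^{k(x)}\frac{1}{j}$. Let $k^*$ maximize $\Phi$ over count vectors summing to $n$.

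First, $k^*$ covers $\X^*$. If $k^*(x)=0$ for some $x\in\X^*$, maximality gives $P(x)\le P(y)/k^*(y)$ for every occupied $y$. Hence $(n+1)P(x)\le 1$, contradicting $\pmin\ge\frac1n$. In your instance the large pseudo-targets also carry at least two agents at $k^*$.

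By Lemmas~\ref{obs:pure-utility-bound} and~\ref{lem:covered-deviations} and maximality of $\Phi$, no agent sharing a pseudo-target has a profitable deviation at $k^*$. A lone agent on $a$ cannot profit from the term $P(x')/(\countv(x')+1)\le\pmin$ either. So an equilibrium exists unless the strict part of your ``at least that amount'' does the work. After leaving $a$, the deviator must still capture part of $P(a)$, because their new position is among the closest positions to $a$'s targets. Your plan never uses this.

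Concretely, take large pseudo-targets $b_1,\ldots,b_m$ and set $\beta_j=2P(b_j)/\pmin$. Your requirement for $\countv(a)=1$ is equivalent to $\sum_j\lfloor\beta_j\rfloor\le n-2$. Your requirement for $\countv(a)=2$ is equivalent to $\sum_j\lceil\beta_j\rceil\ge n-1+m$. These are incompatible, since $\lceil\beta_j\rceil-\lfloor\beta_j\rfloor\le 1$. The complementary inequalities you found with one large pseudo-target are the case $m=1$ of this obstruction, not an artifact of too little granularity.

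The missing idea is to put the leakage into $d$. Choose the proximity so that a second pseudo-target $x_2$, with mass slightly above $\pmin$, is the next-closest position to the targets of the minimal pseudo-target $x_1$. The paper takes a path $x_1 - x_2 - x_3$ with $P(x_1)=\frac{2-2\epsilon}{n}$, $P(x_2)=\frac{2-\epsilon}{n}$, $P(x_3)=\frac{n-4+3\epsilon}{n}$ and $n>\max\{6,\frac{4}{\epsilon}-4\}$. It shows every equilibrium covers $\X^*$ with $\countv(x_1),\countv(x_2)\in\{1,2\}$, and then rules out the four remaining profiles.

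Three of those profiles fall to ordinary moves. The profile $(1,1,n-2)$, however, is stable against every deviation except the leakage move of the lone agent from $x_1$ to $x_2$. That agent then earns $\frac{P(x_1)+P(x_2)}{2}=\frac{4-3\epsilon}{2n}>\frac{2-2\epsilon}{n}$. So the choice of $d$ is not a detail to fill in at the end; it is where the counterexample lives.
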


\begin{proofsketch}
    Consider a three-node graph with distribution $P$ on $\Y=\X=\X^*$, depicted for four separate strategy profiles, in Figure~\ref{fig:no-pure-equilibrium}.
    We are able to show that for any $\epsilon$ in the range, every equilibrium strategy profile $\bm{x}$ satisfies $1 \leq \countv(x_1),\countv(x_2) \leq 2$.
    Then we can demonstrate that for all such $\bm{x}$, some player has a profitable deviation.
\end{proofsketch}

\begin{figure}
\centering

% -------- LEFT COLUMN --------
\begin{minipage}[t]{0.48\textwidth}
\centering

\begin{tikzpicture}[
    scale=1.1,
    transform shape,
    node distance=2.2cm,
    every node/.style={
        draw,
        rectangle,
        minimum size=0.5cm,
        label distance=0pt
    }
]
    \node (x1) {$x_1$};

    \node (x2) [right of=x1] {$x_2$};

    \node (x3) [right of=x2] {$x_3$};

    \draw (x1) -- (x2) -- (x3);

    \fill ($(x1.north)+(0,6pt)$) circle (2pt);
    \fill ($(x2.north)+(0,6pt)$) circle (2pt);
    \fill ($(x3.north)+(0,6pt)$) circle (4pt);

    \draw[->, thick, bend left=30,
        shorten >=4pt, shorten <=4pt
    ] (x1.east) to (x2.west);

    % \draw[->, thick, bend right=30,
    %     shorten >=5pt, shorten <=5pt
    % ] (x3.north west) to node[midway, above, draw=none, font=\scriptsize] {(3)} (x1.north east);

    % \draw[->, thick, bend left=20,
    %     shorten >=4pt, shorten <=4pt
    % ] (x2.east) to node[midway, above, draw=none, font=\scriptsize] {(1)} (x3.west);
\end{tikzpicture}

\vspace{1cm}

\begin{tikzpicture}[
    scale=1.1,
    transform shape,
    node distance=2.2cm,
    every node/.style={
        draw,
        rectangle,
        minimum size=0.5cm,
        label distance=0pt
    }
]
    % second copy (identical)
    \node (x1) {$x_1$};
    \node (x2) [right of=x1] {$x_2$};
    \node (x3) [right of=x2] {$x_3$};

    \draw (x1) -- (x2) -- (x3);

    \fill ($(x1.north)+(0,6pt)$) circle (2pt);
    \fill ($(x2.north)+(-4pt,6pt)$) circle (2pt);
    \fill ($(x2.north)+( 4pt,6pt)$) circle (2pt);
    \fill ($(x3.north)+(0,6pt)$) circle (4pt);

    \draw[->, thick, bend left=30,
         shorten >=4pt, shorten <=4pt
     ] (x2.east) to (x3.west);

\end{tikzpicture}

\end{minipage}
\hfill
% -------- RIGHT COLUMN --------
\begin{minipage}[t]{0.48\textwidth}
\centering

\begin{tikzpicture}[
    scale=1.1,
    transform shape,
    node distance=2.2cm,
    every node/.style={
        draw,
        rectangle,
        minimum size=0.5cm,
        label distance=0pt
    }
]
    % third copy
    \node (x1) {$x_1$};
    \node (x2) [right of=x1] {$x_2$};
    \node (x3) [right of=x2] {$x_3$};

    \draw (x1) -- (x2) -- (x3);
    \fill ($(x2.north)+(0,6pt)$) circle (2pt);
    \fill ($(x1.north)+(-4pt,6pt)$) circle (2pt);
    \fill ($(x1.north)+( 4pt,6pt)$) circle (2pt);
    \fill ($(x3.north)+(0,6pt)$) circle (4pt);

        \draw[->, thick, bend left=30,
        shorten >=4pt, shorten <=4pt
    ] (x1.east) to (x2.west);
\end{tikzpicture}

\vspace{0.57cm}

\begin{tikzpicture}[
    scale=1.1,
    transform shape,
    node distance=2.2cm,
    every node/.style={
        draw,
        rectangle,
        minimum size=0.5cm,
        label distance=0pt
    }
]
    % fourth copy
    \node (x1) {$x_1$};
    \node (x2) [right of=x1] {$x_2$};
    \node (x3) [right of=x2] {$x_3$};

    \draw (x1) -- (x2) -- (x3);
    \fill ($(x1.north)+(-4pt,6pt)$) circle (2pt);
    \fill ($(x1.north)+(4pt,6pt)$) circle (2pt);
    \fill ($(x2.north)+(-4pt,6pt)$) circle (2pt);
    \fill ($(x2.north)+( 4pt,6pt)$) circle (2pt);
    \fill ($(x3.north)+(0,6pt)$) circle (4pt);

    \draw[->, thick, bend left=40,
        shorten >=4pt, shorten <=4pt
    ] (x1.east) to (x3.west);
\end{tikzpicture}

\end{minipage}

\caption{All best-response dynamics for each strategy profile $\bm{x}$ in the proof of Lemma~\ref{lemma:no-pure-nash-2/n}.
The dots represent the mass of players on each position $x_1,x_2,$ or $x_3$. Because in each $\bm{x}$ some player has a profitable deviation, no such $\bm{x}$ can be an equilibrium. In all examples, $P(x_1)=\frac{2-2\epsilon}{n},$ $P(x_2)=\frac{2-\epsilon}{n},$ and $P(x_3)=\frac{n-4+3\epsilon}{n}$.}
\label{fig:no-pure-equilibrium}
\end{figure}
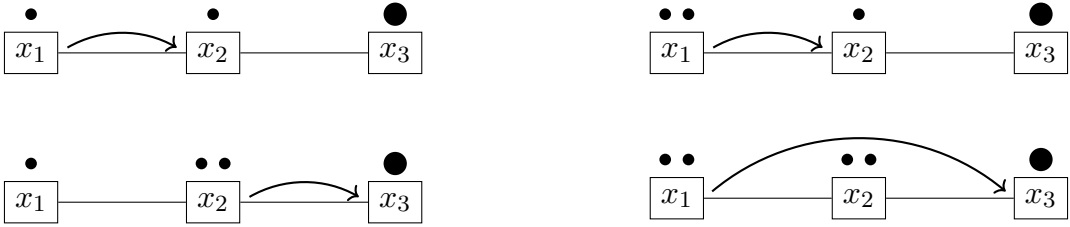

Interestingly, in the specific case of forecasting competitions with $m$ events, we are only able to give examples of non-existence for $\pmin \leq \frac{3}{2 n}$. 
Intuitively, our counterexamples construct a pseudo-target $x$ that must have a single agent $i$  playing $x$.
Agent $i$ can then deviate to another pseudo-target $x'$, while still sharing utility from the original position at $x$. 
In forecasting competitions, however, this deviation strategy breaks down for higher values of $\pmin$, as agent $i$ is required to share the utility of their original position $x$ with its $m$ closest pseudo-target neighbors in the hypercube. 
We conjecture that this bound on $\pmin$ for pure equilibrium existence in forecasting competitions is tight.

\begin{lemma} \label{lemma:no-pure-nash-forecasting}
    For any $\epsilon \in (\frac{1}{2}, 1)$, there is an example of a forecasting competition with $\pmin = \frac{2-\epsilon}{n}$ where no pure Nash equilibrium exists.
\end{lemma}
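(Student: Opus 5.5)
The plan is to build an explicit two-event competition ($m=2$). Here $\Y=\{0,1\}^2$ and $\X=[0,1]^2$ with $\ell_2$ proximity, so $\X^*=\Y$ consists of the four square vertices $a,b,c,d$, where $b,d$ are the neighbors of $a$ and $c$ is opposite $a$. I put $P(a)=\pmin=\frac{2-\epsilon}{n}$ on the small vertex and choose $P(b)=P(d)$ and $P(c)$ to be tuned. The case $m=1$ cannot work. There, a profile with one agent on the small outcome and all others on the other outcome is an equilibrium: the lone agent already holds all of $\pmin$, and an agent joining them gets only $\pmin/2<(1-\pmin)/(n-1)$. So the example must exploit what happens when a vertex becomes vacant. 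In that case its mass goes to the nearest occupied vertices, which are its hypercube neighbors at distance $1$, and the deviator can collect part of it while sitting on a neighbor.

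First, since $\pmin>\frac{1}{n}$ (as $\epsilon<1$), we have $n>\frac{1}{\pmin}$. Theorem~\ref{thm:pure-extreme} then shows that any pure equilibrium is extreme and covers $\X^*$. Utilities are therefore exactly $P(x)/\countv(x)$, by Lemma~\ref{obs:pure-utility-bound}, and I only need to analyze count vectors $(\countv(a),\countv(b),\countv(c),\countv(d))$ with all entries at least $1$.

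Second, I rule out $\countv(a)\ge 2$. In that case each agent at $a$ earns at most $\pmin/2=\frac{2-\epsilon}{2n}<\frac1n$. By Observation~\ref{obs:agent-low-util} and the constant-sum structure, some vertex $x\in\{b,c,d\}$ has $P(x)/\countv(x)\ge$ roughly $\frac1n$. I will choose $P(b),P(c),P(d)$ so that $P(x)/(\countv(x)+1)>\pmin/2$ for some such $x$; by Lemma~\ref{obs:pure-utility-bound} this gives an agent at $a$ a profitable move.

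Third, with $\countv(a)=1$, the lone agent at $a$ earns $\pmin$. If this agent moves to $b$, vertex $a$ becomes vacant, and its mass is split among the agents now at $b$ and $d$. The deviator's utility becomes
\[
\frac{P(b)}{\countv(b)+1}+\frac{\pmin}{\countv(b)+1+\countv(d)}.
\]
I want this to exceed $\pmin$ whenever $\countv(b),\countv(d)$ are small. This is where $\epsilon>\frac12$ should enter. The non-deviation conditions among $b,c,d$ are integer rounding conditions analogous to Algorithm~\ref{alg:pure-construction}. I will pick $P(b)=P(d)$ near $\frac{2-\epsilon}{n}$ or slightly larger, and make $P(c)$ absorb the rest. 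The goal is that equilibrium among $b,c,d$ forces $\countv(b),\countv(d)\in\{1,2\}$. Then I check the finitely many configurations, as in the three-node proof of Lemma~\ref{lemma:no-pure-nash-2/n}. In each one, either some agent at $b$ or $d$ wants to move to $a$, to $c$, or across, or the lone $a$-agent profits from the vacate-to-$b$ deviation above.

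The main obstacle is this final case analysis. The masses must be tuned so that two requirements hold together for every $\epsilon\in(\frac12,1)$ and every admissible $n$. First, the vacated mass $\pmin$ must be split among few enough agents, namely $\countv(b)+\countv(d)+1\le$ about $3$, for the move off $a$ to pay. Second, the configurations that block that move, such as larger $\countv(b)$, must be destabilized by a move to $a$ or $c$. I would first try $P(b)=P(d)=\frac{2-\epsilon/2}{n}$ and test the four or so count patterns directly. If a pattern survives, I would adjust $P(b)$ within the window allowed by $\epsilon>\frac12$. The threshold $\frac{3}{2n}$ mentioned in the text suggests this window exactly closes at $\epsilon=\frac12$.
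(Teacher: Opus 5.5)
Your architecture matches the paper's. You use $m=2$ and put the heavy vertex $c$ opposite the light vertex $a$. Extremeness and coverage come from Theorem~\ref{thm:pure-extreme}. The lone agent at $a$ then vacates to a neighbor and collects part of both masses. But the step that produces the threshold $\epsilon>\frac12$ is left as ``try and adjust,'' and the concrete choices you float do not work.

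With your trial $P(b)=P(d)=\frac{2-\epsilon/2}{n}$, the vacating move pays $\frac{2-\epsilon/2}{2n}+\frac{2-\epsilon}{3n}$. This beats $\frac{2-\epsilon}{n}$ only when $\epsilon>\frac45$. For $\epsilon\in(\frac12,\frac45]$ and large $n$, one checks every move from the profile with counts $(1,1,n-3,1)$ on $(a,b,c,d)$ (to the other vertices, and off-vertex via Lemma~\ref{lem:covered-deviations}). None is profitable, so that profile is a pure equilibrium and the example fails.

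Moving $P(b)$ toward $\frac{2-\epsilon}{n}$ makes this worse. Aiming only for $\countv(b)\in\{1,2\}$ also does not help. Suppose $k$ agents sit at $b$ and no $c$-agent wants to join them. Then $\frac{P(b)}{k+1}\le\frac{P(c)}{\countv(c)}\approx\frac1n$. So the vacating move earns at most about $\frac1n+\frac{\pmin}{k+2}$, which exceeds $\pmin$ only if $2-\epsilon<\frac{k+2}{k+1}$. For $k=2$ this already requires $\epsilon>\frac23$. The move can only work with a singleton at the receiving neighbor, which is exactly where $\epsilon>\frac12$ comes from.

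The missing idea is to push the receiving neighbor's mass up to $\frac2n$ from below. The paper takes $P(b)=\frac{2-\epsilon_2}{n}$ with small $\epsilon_2>0$, and $P(d)=\pmin$; anything in $[\pmin,\frac2n)$ would do. Staying below $\frac2n$ gives every light vertex per-agent share below $\frac1n$ once it holds two agents. Meanwhile $c$, with at most $n-3$ agents after a move there, pays about $\frac1n$. So for $n$ large relative to $1/\epsilon_2$ (the paper needs $n>3+\frac{2(3-2\epsilon-\epsilon_2)}{\epsilon_2}$), all three light vertices are singletons and $(1,1,n-3,1)$ is the only candidate. Pushing toward $\frac2n$ makes the vacating move pay $\frac{2-\epsilon_2}{2n}+\frac{2-\epsilon}{3n}>\frac{2-\epsilon}{n}$, i.e.\ $\epsilon>\frac12+\frac34\epsilon_2$. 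This holds for every $\epsilon\in(\frac12,1)$ once $\epsilon_2<\frac43(\epsilon-\frac12)$. So the ``window'' you expected is the limit $P(b)\uparrow\frac2n$ with singletons on all light vertices, and $n$ must be chosen after $\epsilon_2$.
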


\begin{proof}
    Consider the forecasting competition with $m = 2$, so that $\Y = \{0,1\}^2$.
    WLOG, consider the resulting hypercube with counterclockwise vertices labeled $x_1 = \{1, 0\}$, $x_2 = \{0,0\}$, $x_3 = \{0, 1\}$, and $x_4 = \{1, 1\}$.
    Let $P(x_1) = P(x_2) = \frac{2 - \epsilon}{n}$, $P(x_3) = \frac{2 - \epsilon_2}{n}$, and $P(x_4) = \frac{n-6+2\epsilon+\epsilon_2}{n}$, where $\epsilon < 1$ is fixed and $0 < \epsilon_2 < \epsilon$, where we specify the value of $\epsilon_2$ later.
    
    It follows that $\pmin = \frac{2 - \epsilon_1}{n}$; and because $\epsilon_1 < 1,$ $n > \frac{1}{\pmin}$.
    Thus, by Theorem~\ref{thm:pure-extreme}, any pure equilibrium $\bm{x}$ is extreme with  $\countv(x_k) \geq 1$ for each $k$.
    Moreover, note that there is at most one agent on each of the vertices $x_1, x_2$, and $x_3$: assume not, so there are at least two agents on some $x_k$ for $k \in \{1,2,3\}$.
    Then one of those agents $i$ can deviate to $x_4$ to gain utility at least $u_i(x_4, \bm{x}_{-i}) \geq \frac{n - 6 + 2\epsilon_1 + \epsilon_2}{n (n-3)} > \frac{2 - \epsilon_2}{2n} \geq u_i(x_i, \bm{x}_{-i})$ (where the middle inequality holds for $n > 3 + \frac{2(3 - 2\epsilon_1 - \epsilon_2)}{\epsilon_2}$), contradicting that $\bm{x}$ is an equilibrium. 

    In combination, then, we know that a pure equilibrium $\bm{x}$ must satisfy $\countv(x_1) = 1$, $\countv(x_2) = 1$, $\countv(x_3) = 1$, and $\countv(x_4) = n-3$.
    It follows that agent $i$ on $x_2$ has utility $u_i(x_i, \bm{x}_{-i}) = \frac{2 - \epsilon}{n}$ -- but if they deviate to $x_3$, they receive utility $u_i(x_3, \bm{x}_{-i}) = \frac{2 - \epsilon}{3n} + \frac{2 - \epsilon_2}{2n}$, sharing the utility on $x_3$ with $\countv(x_3)$ other agents as well as the utility on $x_2$ with the agents on all of $x_2$'s neighbors, $x_1$ and $x_3$.
    For this deviation to $x_3$ to be profitable, then, we must have
    \begin{align*}
        u_i(x_3, \bm{x}_{-i}) = \frac{2 - \epsilon}{3n} + \frac{2 - \epsilon_2}{2n} &> \frac{2 - \epsilon}{n} = u_i(x_2, \bm{x}_{-i}) \\
        \frac{2}{3} \epsilon - \frac{\epsilon_2}{2} &> \frac{1}{3},
    \end{align*} 
    or (since $\epsilon_2$ can be taken arbitrarily close to $0$), $\epsilon > \frac{1}{2}.$
\end{proof}

\subsection{Equilibrium convergence and rates}
Consider the setting where $n > \frac{2}{\pmin}$.
By Theorems~\ref{thm:pure-extreme} and~\ref{thm:pure-existence}, there exists an equilibrium, and any equilibrium $\bm{x}$ must be extreme and cover all pseudo-targets.
Since in any equilibrium $\bm{x}$, all reported positions $x_i$ must correspond exactly to some pseudo-target $x \in \X^*$, we can define the \textbf{empirical} distribution over pseudo-targets induced by an equilibrium.

\begin{definition} Let the empirical distribution, $\emp(x)$, be the distribution defined over pseudo-targets $x \in \X^*$ induced by equilibrium profile $\bm{x}$. 
That is, $\emp(x)=\frac{\countv(x)}{n}$.
\end{definition}

In any equilibrium profile $\bm{x}$, $\emp(x)$ is a well-defined distribution: for all $x \in \X^*$, $\frac{\countv(x)}{n}\in[0,1]$. 
Furthermore, Theorem~\ref{thm:pure-existence} implies that $\sum_{x\in \X^*}\countv(x)=n$, and so $\sum_{x \in \X^*} \emp(x)=1$.

Now we have the tools to prove our main wisdom result.
We first show that in equilibrium, all pseudo-targets are covered by a number of agents as a function of a more general lower bound on $n$ (Lemma~\ref{lem:pure-lower-bound}).
This allows us to tightly bound the utility of any player in equilibrium in Lemma~\ref{lem:wp-convergence} as a function of $n$, which in turn translates to a convergence rate on $\emp$ to $P$.
We leave proofs of the Lemmas to Appendix~\ref{appendix:pure-equilibria}.

\begin{lemma}\label{lem:pure-lower-bound}
Assume $n \geq \frac{1}{c} > \frac{2}{\pmin}$.
Then in any equilibrium $\bm{x}$, for all $x \in \X^*$, the number of agents on $x$ satisfies $\countv(x)\geq\lfloor cn\rfloor$.
\end{lemma}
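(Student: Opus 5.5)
The plan is to reduce everything to the exact utility formulas of Lemma~\ref{obs:pure-utility-bound} and then apply a single averaging argument, Observation~\ref{obs:agent-low-util}. Since $\frac{1}{c} > \frac{2}{\pmin}$, we have $n > \frac{2}{\pmin}$. Theorems~\ref{thm:pure-extreme} and~\ref{thm:pure-existence} then say that any equilibrium $\bm{x}$ is extreme, covers $\X^*$, and has $\countv(x) \geq 2$ for every $x \in \X^*$. So both equality cases of Lemma~\ref{obs:pure-utility-bound} apply. An agent $i$ on $x_i$ gets exactly $\frac{P(x_i)}{\countv(x_i)}$, and a deviation to any $x' \neq x_i$ in $\X^*$ yields exactly $\frac{P(x')}{\countv(x')+1}$.

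Next, argue by contradiction. Suppose some $\hat x \in \X^*$ has $\countv(\hat x) < \lfloor cn \rfloor$. If $\lfloor cn\rfloor = 0$ there is nothing to prove, so this case cannot occur. Because counts are integers, $\countv(\hat x) + 1 \leq \lfloor cn \rfloor \leq cn$. By Observation~\ref{obs:agent-low-util}, there is an agent $i$ with $u_i(x_i,\bm{x}_{-i}) \leq \frac{1}{n}$.

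First rule out $x_i = \hat x$. In that case the agent's utility would be
\[
\frac{P(\hat x)}{\countv(\hat x)} \geq \frac{\pmin}{cn} > \frac{2}{n} > \frac{1}{n},
\]
using $c < \frac{\pmin}{2}$. This contradicts the choice of $i$, so $x_i \neq \hat x$.

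Then agent $i$ can deviate to $\hat x$ and obtain
\[
\frac{P(\hat x)}{\countv(\hat x)+1} \geq \frac{\pmin}{cn} > \frac{2}{n} > \frac{1}{n} \geq u_i(x_i,\bm{x}_{-i}).
\]
This is a strictly profitable deviation, contradicting that $\bm{x}$ is an equilibrium. Hence $\countv(x) \geq \lfloor cn \rfloor$ for all $x \in \X^*$.

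I expect the main point needing care to be justifying the exact deviation formula, i.e., that it is legitimate to invoke the equality cases of Lemma~\ref{obs:pure-utility-bound}. In fact only the inequality direction $u_i(\hat x, \bm{x}_{-i}) \geq \frac{P(\hat x)}{\countv(\hat x)+1}$ is needed, and it holds unconditionally. The other subtle point is the integrality step that turns $\countv(\hat x) < \lfloor cn\rfloor$ into $\countv(\hat x)+1 \leq cn$. Beyond that, the argument is a direct averaging bound.
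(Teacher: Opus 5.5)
Your proof is correct and follows the paper's argument. Both assume some pseudo-target $\hat x$ has fewer than $\lfloor cn\rfloor$ agents, use Observation~\ref{obs:agent-low-util} to pick an agent with utility at most $\frac{1}{n}$, show that agent cannot be on $\hat x$, and exhibit the strictly profitable deviation to $\hat x$; your chain via $\countv(\hat x)+1 \leq cn$ and $\pmin > 2c$ is slightly cleaner than the paper's, and you rightly note that only the unconditional lower bound of Lemma~\ref{obs:pure-utility-bound} is needed.
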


\begin{lemma}\label{lem:wp-convergence}
Assume $n \geq \frac{1}{c} > \frac{2}{\pmin}$.
Then, in an equilibrium $\bm{x}$, the win probability of any player must satisfy $u_i(x_i, \mathbf{x}_{-i}) \in \left[\frac{1}{n}\frac{\lfloor cn\rfloor}{\lfloor cn\rfloor +1},\frac{1}{n}\frac{\lfloor cn\rfloor+1}{\lfloor cn\rfloor}\right]$.
\end{lemma}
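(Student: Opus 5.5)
By Theorems~\ref{thm:pure-extreme} and~\ref{thm:pure-existence}, any equilibrium $\bm{x}$ is extreme, covers $\X^*$, and has $\countv(x)\geq 2$ on every pseudo-target. Hence Lemma~\ref{obs:pure-utility-bound} holds with equality. Each agent on $x$ gets exactly $\frac{P(x)}{\countv(x)}$, and a deviation to any $x'\neq x_i$ in $\X^*$ yields exactly $\frac{P(x')}{\countv(x')+1}$. By Lemma~\ref{lem:covered-deviations}, deviations outside $\X^*$ are never profitable. So the equilibrium condition becomes the purely combinatorial inequality
\[
\frac{P(x)}{\countv(x)} \geq \frac{P(x')}{\countv(x')+1}\quad\text{for all } x\neq x' \in \X^*.
\]
The plan is to sandwich every agent's utility between the minimal and maximal per-capita shares, using Lemma~\ref{lem:pure-lower-bound} to control the ratio.

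Let $u_{\min}=\min_x \frac{P(x)}{\countv(x)}$ and $u_{\max}=\max_x \frac{P(x)}{\countv(x)}$, attained at $\underline{x}$ and $\overline{x}$ respectively. Since all agents on the same pseudo-target get equal utility, Observation~\ref{obs:agent-low-util} gives $u_{\min}\leq \frac1n\leq u_{\max}$.

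Apply the equilibrium inequality with $x=\underline{x}$ and $x'=\overline{x}$ (if they coincide, all utilities are equal to $\frac1n$ and we are done). This gives
\[
u_{\min} \geq \frac{P(\overline{x})}{\countv(\overline{x})+1} = u_{\max}\cdot\frac{\countv(\overline{x})}{\countv(\overline{x})+1}.
\]
By Lemma~\ref{lem:pure-lower-bound}, $\countv(\overline{x})\geq k:=\lfloor cn\rfloor$, and $t\mapsto \frac{t}{t+1}$ is increasing, so $u_{\min}\geq u_{\max}\frac{k}{k+1}$.

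Now combine this with the sandwich. For any agent $i$,
\[
u_i\leq u_{\max}\leq \frac{k+1}{k}\,u_{\min}\leq \frac{k+1}{k}\cdot\frac1n,
\]
and
\[
u_i\geq u_{\min}\geq \frac{k}{k+1}\,u_{\max}\geq\frac{k}{k+1}\cdot\frac1n.
\]
This is exactly the claimed interval.

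The only delicate step is justifying that the deviation utility is \emph{exactly} $\frac{P(x')}{\countv(x')+1}$, rather than merely at least that. This needs $\countv(x_i)\geq 2$, so that the deviator leaves no uncovered pseudo-target behind. That is supplied by Theorem~\ref{thm:pure-existence}, since $n>\frac{2}{\pmin}$. Strictly speaking, only the lower bound on the deviation utility is used, so even this subtlety is not essential. We also need $k\geq 1$; this follows from $cn\geq 1$.
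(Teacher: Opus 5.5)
Your proposal is correct and follows essentially the same route as the paper's proof. Both use extremeness and coverage to get exact utilities $P(x)/\countv(x)$, then combine the pairwise no-deviation inequality with Lemma~\ref{lem:pure-lower-bound} to bound utility ratios by $\frac{\lfloor cn\rfloor+1}{\lfloor cn\rfloor}$, and anchor at $\frac{1}{n}$ via Observation~\ref{obs:agent-low-util}. Your $u_{\min}/u_{\max}$ packaging, and your remark that only the lower bound on deviation utility is needed, are simply a tidier presentation of that argument.
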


\begin{theorem}\label{thm:kl-div}
Assume $n \geq \frac{1}{c} > \frac{2}{\pmin}$.
Consider any equilibrium $\bm{x}$ of the game $\game$, and let $\kldiv{P}{Q}$ denote the $KL$ divergence of distribution $P$ from $Q$.
Then
\[\kldiv{\emp}{P} \leq \log \left( \frac{\lfloor cn\rfloor + 1}{\lfloor cn \rfloor} \right).\]
\end{theorem}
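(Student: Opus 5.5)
Since $n \geq \frac{1}{c} > \frac{2}{\pmin}$, Theorems~\ref{thm:pure-extreme} and~\ref{thm:pure-existence} tell us that every equilibrium $\bm{x}$ is extreme and covers $\X^*$. Hence $\emp$ is a genuine distribution supported on $\X^*$ with $\emp(x) > 0$ exactly where $P(x)>0$. In fact every $x \in \X^*$ has $P(x) \geq \pmin > 0$, so the KL divergence is finite. The plan is to bound the likelihood ratio $\emp(x)/P(x)$ pointwise and then average.

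\emph{Step 1: compute each player's utility.} Because $\bm{x}$ covers $\X^*$ and is extreme, Lemma~\ref{obs:pure-utility-bound} gives, for any player $i$ with $x_i = x$,
\[
u_i(x_i,\bm{x}_{-i}) = \frac{P(x)}{\countv(x)} = \frac{P(x)}{n\,\emp(x)}.
\]

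\emph{Step 2: convert the utility bound into a ratio bound.} Lemma~\ref{lem:wp-convergence} gives the lower bound $u_i \geq \frac{1}{n}\frac{\lfloor cn\rfloor}{\lfloor cn\rfloor+1}$. Combining this with Step 1 yields, for every $x\in\X^*$,
\[
\frac{\emp(x)}{P(x)} = \frac{1}{n\,u_i} \leq \frac{\lfloor cn\rfloor+1}{\lfloor cn\rfloor}.
\]
Only this side of the two-sided bound in Lemma~\ref{lem:wp-convergence} is needed.

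\emph{Step 3: average.} Since $\log$ is monotone,
\[
\kldiv{\emp}{P} = \sum_{x\in\X^*}\emp(x)\log\frac{\emp(x)}{P(x)} \leq \sum_{x}\emp(x)\log\frac{\lfloor cn\rfloor+1}{\lfloor cn\rfloor}.
\]
Because $\sum_x \emp(x)=1$, the right-hand side equals $\log\frac{\lfloor cn\rfloor+1}{\lfloor cn\rfloor}$, which is the claimed bound.

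The main obstacle is not this theorem itself but Lemma~\ref{lem:wp-convergence}, which we take as given; within this proof the argument is routine. The only points to check are the following.
\begin{itemize}
\item The equality case of Lemma~\ref{obs:pure-utility-bound} applies, which holds because $\bm{x}$ covers $\X^*$.
\item $\lfloor cn\rfloor \geq 1$, so the expression is well defined. This follows from $n\geq 1/c$.
\item $\X^*$ has no points with $\emp(x)=0$ that could cause issues. Coverage rules this out, and in any case such terms would contribute zero by the convention $0\log 0=0$.
\end{itemize}
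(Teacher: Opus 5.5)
Your proof is correct and follows the paper's argument: both derive the pointwise ratio bound $\emp(x)/P(x) \le \frac{\lfloor cn\rfloor+1}{\lfloor cn\rfloor}$ by combining $u_i = P(x)/\countv(x)$ with Lemma~\ref{lem:wp-convergence}, and then average against $\emp$. The only difference is that the paper bounds $\lvert\log(\cdot)\rvert$ using both sides of the lemma, whereas you correctly note that the upper side alone suffices termwise because $\emp(x)\ge 0$.
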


\begin{proof}
Consider the empirical distribution $\emp$ induced by equilibrium $\bm{x}$ of the game $\game$.
Then
\[
\kldiv{\emp}{P} = \sum_{x \in \X^*}\frac{\countv(x)}{n}\log \left( \frac{\countv(x)}{nP(x)} \right) \leq \sum_{x \in \X^*}\frac{\countv(x)}{n}\left| \log \left( \frac{\countv(x)}{nP(x)} \right) \right|.
\]
By Lemma~\ref{lem:wp-convergence}, for all $x \in \X^*$ we have that
\begin{equation} \label{eq:log-upper-bound}
\frac{\lfloor cn\rfloor}{\lfloor cn \rfloor + 1} \leq \frac{\countv(x)}{{n P(x)}} \leq \frac{\lfloor cn\rfloor + 1}{\lfloor cn \rfloor}.
\end{equation}
Combining Inequality~\ref{eq:log-upper-bound} with the fact that $|\log(z)| = \log(1/z)$ for $0 < z < 1$, we then have $\left| \log \left( \frac{\countv(x)}{nP(x)}\right) \right| \leq \frac{\lfloor cn\rfloor + 1}{\lfloor cn \rfloor}$ for all $x \in \X^*$.
Thus 
\begin{align*}
    \kldiv{\emp}{P} &\leq \sum_{x \in \X^*} \frac{\countv(x)}{n} \log \left( \frac{\lfloor cn\rfloor + 1}{\lfloor cn \rfloor} \right) \\
    &= \log \left( \frac{\lfloor cn\rfloor + 1}{\lfloor cn \rfloor} \right) \left( \frac{1}{n} \sum_{x \in \X^*} \countv(x) \right) \\
    &= \log \left( \frac{\lfloor cn\rfloor + 1}{\lfloor cn \rfloor} \right).
\end{align*}
\end{proof}

\section{Symmetric Mixed Equilibria}\label{sec:mixed-equilibria}
In this section, we study symmetric mixed equilibria, where $\sigma_i=\sigma$ for all players $i$.
In a slight abuse of notation, when discussing symmetric mixed strategies we will use $\sigma_{-i} \in \Delta(\X)^{n-1}$ to denote the strategy profile of all agents but $i$ playing $\sigma,$ i.e. $\sigma_{-i} = (\sigma, \sigma, \ldots, \sigma).$

We will derive results analagous to \S~\ref{sec:pure} for symmetric mixed equilibria.
Specifically, we show for some lower bound on the number of agents $n$, we have (1) all equilibria are extreme and (2) an upper bound on the convergence rate of $\sigma$ to $P$.

\subsection{Existence and initial observations}
While the payoff function in our game is discontinuous due to ties, its nice symmetric structure allows us to explicitly state equilibrium payoffs.
Consider a symmetric game that is constant-sum.
By symmetry of the game, the utility of a symmetric mixed strategy profile must be equal for all agents; and since these utilities sum to $c$, the utility for each agent is $\frac{c}{n}$.
Since our game is symmetric and Observation~\ref{obs:utility-sum} shows it is constant-sum for $c = 1,$ we have that:

\begin{observation}\label{obs:symmetric-equil-utility}
    For any symmetric strategy profile $\sigma$ in the game $\game$, we have that $u_i(\sigma,\bm{\sigma}_{-i})=\frac{1}{n}$ for all $i\in[n]$.
\end{observation}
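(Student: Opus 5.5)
The plan is to get the statement directly from three facts: the game is symmetric, it is constant-sum (Observation~\ref{obs:utility-sum}), and expectation is linear. Fix a symmetric profile $\sigmaprofile = (\sigma,\ldots,\sigma)$. Let $\bm{x} = (x_1,\ldots,x_n)$ have $x_i$ drawn i.i.d.\ from $\sigma$. By the definition of mixed utility in \eqref{eq:expected-utility-mixed}, $u_i(\sigma,\sigmaprofile_{-i}) = \E_{\bm{x}}[u_i(x_i,\bm{x}_{-i})]$. Summing over $i$ and swapping sum and expectation gives
\[
\sum_{i\in[n]} u_i(\sigma,\sigmaprofile_{-i}) = \E_{\bm{x}}\Bigl[\sum_{i\in[n]} u_i(x_i,\bm{x}_{-i})\Bigr] = \E_{\bm{x}}[1] = 1,
\]
where the middle equality applies Observation~\ref{obs:utility-sum} pointwise for every realization $\bm{x}$. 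The pointwise utility is bounded in $[0,1]$, so the expectation is well defined and the interchange is harmless.

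Next I would show all $n$ expected utilities are equal. Take $i\neq j$ and let $\tau$ be the transposition of $i$ and $j$. The symmetry identity stated after the definition of $\game$ gives $u_j(\bm{x}) = u_i(\bm{x}\circ\tau)$, where $\bm{x}\circ\tau$ swaps the $i$th and $j$th coordinates. Since the coordinates of $\bm{x}$ are i.i.d.\ from $\sigma$, the permuted vector $\bm{x}\circ\tau$ has the same product distribution $\sigma^{\otimes n}$ as $\bm{x}$. Therefore
\[
u_j(\sigma,\sigmaprofile_{-j}) = \E[u_i(\bm{x}\circ\tau)] = \E[u_i(\bm{x})] = u_i(\sigma,\sigmaprofile_{-i}).
\]

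With $n$ equal terms summing to $1$, each term equals $\frac{1}{n}$. This argument has no real obstacle. The only point needing care is that $\pi_i$ in \eqref{eq:pi} is a measurable bounded function, which is needed for the expectations and Fubini. I would assume this tacitly, as the paper does when it defines mixed utilities.
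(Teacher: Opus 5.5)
Your proposal is correct and follows the paper's own argument: the game is constant-sum with total utility $1$ (Observation~\ref{obs:utility-sum}), symmetry forces all players' utilities under a symmetric profile to be equal, and so each equals $\frac{1}{n}$. You spell out the two steps the paper leaves implicit, namely lifting the pointwise constant-sum identity to mixed profiles by linearity of expectation, and getting equal utilities from the invariance of the product measure $\sigma^{\otimes n}$ under swapping two coordinates.
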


Observation \ref{obs:symmetric-equil-utility} allows us to conclude existence of symmetric mixed equilibria by using a result from~\citet{reny1999existence} on discontinuous games. 
We leave proofs to Appendix~\ref{appendix:mixed-equilibria}.

\begin{theorem} \label{thm:mixed-existence}
For any $n > 1$, if $\X$ is finite or a compact Hausdorff space then the game $\game$ has a symmetric mixed equilibrium. 
\end{theorem}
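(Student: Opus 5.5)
We split into the finite and the compact Hausdorff case, and in the compact case invoke the existence theorem of \citet{reny1999existence} for symmetric, compact, quasiconcave, \emph{diagonally better-reply secure} games.

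\textbf{Finite $\X$.} The game is a finite symmetric game. Nash's theorem for symmetric finite games gives a symmetric mixed equilibrium. Concretely, apply Kakutani to the best-response correspondence $\sigma \mapsto \arg\max_{\tau\in\Delta(\X)} u_1(\tau,\sigmaprofile_{-1})$ on the simplex $\Delta(\X)$. This correspondence is nonempty, convex-valued, and upper hemicontinuous, since payoffs are multilinear in mixed strategies.

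\textbf{Compact Hausdorff $\X$.} We work with the mixed extension, using strategy space $\Delta(\X)$ with the weak-* topology. This space is compact (Riesz--Markov plus Banach--Alaoglu) and convex, and payoffs are linear, hence quasiconcave, in one's own mixed strategy. Reny's Corollary on symmetric games then asks for diagonal better-reply security of the mixed extension. We must show: whenever $\sigma$ is not a symmetric equilibrium and $c$ is a limit of diagonal payoffs $u_1(\sigma^k,\sigmaprofile^k_{-1})$ with $\sigma^k\to\sigma$, some player can secure strictly more than $c$ along the diagonal near $\sigma$.

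By Observation~\ref{obs:symmetric-equil-utility}, every diagonal payoff equals $1/n$, so $c=1/n$. Reny also notes that the payoff-sum condition makes the game reciprocally upper semicontinuous. This means we only need \emph{payoff security} of the mixed extension on the diagonal. Concretely: if $\sigma$ is not an equilibrium, some $\tau$ has $u_1(\tau,\sigmaprofile_{-1}) > 1/n$, and we need a $\tau'$ with $u_1(\tau',\sigmaprofile'_{-1}) > 1/n$ for all $\sigma'$ in a neighborhood of $\sigma$.

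\textbf{Main obstacle: securing $\tau'$.} Payoffs jump at ties, so a naive $\tau'=\tau$ can fail when opponents shift mass onto ties. We plan to use the finiteness of $\X^*$ in Condition~\ref{cond:ties-finite-x} and the fact that ties are broken evenly.

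For a pure deviation $x\in\X^*$, the utility is at least $Q$-mass where $x$ is the strict unique best position, plus shares of ties. Since $x$ is optimal for its targets, perturbing opponents can only add opponents \emph{at} $x$; they cannot beat $x$ on $\{y: x^*(y)=x\}$.

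First we show that a profitable deviation may be taken supported on $\X^*$. The argument is that replacing any $x$ by a suitable pseudo-target weakly helps, in the spirit of Lemma~\ref{lem:covered-deviations}. We then control the lower semicontinuity failure, which occurs only through atoms of $\sigma'$ at finitely many points. Whenever $\sigma$ has no atom at a target point, small perturbations put only small mass there. If $\sigma$ does have atoms, we perturb $\tau$ slightly and rely on the strict margin $u_1(\tau,\sigmaprofile_{-1})-1/n>0$ to absorb the loss.

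If payoff security cannot be verified cleanly this way, the fallback is to approximate by finite subgames on finite $\X_m\supseteq\X^*$. Each such subgame has a symmetric equilibrium by the finite case. We then take a weak-* limit point and verify it is an equilibrium using the constant $1/n$ diagonal value together with the above security argument.
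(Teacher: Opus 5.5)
Your finite case is fine. For compact $\X$ you follow the same route as the paper: Reny's symmetric existence theorem applied to the mixed extension, with the diagonal payoff identically $1/n$ by Observation~\ref{obs:symmetric-equil-utility}. You also correctly isolate what must actually be shown. If $\sigma$ is not a symmetric equilibrium, some \emph{fixed} $\tau'$ must satisfy $u_1(\tau',\sigma',\dots,\sigma')>1/n$ for all $\sigma'$ in a weak-$*$ neighborhood of $\sigma$. That step is never proved. Moreover, the tools you propose for it are false in the stated generality, because nothing in the model constrains $d(\cdot,y)$ topologically.

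Here is a counterexample to the statement itself. Take $\X=\{a,b,c\}\cup[0,1]$ with $a,b,c$ isolated, $\Y=\{y_1,y_2,y_3\}$, $Q$ uniform, and $n=2$. Set $d(a,y_1)=0$, $d(b,y_1)=d(c,y_1)=5$, $d(z,y_1)=1+z$ for $z\in(0,1]$, and $d(0,y_1)=10$; define $d$ symmetrically for $y_2,y_3$ with minimizers $b,c$. Condition~\ref{cond:ties-finite-x} holds with $\X^*=\{a,b,c\}$. Against any $\sigma$, deviating to $z\in(0,1]$ earns at least $\tfrac23\sigma(\{a,b,c\})+\sigma(\{0\})+\sigma((z,1])$. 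As $z\downarrow 0$ this tends to a value at least $\tfrac23>\tfrac12$. So no symmetric mixed equilibrium exists, and the compact case needs an extra hypothesis on $d$.

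In this example each step of your sketch breaks.
\begin{itemize}
\item At $\sigma=\delta_{1/2}$, every pseudo-target earns $\tfrac13<\tfrac12$. So a profitable deviation need not be supportable on $\X^*$; Lemma~\ref{lem:covered-deviations} relies on coverage.
\item At $\sigma=\delta_0$, which is not an equilibrium, every fixed $\tau$ has $u_1(\tau,\delta_w)\to\tfrac13\tau(\{a,b,c\})\le\tfrac13$ as $w\downarrow 0$ through non-atoms of $\tau$, while $\delta_w\to\delta_0$. So diagonal better-reply security fails. The loss comes from the discontinuity of $d(\cdot,y)$ at $0$, not from atoms at finitely many points.
\item Your fallback fails too. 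On the grids $\{a,b,c,0,\tfrac1m,\dots,1\}$ the unique symmetric equilibrium is $\delta_{1/m}$, and its limit $\delta_0$ is not an equilibrium.
\end{itemize}

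The paper's proof does not close this gap either. It lets the deviator copy the perturbed $\sigma'$, but a securing strategy must be fixed before the perturbation. Copying also only ever yields exactly $1/n$, never strictly more. Any repair must add a hypothesis such as continuity of $d(\cdot,y)$, and then supply a genuine securing argument that uses it.
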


\subsection{Equilibrium convergence and rates}\label{subs:events}

For ease of exposition in this section, we set $p_x := P(\{x\})$ and $\sigma_x := \sigma(\{x\})$.
We first present our quantitative convergence result for symmetric mixed equilibria.

\begin{theorem}\label{thm:mixed-convergence}
    For $n > \max \left\{ 43, 8 \left(\frac{4}{\pmin} \log \frac{1}{\pmin} \right) \right\},$ all symmetric mixed Nash equilibrium are extreme and satisfy $\lvert \sigma_x - p_x \rvert \leq \frac{1}{n}$ for all $x \in \X^*$. 
\end{theorem}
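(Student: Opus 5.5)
The plan is to run everything through Observation~\ref{obs:symmetric-equil-utility}: in a symmetric equilibrium $\sigma$, every pure strategy in the support earns exactly $\frac{1}{n}$, and every deviation earns at most $\frac{1}{n}$. First I derive a utility formula for a pure deviation to a pseudo-target $x\in\X^*$ against $n-1$ opponents playing $\sigma$. If $K\sim\mathrm{Bin}(n-1,\sigma_x)$ counts the opponents on $x$, the deviator wins every target $y$ with $x^*(y)=x$, shared with $K$ others, so
\[
u(x,\sigma_{-i}) \;\ge\; p_x\,\E\Bigl[\tfrac{1}{K+1}\Bigr] \;=\; p_x\,\frac{1-(1-\sigma_x)^n}{n\sigma_x}.
\]
Equality holds when opponents cover $\X^*$; otherwise the deviator can also pick up mass from uncovered pseudo-targets. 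The quantity $f(s)=\frac{1-(1-s)^n}{ns}$ is decreasing in $s$, and for $s\ge 0$ it satisfies $\frac{1-e^{-ns}}{ns}\lesssim f(s)\le \frac{1}{ns}$.

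Step 1 is a lower bound on each $\sigma_x$. If $\sigma_x$ were tiny, say $\sigma_x\le \frac{2}{n}\log\frac1{\pmin}$, then $f(\sigma_x)$ would be of order $\frac{1}{n\sigma_x}$, and $p_x f(\sigma_x)>\frac1n$ whenever $n\sigma_x$ is small compared with $p_x$ (for example $\sigma_x< p_x/2$ gives $p_x f \gtrsim \frac{2}{n}(1-e^{-n\sigma_x})/(1)$ after adjusting). I will make this precise by splitting into two cases. If $n\sigma_x\le 1$, then $f\ge 1-\frac{n\sigma_x}{2}\ge\frac12$, so the deviation earns at least $p_x/2>\frac1n$ because $n>\frac{2}{\pmin}$. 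If $n\sigma_x\ge1$, I use $f(s)\ge\frac{1-e^{-ns}}{ns}$ and show that $\sigma_x<p_x(1-\pmin)$ would give utility above $\frac1n$. This gives the crude bound $\sigma_x\ge p_x/2$.

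Step 2 shows that $\sigma$ is extreme. Given the lower bound from Step 1, the probability that some pseudo-target is left uncovered by the $n-1$ opponents is at most $|\X^*|(1-\pmin/2)^{n-1}\le \frac{1}{\pmin}e^{-\pmin(n-1)/2}$. The hypothesis $n>\frac{32}{\pmin}\log\frac1{\pmin}$ makes this at most roughly $\pmin^{15}$, which is much smaller than $\frac{1}{n^2}$. A position $z\notin\X^*$ earns nothing when all of $\X^*$ is covered, so $u(z,\sigma_{-i})\le \Pr[\text{some }x\text{ uncovered}]<\frac1n$, and such a $z$ cannot lie in the support. The same estimate shows that the equality in the deviation formula fails only with negligible probability, so for $x\in\X^*$ I get $\bigl|u(x,\sigma_{-i})-p_xf(\sigma_x)\bigr|\le\delta$ with $\delta$ exponentially small.

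Step 3 gives the rate. Because $\sigma$ is extreme, $\sum_x\sigma_x=1$. Every $x$ with $\sigma_x>0$ is in the support and so satisfies $p_xf(\sigma_x)=\frac1n\pm\delta$. I expand $p_xf(\sigma_x)=\frac{p_x}{n\sigma_x}\bigl(1-(1-\sigma_x)^n\bigr)$. Since $\sigma_x\ge \pmin/2$, the term $(1-\sigma_x)^n\le e^{-n\pmin/2}$ is negligible, and so $\sigma_x=p_x(1\pm\eta)$ with $\eta$ of order $n\delta+e^{-n\pmin/2}$. To reach $|\sigma_x-p_x|\le\frac1n$, I need $p_x\eta\le\frac1n$, that is, $n e^{-n\pmin/2}$ and $n^2\delta$ must be at most a constant. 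This is where the logarithmic factor in the hypothesis and the constant $43$, which handles small-$n$ slack in the inequality $\log n\le$ stuff, come in.

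The main obstacle is Step 1: obtaining a lower bound on $\sigma_x$ that is uniform in $x$, strong enough to feed the exponential covering estimate, without using circular reasoning that would require extremeness first. I would attack it directly through the monotonicity of $f$ and the inequality $p_x f(\sigma_x)\le\frac1n$, treating $n\sigma_x\le1$ and $n\sigma_x>1$ as separate cases. The remaining steps are bookkeeping with the chosen constants.
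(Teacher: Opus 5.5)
Your argument is correct, and it is organized differently from the paper's. The paper keeps the exact contribution of $E_1$ and $E_2$, namely $g(p_x,\sigma_x)$. It inverts $g\le\frac1n$ through the monotone map $G$ (implicit function theorem, Lemmas~\ref{lem:G-monotonicity}--\ref{lem:high-p-equilibrium-2}) to get $\sigma_x\ge G^{-1}(p_x)\ge p_x-\frac1n$, and feeds that into the coupon-collector bound of Lemma~\ref{lem:mixed-extreme}. The upper bound then comes from inverting $g+\frac1{n^2}\ge\frac1n$ through $\overline G$. Since $\overline G$ is only monotone on $(0,1-\frac1{\sqrt n}+\frac1n)$, the paper needs a split at $p_x=1-\frac1{\sqrt n}$ and the constant $43$.

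You instead use the coarser main term $p_x\E[\frac1{K+1}]=p_xf(\sigma_x)$, pushing the all-on-$x$ event into the error. You bootstrap from a factor-two lower bound to coverage, and close with the two-sided sandwich $p_xf(\sigma_x)\le\frac1n\le p_xf(\sigma_x)+\delta$. For every $x\in\X^*$ this gives
\[
-p_x(1-\sigma_x)^n\;\le\;\sigma_x-p_x\;\le\;\frac{p_x\,n\delta}{1-n\delta}.
\]

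What your route buys:
\begin{itemize}
\item It needs no inverse-function machinery and no separate large-$p_x$ case, because the sandwich is uniform in $x$.
\item The constant $43$ never enters. For $|\X^*|\ge 2$ the threshold $n>\frac{32}{\pmin}\log\frac1\pmin$ already gives $e^{-n\pmin/2}\le\frac1n$ and $\delta\le\frac1{n(n+1)}$.
\item It shows $|\sigma_x-p_x|$ is in fact exponentially small in $n\pmin$ (up to polynomial factors), so the $\frac1n$ rate is loose for fixed $P$.
\end{itemize}
What the paper's route buys is the sharper coverage-free bound $\sigma_x\ge G^{-1}(p_x)$. That bound is exact when $|\X^*|=2$ and underlies Figure~\ref{fig:fig1}.

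Two small corrections to your write-up:
\begin{itemize}
\item Step~1 is not really an obstacle. Since $f$ is decreasing, $\sigma_x<p_x/2$ gives $p_xf(\sigma_x)>\frac2n\bigl(1-(1-\tfrac{p_x}{2})^n\bigr)\ge\frac2n(1-e^{-1})>\frac1n$ whenever $np_x>2$, with no case split.
\item In Step~2, the figure $\approx\pmin^{15}$ is only the value at the threshold. You should add that $n^2e^{-\pmin n/2}$ is decreasing for $n>4/\pmin$, so the bound $\delta\le\frac1{n(n+1)}$ holds for all larger $n$.
\end{itemize}
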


We proceed with a proof sketch of Theorem~\ref{thm:mixed-convergence}; we defer formal statements of supporting lemmas and their proofs to Appendix \ref{appendix:mixed-equilibria}.
Our high-level approach in characterizing the relationship between $\sigma_x$ and $p_x$ is to understand the utility an agent $i$ gains by playing a pure strategy $x_i=x$.
Intuitively, if $\sigma_x$ is too small relative to $p_x$, agent $i$ gains higher utility by deviating to $x$. 
If $\sigma_x$ is too large relative to $p_x$, agent $i$ loses too much utility deviating to $x$, contradicting that $x$ is in the support of the mixed strategy. 
We are able to characterize when utilities are too high or too low using our previous observation (Observation~\ref{obs:symmetric-equil-utility}) that in any equilibrium $\sigma$, $u_i(\sigma_i,\bm{\sigma}_{-i})=\frac{1}{n}$.

\paragraph{Breaking down the utility.}
We can break down the utility of playing pure strategy $x_i=x$ into three disjoint and exhaustive events $E_1,E_2$ and $E_3$, stated as conditions on the realizations of the target $y\in\Y$ and other agents' positions $x_{-i} \in \X^{n-1}$:

\begin{enumerate}
    \item $E_1$: $x^*(y)=x$, and $x_j \neq x$ for some $j\neq i$.
    (That is, $x$ is the pseudo-target corresponding to $y$, and at most $k \leq n-2$ others play $x$.)
    In this case, we derive the utility conditional on event $E_1$ by taking a binomial sum with probability $\sigma_x$ over each value $k$.
    \item $E_2$: $x_j=x$ for all $j$.
    In this case, the realization of $y$ is irrelevant, since everyone is playing the same position, and the expected utility from $x$ is exactly $\frac{1}{n}$.
    \item $E_3$: $x^*(y) \neq x$, and $x_j\neq x$ for some $j\neq i$.
    In this case, agent $i$ can still gain positive utility so long as $x \in X_{\min}(\bm{x}, y)$, i.e., $d(x_j,y)\geq d(x,y)$ for all $j\neq i$.
\end{enumerate}

\paragraph{Obtaining a lower bound on $\sigma_x$.}
We find that the expected utilities on $E_1$ and $E_2$ can be computed explicitly. In Lemma \ref{lem:deviation-lower-bound}, we observe that the sum of these terms then serve as a lower bound on the utility of playing position $x$, with
\begin{align}
    u_i(x, \bm{\sigma}_{-i})
    &\geq \E[\pi_i(x_i, \bm{x}_{-i}, y) \ones_{E_1}] + \E[\pi_i(x_i, \bm{x}_{-i}, y) \ones_{E_2}]
    \\
    \label{eq:u-inequality}
    &= \frac{p_x}{n\sigma_x}(1-\sigma_x^n-(1-\sigma_x)^n)+\frac{1}{n}\sigma_x^{n-1}.
\end{align}
Let $g(p_x, \sigma_x) = \frac{p_x}{n\sigma_x}(1-\sigma_x^n-(1-\sigma_x)^n)+\frac{1}{n}\sigma_x^{n-1}.$

Since $x$ is in the support of $\sigma$, $u_i(x,\bm{\sigma}_{-i}) = \frac{1}{n}$ by Observation~\ref{obs:symmetric-equil-utility}; in combination with Inequality~\ref{eq:u-inequality}, we must then have $g(p_x, \sigma_x) \leq \frac{1}{n}.$
% \[p_x=\frac{\sigma_x-\sigma_x^n}{1-\sigma_x-(1-\sigma_x)^n}.\]
We find that this inequality corresponds exactly to the inverse relation $\sigma_x \geq G^{-1}(p_x)$, where
\begin{equation}\label{eq:G}
    G(\sigma_x)=\frac{\sigma_x-\sigma_x^n}{1-\sigma_x-(1-\sigma_x)^n}.
\end{equation}
$G^{-1}(p_x)$ is well-defined for $p_x \in \left( \frac{1}{n}, 1 - \frac{1}{n}\right)$.
We can bound the distance between $p_x$ and $G^{-1}(p_x)$ by $\frac{1}{n}$ (Lemma~\ref{lem:G-equality}).
Chained together, then, $\sigma_x \geq G^{-1}(p_x) \geq p_x - \frac{1}{n}$ for $p_x \in(\frac{1}{n},1-\frac{1}{n})$.
Meanwhile, for pseudo-targets with higher mass, i.e. $p_x\geq1-\frac{1}{n}$, Lemma \ref{lem:high-p-equilibrium-2} proves that the only symmetric equilibrium is $\sigma_x=1$; and if $p_x \leq \frac{1}{n}$, the inequality holds trivially.
Combining each of these cases gives us Proposition \ref{prop:mixed-lower-bound}: for all $p_x \in [0, 1],$ $\sigma_x\geq p_x-\frac{1}{n}$.

\paragraph{Obtaining an upper bound on $\sigma_x$.}
From Equation~\ref{eq:u-inequality}, we have exact expressions for $\E[\pi\ones_{E_1}] + \E[\pi\ones_{E_2}]$.
To give an upper bound on $\sigma_x$, therefore, it suffices to upper bound $\E[\pi\ones_{E_3}]$.
That is, we aim to upper bound the utility share when agent $i$ is not on the corresponding pseudo-target for $y$, and at least one other agent is at a different location than agent $i$.
A useful related event to consider is $C := \{\X^*$ is covered$\} = \{\cup_i \{x_i\} = \X^*\}$.
In particular, if \emph{every} other pseudo-target $x' \in \X^*$ is covered by some agent, someone else is always closer in proximity to $y$ and $i$ gains no utility.
Meanwhile, if $\X^*$ is not covered, we crudely upper bound $i$'s utility by assuming they \emph{always} win $y$.
We thus have, 
\begin{align}
\label{eq:upper-bound-C}
    \E[\pi_i(x_i, \bm{x}_{-i}, y) \ones_{E_3}] \leq \E[\ones_{E_3 \setminus C}] \leq 1 - \Pr[C].
\end{align}

Lemma~\ref{lem:mixed-extreme} shows $1 - \Pr[C] \leq \frac{1}{n^2}$ for some lower bound on $n$ which is a function of $\pmin$.
The proof follows standard coupon-collector arguments, using the lower bound on $\sigma_x$ above.
The same logic shows the extremeness of $\sigma$ (Proposition \ref{prop:extreme-equilibria}), where now we consider $x\notin \X^*$, and note that $x$ has utility 0 on $C$.
Thus $\E[\pi_i(x, \bm{x}_{-i}, y)] \leq 1 - \Pr[C] \leq \tfrac 1 {n^2} < \tfrac 1 n$, meaning $x\notin\supp(\sigma)$ by Observation~\ref{obs:symmetric-equil-utility}.

Returning to the upper bound on $\sigma_x$, we now combine eq.~\eqref{eq:upper-bound-C} with eq.~\eqref{eq:u-inequality}, giving

\begin{align}
\label{eq:upper-bound-final-expression}
    u_i(x,\bm{\sigma}_{-i}) \leq \frac{p_x}{n\sigma_x}\left(1-\sigma_x^n-(1-\sigma_x)^n\right)+\frac{1}{n}\sigma_x^{n-1}+\frac{1}{n^2}.
\end{align}    
We can thus proceed using a similar argument as the lower bound.
Specifically, in equilibrium, we know $u_i(x, \bm{\sigma}_{-i}) = \frac{1}{n},$ allowing us to lower bound the right-hand side of eq.~\eqref{eq:upper-bound-final-expression} by $\frac{1}{n}$ to recover a relationship between $\sigma_x$ and $p_x$ (Lemma~\ref{lem:lower-bound-util}).
Unfortunately, the resulting inverse relation is not as nice: monotonicity breaks when $\sigma_x$ is too large, intuitively because agents can always achieve utility $\frac{1}{n}$ by all playing $x$; with the addition of utility conditioned on $E_3$, a player can secure utility above $\frac{1}{n}$.
In light of this technical challenge, we are able to prove the upper bound in Proposition~\ref{prop:mixed-upper-bound} through careful casework.

\paragraph{An example when $|\X^*| = 2$.}
As an illustrative example, we depict our analysis in the fundamental forecasting setting where forecasters submit predictions for a single binary-outcome event $y$, i.e., when $m=1$ (and hence $|\X^*|=2$); concretely, forecasters' symmetric equilibrium strategy $\sigma_x$ is the weight they assign to the pure strategy of reporting $x=1$.
We consider the case where $n$ satisfies the bound in Proposition \ref{prop:extreme-equilibria}, so that equilibria are extreme.
In this setting, we are able to pin down the symmetric mixed equilibrium exactly.
%Let $\X^*=\{x,\overline{x}\}$.
In particular, the utility is given by eq.~\eqref{eq:u-inequality} as an equality, since the expected utility conditioned on $E_3$ is 0; i.e.,\[u_i(x,\bm{\sigma}_{-i})=\frac{p_x}{n\sigma_x}(1-\sigma_x^n-(1-\sigma_x)^n)+\frac{1}{n}\sigma_x^{n-1}.\]
When the expression for $u_i(x,\bm{\sigma}_{-i})$ is exact, the strategy $\sigma_x$ satisfies $\sigma_x = G^{-1}(p_x)$. 
Meanwhile, for $p_x<\frac{1}{n}$ and $p_x>1-\frac{1}{n}$, the only equilibria are to play $\sigma_x=0$ and $\sigma_x=1$, respectively, by Lemma \ref{lem:high-p-equilibrium-2} (which automatically applies when $p_x<\frac{1}{n}$ as $1-p_x>1-\frac{1}{n}$).

Figure~\ref{fig:fig1} explicitly illustrates the extremizing phenomenon that is observed in real-life forecasting competitions: for any $p_x\neq0.5$, $\sigma_x$ places \emph{more} weight than $p_x$ on the outcome of $x$ that is more likely under $P$. 
%as $p_x$ moves closer to e.g. 0, the equilibrium strategy $\sigma_x$ is closer still (and likewise for $y=1$).
%Perhaps surprisingly, this example is evidence that even in a highly stylized full-information, single-event forecasting competition, we can recover .
The striking emergence of non-truthful behavior even under our stylized position-game model suggests that this example encapsulates the core insight of strategic behavior in more complex settings.

\begin{figure}[h]
    \centering
    \includegraphics[scale=0.5]{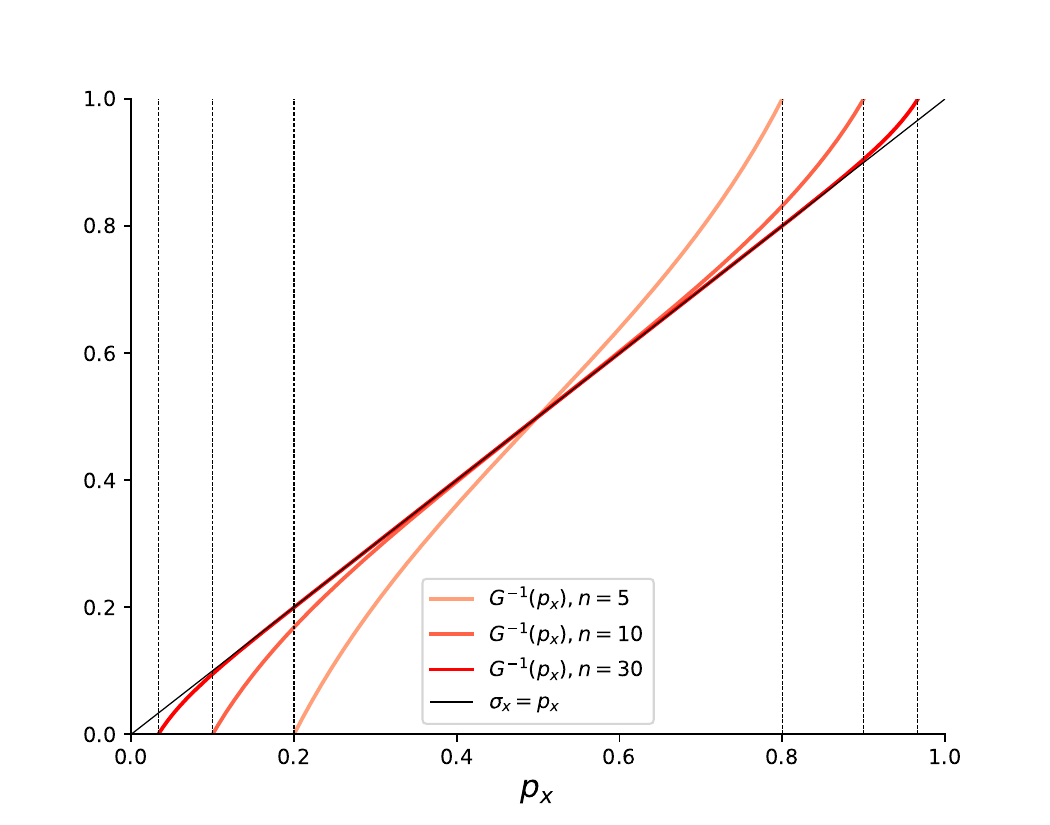}
    \caption{Plot of $\sigma_x$ as a function of $p_x$ for a symmetric strategy $\sigma$ and pseudo-target $x$, for the case $|\X^*| = 2$, for various values of $n$.
    Red lines indicate the function $\sigma_x = G^{-1}(p_x)$, and the black line indicates where $\sigma_x= p_x$; vertical lines indicate $\sigma_x=\frac{1}{n}$ and $\sigma_x=1-\frac{1}{n}$ for the different values of $n$, respectively. 
    Observe that for increasing values of $n$, the equilibrium $\sigma_x$ given by $G^{-1}(p_x)$ hews closer and closer to $p_x$.}
    \label{fig:fig1}
\end{figure}

\section{Applications} \label{sec:app}
In this section, we aim to illustrate how the results for the general abstract game $\game$ apply to each of the specific games introduced in \S~\ref{sec:intro}, and what the results imply in each of their different contexts.

\subsection{Forecasting competitions}
We begin by outlining results for forecasting competitions.
As a reminder, forecasters submit predictions $x_i \in [0,1]^m$ over a random vector $Y \in \{0,1\}^m$ of $m$ binary events with distribution $Q$, which is common-knowledge.
$\Y := \{0,1\}^m$ is the set of event vector realizations, with $\X^* = \Y$, so that a strategy profile is extreme if forecasters only submit deterministic marginal predictions of 0 or 1 for each event. 
Let $P$ be the projection of $Q$ onto $\X^*$.
Note that $\pmin \leq 2^{-m},$ so that $n \geq \frac{2}{\pmin}$ implies $n \geq 2^{m+1}$.

\begin{corollary}{(Forecasting)}
    If $n > \frac{1}{\pmin}$, then every pure equilibrium $\bm{x}$ is extreme (for each forecaster $i$, $\bm{x}_i \in \{0,1\}^m$) and covers $\Y$ (for each $y \in \{0,1\}^m$, there exists a forecaster $i$ such that $\bm{x}_i = y$). 
    If $n \geq \frac{2}{\pmin}$, then there exists an extreme, pure Nash equilibrium such that for all $y\in \{0,1\}^m$, $k_{\bm{x}}(y)\geq2$, and if $n>\frac{2}{\pmin}$, then this condition holds for all pure Nash equilibria.
    Moreover, if $n > \frac{1}{c} \geq \frac{2}{\pmin}$, for any pure equilibrium $\bm{x}$,
    \[\kldiv{\emp}{P} \leq \log \left( \frac{\lfloor cn\rfloor + 1}{\lfloor cn \rfloor} \right),\]
    where $\hat{P}_{\bm{x}}$ is the empirical distribution on $\{0,1\}^m$ induced by $\bm{x}$.

    Symmetric mixed Nash equilibria exist for any $n > 1$.
    If $n > \max \left\{ 43, 8 \left(\frac{4}{\pmin} \log \frac{1}{\pmin} \right) \right\}$, any symmetric mixed Nash equilibrium $\sigma$ is extreme, and for all $y \in \{0,1\}^m$, we have $|\sigma(y) - Q(y)| \leq \frac{1}{n}$.
\end{corollary}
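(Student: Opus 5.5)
This corollary is a direct instantiation of the general results, so the proof is essentially a verification that forecasting competitions satisfy the hypotheses, followed by a translation of each conclusion into forecasting language. The first step is to recall the mapping given in the setting section: $\Y=\{0,1\}^m$, $\X=[0,1]^m$, and $d(x,y)=\lVert x-y\rVert_2$. For each vertex $y$ of the hypercube, the unique minimizer of $d(\cdot,y)$ over $[0,1]^m$ is $y$ itself, since $d(y,y)=0$ and $d(x,y)>0$ for every $x\neq y$. Hence $x^*(y)=\{y\}$, $\X^*=\Y$ is finite with $2^m$ elements, and $P=Q$. This gives Condition~\ref{cond:ties-finite-x}, so the forecasting competition is an instance of $\game$.

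The pure-equilibrium statements then follow in order.
\begin{enumerate}
\item For $n>\frac{1}{\pmin}$, Theorem~\ref{thm:pure-extreme} shows that every pure equilibrium is extreme and covers $\X^*$. Since $\X^*=\{0,1\}^m$, this says each $x_i$ is a vertex and every vertex is occupied.
\item For $n\ge\frac{2}{\pmin}$, Theorem~\ref{thm:pure-existence} gives an extreme pure equilibrium with $\countv(y)\ge 2$ for all $y$. For $n>\frac{2}{\pmin}$, the same theorem gives this property for all pure equilibria.
\item For the KL bound, Theorem~\ref{thm:kl-div} applies whenever $n\ge\frac1c>\frac{2}{\pmin}$, with $\emp$ supported on $\{0,1\}^m$ and $P=Q$.
\end{enumerate}
The corollary's hypothesis in item 3 is $n>\frac1c\ge\frac{2}{\pmin}$, which differs from Theorem~\ref{thm:kl-div} in where the strict inequality sits. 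When $\frac1c>\frac{2}{\pmin}$ the theorem applies directly. In the boundary case $\frac1c=\frac{2}{\pmin}$, we have $n>\frac{2}{\pmin}$. I would handle this by noting that the lemmas behind Theorem~\ref{thm:kl-div} (Lemmas~\ref{lem:pure-lower-bound} and~\ref{lem:wp-convergence}) only use $n>\frac{2}{\pmin}$, which gives at least two agents per pseudo-target, together with $c\le\frac{\pmin}{2}$. Rerunning them in this case should give the same bound. If that check fails, the fallback is to state the corollary with the theorem's exact hypothesis. This bookkeeping is the only place where I expect any friction.

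For the mixed statements, existence for every $n>1$ follows from Theorem~\ref{thm:mixed-existence}, because $[0,1]^m$ is compact and Hausdorff. For $n>\max\{43,8(\frac{4}{\pmin}\log\frac{1}{\pmin})\}$, Theorem~\ref{thm:mixed-convergence} shows that every symmetric mixed equilibrium is supported on $\X^*=\{0,1\}^m$ and satisfies $|\sigma_x-p_x|\le\frac1n$. Substituting $p_y=P(y)=Q(y)$ gives $|\sigma(y)-Q(y)|\le\frac1n$ for every $y\in\{0,1\}^m$, which completes the proof.
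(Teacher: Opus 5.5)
Your proposal is correct and matches the paper, which treats this corollary as a direct instantiation of Theorems~\ref{thm:pure-extreme}, \ref{thm:pure-existence}, \ref{thm:kl-div}, \ref{thm:mixed-existence} and~\ref{thm:mixed-convergence} via $x^*(y)=y$, $\X^*=\Y=\{0,1\}^m$ and $P=Q$. Your boundary case $\frac1c=\frac{2}{\pmin}$ goes through without any fallback, because the proof of Lemma~\ref{lem:pure-lower-bound} only needs $n>\frac{2}{\pmin}$ and $P(x)\ge\pmin=2c$: a deviator to an under-occupied $x$ then gets at least $\frac{2c}{\lfloor cn\rfloor}\ge\frac{2}{n}>\frac{1}{n}$ (with $\lfloor cn\rfloor\ge 1$ because $n>\frac1c$), and Lemma~\ref{lem:wp-convergence} and Theorem~\ref{thm:kl-div} then follow verbatim.
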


In other words, when the number of forecasters is large relative to the number of events, forecasters will not submit their true predictions about $Y$ under their belief distribution $Q$. 
Instead, forecasters concentrate their strategies on the potential \emph{outcomes} of the random events.
When agents employ pure strategies, the resulting equilibrium profile will then mirror the underlying distribution $Q$ over $Y$, so that while individually forecasters do not submit reports drawn from $Q$, collectively we have a wisdom-of-the-crowds effect: the empirical distribution of reports will closely resemble $Q$.

Since many forecasting competitions are one-shot with little context to predict the behavior of other players, symmetric mixed equilibria are reasonable in modeling participant behavior.
In light of our results, we expect in large $n$ regimes that forecasters will sample an outcome from the underlying event distribution, and submit a fully extreme predictions that assign probability 1 to that outcome.
Interestingly, the winner of the competition is not someone who submits a well-calibrated forecast, but instead gets lucky and beats out others by sampling the correct outcome.
From an outsider's point of view, then, there is little insight to gain about the underlying event probabilities from the winner's (or any individual's) reports.
Instead, one can only empirically reconstruct the underlying distribution by observing enough reports from different forecasters that they can be treated as samples from $Q$.
Conversely, our results imply that extreme forecasts are not necessarily the work of lazy or uninformed participants making wild guesses; rather, they may reflect sophisticated strategies designed to maximize win probability by sampling from a highly accurate distribution.

\subsection{Finite-location Hotelling games}

Now consider the finite-location Hotelling game, with retailers choosing locations in $\{1, 2, \ldots, k\} \subset \mathcal{S}$ and the set of consumers $\Y$ following a distribution $Q$ over $\mathcal{S}$. 
Here the pseudo-target set $\X^*$ corresponds to the set of locations that are ranked first by at least one consumer (i.e., $x^*(y) = \{ x \in \X: \pi_y(x) = 1$), and $P$ is the projection of $Q$ onto $\X^*$.
Our results translate as follows:

\begin{corollary}{(Finite-location Hotelling)}
    If $n > \frac{1}{\pmin}$, then every pure equilibrium is extreme (for each retailer $i$, there exists some consumer $y \in \Y$ such that $\pi_y(\bm{x}_i) = 1$) and covers the set of first-ranked locations $\X^*$ (for each $y \in \Y$, there exists a retailer $i$ such that $\pi_y(\bm{x}_i) = 1$).
    If $n \geq \frac{2}{\pmin}$, then there exists an extreme, pure Nash equilibrium such that for all $x\in\X^*$, $k_{\bm{x}}(x)\geq2$, and if $n > \frac{2}{\pmin}, $ then this condition holds for all pure Nash equilibria.
    Moreover, for $n > \frac{1}{c} \geq \frac{2}{\pmin}$,
    \[\kldiv{\emp}{P} \leq \log \left( \frac{\lfloor cn\rfloor + 1}{\lfloor cn \rfloor} \right),\]
    where $\hat{P}_{\bm{x}}$ is the empirical distribution of retailer locations induced by an equilibrium strategy profile $\bm{x}$.

    Symmetric mixed Nash equilibria exist for any $n > 1$.
    If $n > \max \left\{ 43, 8 \left(\frac{4}{\pmin} \log \frac{1}{\pmin} \right) \right\}$, any symmetric mixed Nash equilibrium is extreme, and for each first-ranked position $x \in \X^*$, we have $|\sigma(x) - P(x)| \leq \frac{1}{n}.$
\end{corollary}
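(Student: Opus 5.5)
This corollary is a direct instantiation of the general results, so the plan is to verify that the finite-location Hotelling game is an instance of $\game$ and then translate each theorem. First, I will check Condition~\ref{cond:ties-finite-x}. Set $\X = \{1,\ldots,k\}$ and let $\Y$ be the consumer set with distribution $Q$. Then $x^*(y)$ is the set of first-ranked locations of consumer $y$, so $\X^* \subseteq \X$ is finite. Ties $|x^*(y)|>1$ occur only on the set of consumers equidistant from two locations. For Euclidean $d$ this set is a finite union of hyperplanes, which is null under the absolutely continuous consumer distribution. In the more abstract ranking formulation we instead assume directly that first preferences are strict. The utility of a retailer is exactly eq.~\eqref{eq:gen-utility}, since a consumer patronizes the nearest occupied location and splits evenly among retailers there, matching $\pi_i$. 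Finally, $P(x) = Q(\{y : \pi_y(x)=1\})$ is the mass of consumers ranking $x$ first.

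With the game identified as an instance of $\game$, I will invoke the results in order. Theorem~\ref{thm:pure-extreme} gives, for $n>1/\pmin$, that every pure equilibrium is extreme and covers $\X^*$. Here being extreme means each $x_i$ is first-ranked by some consumer, and covering means that every first-ranked location is occupied, hence every consumer's top location hosts some retailer. Theorem~\ref{thm:pure-existence} supplies the existence claim and the claim $\countv(x)\ge 2$ for $n\ge 2/\pmin$, and the fact that this holds in all equilibria for $n>2/\pmin$. Theorem~\ref{thm:kl-div} gives the KL bound for $n \ge 1/c > 2/\pmin$. I will note that the corollary's hypothesis $n > 1/c \ge 2/\pmin$ is slightly different from the theorem's, and I need to reconcile the boundary case $1/c = 2/\pmin$. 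I expect that either Lemma~\ref{lem:pure-lower-bound} goes through with non-strict inequality, or the statement should be read with the theorem's hypotheses. I would also double-check this boundary against the proof of Lemma~\ref{lem:pure-lower-bound}.

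For mixed strategies, $\X$ is finite, so Theorem~\ref{thm:mixed-existence} gives symmetric mixed equilibria for all $n>1$. Theorem~\ref{thm:mixed-convergence} then gives extremeness and $|\sigma(x)-P(x)|\le 1/n$ under the stated bound on $n$.

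The only real obstacle is the measure-zero tie condition. I would argue it as follows. For $x\neq x'$, the set $\{y : d(x,y)=d(x',y)\}$ is a hyperplane in Euclidean space, so it has Lebesgue measure zero. Taking the union over finitely many pairs keeps it null, and absolute continuity of the consumer distribution then gives it $Q$-measure zero. Everything else is bookkeeping.
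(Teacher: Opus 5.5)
Your proposal is correct and is essentially the paper's argument. The paper checks Condition~\ref{cond:ties-finite-x} for this model in \S\ref{sec:setting}, and your bisector-hyperplane argument just makes its appeal to absolute continuity explicit; it then reads the corollary off Theorems~\ref{thm:pure-extreme}, \ref{thm:pure-existence}, \ref{thm:kl-div}, \ref{thm:mixed-existence} and~\ref{thm:mixed-convergence}. The boundary case $1/c = 2/\pmin$ you flag does go through, because the proof of Lemma~\ref{lem:pure-lower-bound} only needs $P(x) > c$: an agent with utility at most $1/n$ either sits on an under-occupied $x$ or can move there, and in both cases gets at least $P(x)/\lfloor cn\rfloor \geq P(x)/(cn) > 1/n$. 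Coverage from $n > 1/\pmin$ then carries Lemma~\ref{lem:wp-convergence} and the KL bound through unchanged.
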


That is, when the number of retailers is large relative to the underlying distribution over consumer preferences, they will concentrate on the most-preferred set of locations. 
With sufficient competition, then, retailers hyper-specialize over the consumers' preferences. 
Our conclusions mirror the results shown in~\citet{nunez2016competing} and~\citet{nunez2017large} for the finite-location Hotelling game, though the convergence rates are novel. 

\subsection{Classic Hotelling games}
In the classic Hotelling game, retailers choose locations in $\X := [0,1]$ and consumers follow a distribution $Q$ with finite support over $[0,1]$.
Then $\X^* = \supp(Q)$ corresponds to the set of consumer locations. 
Our result then implies the following.

\begin{corollary}{(Classic Hotelling)}
    If $n > \frac{1}{\pmin}$, then every pure equilibrium is extreme (for each retailer $i$, $\bm{x}_i \in \supp(Q)$) and covers the consumers (for each $y \in \supp(Q)$, there exists a retailer $i$ such that $\bm{x}_i = y$).
    If $n \geq \frac{2}{\pmin}$, then there exists an extreme, pure Nash equilibrium such that for all $y\in\supp(Q)$, $k_{\bm{x}}(y)\geq2$, and if $n > \frac{2}{\pmin}, $ this condition holds for all pure Nash equilibria.
    Moreover, for $n > c \geq \frac{2}{\pmin}$,
    \[\kldiv{\emp}{P} \leq \log \left( \frac{\lfloor cn\rfloor + 1}{\lfloor cn \rfloor} \right),\]
    where $\hat{P}_{\bm{x}}$ is the empirical distribution of retailer locations induced by an equilibrium strategy profile $\bm{x}$.

    Symmetric mixed Nash equilibria exist for any $n > 1$.
    If $n > \max \left\{ 43, 8 \left(\frac{4}{\pmin} \log \frac{1}{\pmin} \right) \right\}$, any symmetric mixed Nash equilibrium $\sigma$ covers the consumers, and for each consumer location $y \in \supp(Q)$, we have $|\sigma(y) - P(y)| \leq \frac{1}{n}.$
\end{corollary}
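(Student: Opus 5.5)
This corollary (Classic Hotelling) should be a direct instantiation of the general results, so the plan is to verify that the classic Hotelling game is an instance of $\game$ satisfying Condition~\ref{cond:ties-finite-x}, and then read off each clause from the corresponding theorem.

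\textbf{Step 1: the instance.} Set $\X=[0,1]$, $\Y=\supp(Q)$ (finite) and $d(x,y)=|x-y|$. For each $y\in\Y$ the minimizer of $|x-y|$ over $[0,1]$ is uniquely $x^*(y)=y$, which gives three facts at once:
\[
|x^*(y)|=1 \ \text{for all } y, \qquad \X^*=\supp(Q) \ \text{is finite}, \qquad P=Q \ \text{on } \X^*.
\]
So Condition~\ref{cond:ties-finite-x} holds, and $\pmin=\min_{y\in\supp(Q)}Q(y)>0$.

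\textbf{Step 2: the pure-equilibrium clauses.}
\begin{itemize}
\item Theorem~\ref{thm:pure-extreme} gives extremeness and coverage when $n>1/\pmin$. Since $\X^*=\supp(Q)$, ``extreme'' reads $\bm{x}_i\in\supp(Q)$ and ``covers'' reads that every consumer location hosts a retailer.
\item Theorem~\ref{thm:pure-existence} gives existence of an extreme equilibrium with $k_{\bm{x}}(y)\geq 2$ for $n\geq 2/\pmin$. It also shows this property holds in every pure equilibrium when $n>2/\pmin$.
\item Theorem~\ref{thm:kl-div} gives the KL bound under its hypothesis $n\ge 1/c>2/\pmin$.
\end{itemize}
One subtlety is that the corollary states ``$n>c\ge 2/\pmin$''. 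I would interpret this as the hypothesis of Theorem~\ref{thm:kl-div}, namely $n\geq 1/c>2/\pmin$, since that is the only reading under which $\lfloor cn\rfloor\ge 1$ and the bound is meaningful. I would also note that the boundary case $1/c=2/\pmin$ needs the strict inequality of Lemma~\ref{lem:pure-lower-bound}.

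\textbf{Step 3: the mixed-equilibrium clauses.} Here $\X=[0,1]$ is compact Hausdorff, so Theorem~\ref{thm:mixed-existence} yields a symmetric mixed equilibrium for all $n>1$. Theorem~\ref{thm:mixed-convergence} then gives extremeness, i.e.\ $\supp(\sigma)\subseteq\supp(Q)$, together with $|\sigma(y)-P(y)|\le 1/n$ in the stated range of $n$.

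The claim that $\sigma$ ``covers the consumers'' is the one clause that needs a short extra argument. Combining the bound with the range of $n$ gives
\[
\sigma(y)\ \ge\ P(y)-\tfrac1n\ \ge\ \pmin-\tfrac1n\ >\ 0,
\]
because $n>32\log(1/\pmin)/\pmin>1/\pmin$. Hence every $y\in\supp(Q)$ lies in $\supp(\sigma)$.

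\textbf{Main obstacle.} The only real work is bookkeeping: matching the corollary's hypotheses, including the apparent typo $c$ versus $1/c$, to the theorems, and checking the coverage inference in the mixed case. No new analysis is required.
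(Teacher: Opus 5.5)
Your proposal is correct and follows the paper's route. The paper obtains this corollary purely by instantiating Theorems~\ref{thm:pure-extreme}--\ref{thm:mixed-convergence} with $\X^*=\supp(Q)$ and $P=Q$, and you do the same, reading $n>c$ as $n>1/c$ as in the parallel corollaries; your extra step $\sigma(y)\ge \pmin-\tfrac1n>0$ correctly handles the mixed ``covers'' clause. Two small fixes: for $\pmin=1$ your inequality $32\log(1/\pmin)/\pmin>1/\pmin$ fails, though $n>43$ still gives $n>1/\pmin$; and the boundary case $1/c=2/\pmin$ is harmless, because Lemma~\ref{lem:pure-lower-bound}, Lemma~\ref{lem:wp-convergence} and Theorem~\ref{thm:kl-div} only need $n>2/\pmin$ and $\pmin\ge 2c$ (a utility-$\le 1/n$ agent moving to an $x$ with $\countv(x)<\lfloor cn\rfloor$ gets at least $\pmin/\lfloor cn\rfloor\ge 2/n$).
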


Our theorem, when applied to the classic Hotelling game, provides a discrete analogue to the results of \citet{osborne1986nature}. Whereas they prove convergence to the consumer distribution under the assumption that its CDF is twice differentiable, we prove convergence in the case where the distribution is discrete.

\section{Discussion} \label{sec:discussion}
In this paper, we present the position-optimization game: players select positions in a common space, with the goal of capturing a set of targets according to some proximity function.
We fully characterize pure and symmetric mixed equilibria for a large number of players. 
Furthermore, we show that equilibria converge in $n$ to the projection of the target distribution onto the position space, and characterize convergence rates. 
These results have broad implications for games encompassed by our model.
In particular, we are the first to provide convergence rates; the first to fully characterize equilibria in traditional forecasting competitions; and the first to derive large-$n$ results for the classic Hotelling game with a discrete consumer distribution.

\paragraph{Convergence in all settings.}
A natural question is whether the requirement $|\X^*|<\infty$ can be relaxed, in particular to the case where $\X^*$ is a continuum like $[0,1]$.
Our current proof techniques first show that (pure or mixed) equilibrium strategies are extreme and cover $\X^*$; we then leverage this condition to analyze the ratio of the $Q$-mass on any $x^*(y)\in \X^*$ to the (pure or mixed) strategy mass on $x^*(y)$.
Using this ratio of masses, we control the utilities and show convergence rates.
When $\X^*$ is a continuum, however, this ratio is not generally well-defined.

One approach to a coverage-style argument in the continuous setting is to consider a finite partition of the position space, but this runs into "leakage" problems where an agent's utility depends nontrivially not only on the placement of other agents in that bin, but on the placement of agents in other bins.
\citet{osborne1986nature} prove that symmetric mixed equilibria converge to $Q$ for continuous (bounded) $\X^*$, but these results lack convergence rates and furthermore place three key assumptions:
that $Q$ is atomless and differentiable; the equilibrium is differentiable; and its limit over $n$ exists.
Characterizing convergence rates without these conditions presents a significant technical challenge, as bounding agents' utilities requires carefully controlling various Taylor approximation errors while also managing the aforementioned leakage problems.
The difficulty of relaxing $|\X^*|<\infty$ notwithstanding, given the consistency of our results with prior work, we conjecture that this convergence will hold for all position-optimization games in much greater generality, even with a significant relaxation of Condition~\ref{cond:ties-finite-x}.

\paragraph{Small $n$ regimes.}
We think of this paper as characterizing equilibria for \emph{large} values of $n$, where the number of players overwhelms any underlying structure of the game.
A natural next step is to analyze equilibrium behavior for smaller values of $n$.
For \emph{medium} values of $n$, we conjecture that the structure of the game becomes much more important.
For example, in forecasting competitions, the geometry of the metric space $[0,1]^m$ coupled with the $\ell_2$ distance means that forecasters may be able to capture more win probability by playing in the interior of the hypercube.
Can we characterize at what point over $n$ forecasters depart from the vertices, and whether they would remain close to them or move quickly toward less extreme points?

\bibliographystyle{ACM-Reference-Format}
\bibliography{references}

\newpage
\appendix

\section{Additional applications omitted from \S~\ref{sec:setting}}\label{appendix:setting}

Here, we present additional applications of our model.
\begin{enumerate}
    \item \emph{Spatial voting.}
    The $n$ strategic agents are candidates.
    $\Y$ is a finite set of $m$ voters.
    $\X$ is a set of potential candidates' ideological positions; for example, $\X := [0, 1]^2$ could represent an ideological spectrum over two different issues.
    Each candidate chooses a position $x \in \X$.
    Then $d: \X \times \Y \to \reals_{\geq 0}$ corresponds to a function that maps each position and voter to some cost.
    We assume there is a single most-preferred position $x$ for each voter $y$ such that $d(x,y) = 0$, with $d(x,y) < d(x',y)$ for all $x' \neq x$. 
    The pseudo-target for voter $y$ is the position $x$ such that $d(x,y) = 0$, so that $\X^* = \bigcup_{y \in \Y} \arg\min_{x \in \X} d(x, y)$.
    $\X^*$ is finite because $\Y$ is finite, so Condition~\ref{cond:ties-finite-x} is satisfied.
    
    The winning candidate is selected according to plurality rule: the candidate who is ranked first relative to $d$ for the most voters wins (with randomized tie-breaking).
    $Q(y) := \frac{1}{|\Y|}$ is thus the amount of influence of voter $y$ from each candidate's perspective.
    Then 
    \[P(x) = \frac{\sum_{y \in \Y} \ones_{x = \arg\min_{x \in \X} d(x, y)}}{|\Y|},\]
    i.e. $P(x)$ is the percentage of voters who most prefer position $x$.
    It follows that the utility of each candidate $i$ is
    \[u_i(x_i, \bm{x}_{-i}) = \frac{1}{|\Y| \countv(x_i)} \sum_{y \in \Y} \ones_{d(x_i, y) \leq d(x_j, y) \; \forall j \neq i}.\]

    % If we assumed $\X$ was finite; every position was preferred most by at least one voter; and ties were not allowed, the spatial voting model would be a specific instantiation of the model in~\citet{nunez2016competing}. 
    
    \item \emph{Discrete Voronoi games.} 
    Let $G = (V, E)$ be an undirected, vertex-weighted graph.
    There are $n$ strategic influencers who want to attract users on the vertices, with position set $\X := V$.
    Targets correspond to users, with $\Y := V$, and the proximity function is $d(x, y)$ for $d$ the shortest path between $x$ and $y$.
    The distribution over targets $Q$ is the weight of vertices.
    $x^*(y) = y$, i.e. the pseudo-target of a vertex is itself, so $P(x) = Q(x)$.
    Condition~\ref{cond:ties-finite-x} is satisfied since $|x^*(y)| = 1$ for all $y \in \Y$, and $\X^* = \Y$.
    Then, the utility of $i$ is \[u_i(x,\bm{x}_{-i})=\sum_{y \in \Y} Q(y) \left[\frac{\mathbbm{1}_{d(x,y)\leq d(x',y) \;\forall x\neq x'}}{|\arg\min_{j\neq i} d(x_j,y)|}\right].\]
    (Note our model easily extends to settings where edges are weighted; $G$ is a hypergraph; or $\X$ is a richer space, with $V \subseteq \X$.)
\end{enumerate}

\section{Proofs omitted from \S~\ref{sec:pure}}\label{appendix:pure-equilibria}

\begin{subsection}{Proof of Lemma~\ref{obs:pure-utility-bound}}
\end{subsection}
\begin{proof}
We have
    \begin{align*}
        u_i(x_i, \bm{x}_{-i}) &=\E_{y \sim Q}
        \frac{\mathbbm{1}_{x_i\in X_{\min}(\bm{x}, y)}}{| X_{\min}(\bm{x}, y)|} \\
        &\geq \frac{Q(\{y \in \Y: x^*(y) = x_i\})}{|X_{\min}(\bm{x}, y)|} \\
        &= \frac{P(x_i)}{\countv(x_i)}.
    \end{align*}
    The last equality follows by definition of $P(x_i)$, and the fact that $Q(\{y: |x^*(y)| > 1\}) = 0$; thus $\E_{y \sim Q} [\X_{\min}(\bm{x}, y)] = \{x_j \in \bm{x}: x_j = x_i\}$.
    The statement holds with equality if $\countv(x) \geq 1$ for all $x \in \X^*$ (in which case $\mathbbm{1}_{x_i\in X_{\min}(\bm{x}, y)} = 0$ for $\hat y \notin \{y \in \Y: x^*(y) = x_i\}$).
    
    An analogous argument holds to show that $u_i(x', \bm{x}_{-i}) \geq \frac{P(x')}{\countv(x')+1}$.
    In this case, note that if $\countv(x_i) \geq 2$ and $\bm{x}$ covers $\X^*$, all pseudo-targets $x \in \X^*$ are still covered after the deviation (in which case $\mathbbm{1}_{x'\in X_{\min}(\bm{x}, y')} = 0$ for $\hat y \notin \{y \in \Y: x^*(y) = x'\}$) and equality holds.
    
\end{proof}

\begin{subsection}{Proof of Lemma \ref{lemma:no-pure-nash-2/n}}
\begin{proof}
    Consider a version of the game with $\X = \Y = \{x_1, x_2, x_3\}$ where $d(x_i, x_i) < d(x_j, x_i)$ for any $j \neq i$. It follows the pseudo-targets correspond to $\{x_1,x_2,x_3\}$.
    Take any $\epsilon \in (0, 1/2]$ and $n > \max \{6, \frac{4}{\epsilon} - 4\}.$
    Let $Q(y_1) = P(x_1) = \frac{2 - 2\epsilon}{n},$ $Q(y_2) = P(x_2) = \frac{2 - \epsilon}{n}$, and $Q(y_3) = P(x_3) = \frac{n - 4 + 3\epsilon}{n},$ so that $\pmin = \frac{2 - \epsilon}{n}.$

    We start by claiming that the pseudo-targets are covered in any pure equilibrium.
    If $\epsilon < \frac{1}{2}$, then $n > \frac{1}{\pmin}$ and the statement holds immediately by Theorem~\ref{thm:pure-extreme}.
    If $\epsilon = \frac{1}{2}$, we have $P(x_1) = \frac{1}{n}$, $P(x_2) = \frac{1.5}{n}$, and $P(x_3) = \frac{n - 2.5}{n}.$ 
    It follows that $x_2$ and $x_3$ must have at least one agent on each of them: otherwise, by Lemma~\ref{lem:lower-bound-util}, there exists an agent with utility $u_i(x_i, \bm{x}_{-i}) \leq 1/n$ who can deviate to either $x_2$ or $x_3$ and gain utility $> 1/n.$
    
    Meanwhile, we can also show $\countv(x_1) \geq 1$: assume not, so that $\countv(x_1) = 0$, $\countv(x_2) = n - \ell$, and $\countv(x_3) = \ell$ for some integer $\ell$ with $1 \leq \ell \leq n-1.$ 
    Then the utility of any agent $i$ on $x_2$ is the total probability share of $x_1$ and $x_2$, i.e. $u_i(x_2, \bm{x}_{-i}) = \frac{2.5}{n(n-\ell)}$, while the utility of any agent $j$ on $x_3$ is $u_j(x_3, \bm{x}_{-i}) = \frac{2.5}{n\ell}.$
    If either $u_i(x_2, \bm{x}_{-i})$ or $u_j(x_3, \bm{x}_{-i})$ are strictly less than $1/n$, there exists an agent with a profitable deviation to $x_1$ with utility $1/n$. 
    Assume not, so $u_i(x_2, \bm{x}_{-i}) \geq 1/n$ and $u_j(x_3, \bm{x}_{-i}) \geq 1/n$.
    Then we must have $\ell \geq n - 2.5$ and $\ell \leq n - 2.5$, or $\ell = n - 2.5$: but since $\ell$ and $n$ are integers, this is impossible.

    Now that we have shown for any $\epsilon \in (0, 1/2]$ that the pseudo-targets are covered, we also note that both $x_1$ and $x_2$ must have at most two agents on them: assume otherwise, so there are at least three agents on $x_1$ or $x_2$, with utility at most $\frac{2 - \epsilon}{3n}$ by Lemma~\ref{obs:pure-utility-bound}.
    Then, by Lemma~\ref{obs:pure-utility-bound}, one of them can jump to $x_3$ and gain utility at least $\frac{n - 4 + 3\epsilon}{n(n-3)} > \frac{2-\epsilon}{3n}$ (this inequality holds for any $\epsilon$ when $n > 6$). 
    Thus we have four possibilities for a pure Nash equilibrium profile of $(\countv(x_1), \countv(x_2), \countv(x_3))$; we consider each case and show that it is not a pure equilibrium.
    \begin{enumerate}
        \item $(1, 1, n-2):$ agent $i$ on $x_1$ has a profitable deviation to $x_2$. 
        Specifically, $u_i(x_i, \bm{x}_{-i}) = \frac{2 - 2\epsilon}{n}$, while deviating to $x_2$ leads to utility $\frac{2 - \epsilon}{2n} + \frac{2 - 2\epsilon}{2n} = \frac{4 - 3 \epsilon}{2n} > \frac{2 - 2\epsilon}{n}.$ (That is, deviating to $x_2$ allows $i$ to share the utility of both $x_1$ and $x_2$.)
        \item $(1, 2, n-3): $ an agent $i$ on $x_2$ has a profitable deviation to $x_3$. 
        Specifically, $u_i(x_i, \bm{x}_{-i}) = \frac{2-\epsilon}{2n}$, and deviating to $x_3$ leads to utility $\frac{n-4+3\epsilon}{n(n-2)} > \frac{2-\epsilon}{2n}$ for $\epsilon > \frac{4}{n+4}$.
        \item $(2, 1, n-3): $ agent $i$ on $x_1$ has a profitable deviation to $x_2$. 
        Specifically, $u_i(x_i, \bm{x}_{-i}) = \frac{2 - 2\epsilon}{2n}$, while deviating to $x_2$ leads to utility $\frac{2 - \epsilon}{2n} > \frac{2 - 2\epsilon}{2n}.$
        \item $(2, 2, n-4):$ an agent $i$ on $x_1$ has a profitable deviation to $x_3$. 
        Specifically, $u_i(x_i, \bm{x}_{-i}) = \frac{1-\epsilon}{n}$, and deviating to $x_3$ leads to utility $\frac{n-4+3\epsilon}{n(n-3)} > \frac{1-\epsilon}{n}$ for $\epsilon > \frac{1}{n}$.
    \end{enumerate}
\end{proof}
\end{subsection}

% \begin{subsection}{Proof of Lemma \ref{lemma:no-pure-nash}}

% \begin{proof}
%     Consider the Hotelling game with $\X = \{1, 2, 3\}$, $n = 2$, and consumer masses $\lambda_{\pi_1} = \lambda_{\pi_2} = \lambda_{\pi_3} = \frac{1}{3}$ for strict rankings $\pi_1 = (1, 3, 2), \pi_2 = (3, 2, 1),$ and $\pi_3 = (2, 1, 3)$.
%     It follows $P(1) = P(2) = P(3) = \frac{1}{3},$ so that $n = 2 < \frac{1}{\pmin}.$
%     We prove all strategy profiles $(x_1, x_2)$ are not equilibria in cases:
%     \begin{enumerate}
%         \item $x_1 = x_2 = x$ for some $x \in \{1, 2, 3\}.$
%         It follows that the retailers both have utility share $\frac{1}{\countv(x)} \left(\lambda_{\pi_1} + \lambda_{\pi_2} + \lambda_{\pi_3}\right) = \frac{1}{2}$. 
%         Then one can check that there is a location $x' \neq x$ that is ranked before $x$ for two of the three rankings, so either agent can deviate to this location $x'$ and gain utility $\frac{2}{3}.$
%         \item $x_1 \neq x_2.$
%         Then by construction of $\pi_1, \pi_2,$ and $\pi_3$, one of the retailers wins against the other for only one of the three rankings, with utility share $\frac{1}{3}$.
%         WLOG, assume this is retailer $1$. 
%         Retailer $1$ then has a profitable deviation to $x_2$, under which they recover utility $\frac{1}{2}.$
%     \end{enumerate}
% \end{proof}
% \end{subsection}

\begin{subsection}{Proof of Theorem \ref{thm:pure-existence}}

\begin{proof}
    We aim to prove that the strategy profile returned by Algorithm~\ref{alg:pure-construction} is an equilibrium.
    Note that for any $n \geq \frac{2}{\pmin}$, the value of $n_0$  in Algorithm~\ref{alg:pure-construction} satisfies
    \[n_0 = \sum_{x \in \X^*} \left\lfloor \frac{2 P(x)}{\pmin} \right\rfloor \leq 2 \sum_{x \in \X^*} \frac{P(x)}{\pmin} = \frac{2}{\pmin} \leq n.\]
    Thus it suffices to prove that the strategy profile $\bm{x}^{(n)}$ constructed by Algorithm~\ref{alg:pure-construction} for input $n$ is an equilibrium by induction over $n \geq n_0$.
    \
    Consider the base case $n_0$. 
    Then $\bm{x}^{(n_0)}$ is extreme by construction, and for all $x \in \X^*$, $k_{n_0}(x) = \left\lfloor \frac{2 P(x)}{\pmin} \right\rfloor \geq \left\lfloor \frac{2 P(x)}{P(x)} \right\rfloor = 2$. 
    By Lemma~\ref{lem:covered-deviations}, then, any potential best response $x'$ to the strategy profile $\bm{x}^{(n_0)}$ generated by Algorithm~\ref{alg:pure-construction} must satisfy $x' \in \X^*$.

    For ease of notation, let $x_i := x_i^{(n_0)}$.
    By Lemma~\ref{obs:pure-utility-bound}, since $k_{n_0}(x_i) \geq 2$ we have for any potential deviation $x' \neq x_i$, $x' \in \X^*$ that
    \[ 
            u_i(x_i, \bm{x}_{-i}^{(n_0)}) = \frac{P(x_i)}{k_{n_0}(x_i)} = \frac{P(x_i)}{ \left\lfloor \frac{2 P(x_i)}{\pmin} \right\rfloor} \geq \frac{\pmin}{2} > \frac{P(x')}{\left\lfloor \frac{2 P(x')}{\pmin} \right\rfloor + 1} = u_i(x', \bm{x}_{-i}^{(n_0)}).
    \]
    By definition, then, $\bm{x}^{(n_0)}$ is an equilibrium.

    Now suppose we have an equilibrium $\bm{x}^{(n)}$ with assignments $k_n$ for any $n \geq n_0$.
    Algorithm~\ref{alg:pure-construction} constructs a strategy profile $\bm{x}^{(n+1)} = \bm{x}^{(n)} \cup \{x_t^*\}$ for $t=n+1$, i.e. agent $t$ is added to the equilibrium profile $\bm{x}^{(n)}$ with reported position $x_t^*.$
    We aim to prove that $\bm{x}^{(n+1)}$ is an equilibrium.
    Note that $k_{n+1}(x_t^*) = k_n(x_t^*) + 1$ and $k_n(x) = k_{n+1}(x)$ for all $x \neq x_t^*$.
    It follows by Lemma~\ref{obs:pure-utility-bound} that
    \[ 
            u_t(x_t^*, \bm{x}_{-t}^{(n+1)}) = \frac{P(x_t^*)}{k_{n+1}(x_t^*)} = \frac{P(x_t^*)}{k_n(x_t^*) + 1} \geq \frac{P(x)}{k_n(x) + 1} = \frac{P(x)}{k_{n+1}(x) + 1} = u_t(x, \bm{x}_{-t}^{(n+1)}),
    \]
    where the inequality holds by the algorithm's selection of $x_t^*$ in line~\ref{alg:argmax}.
    By symmetry, the inequality also holds for any other agent $j \neq i$ that plays $x_t^*$, so that they also do not have a profitable deviation.
    
    Moreover, for any other agent $j \neq i$ with $x_j \neq x_t^*$ and potential deviation $x' \neq x_j$, $x' \in \X^*$,
    \[ 
            u_j(x_j, \bm{x}_{-j}^{(n+1)}) = \frac{P(x_j)}{k_{n+1}(x_j)} = \frac{P(x_j)}{k_n(x_j)} \geq \frac{P(x')}{k_n(x') + 1} \geq \frac{P(x')}{k_{n+1}(x') + 1} = u_j(x', \bm{x}_{-j}^{(n+1)}).
    \]
    The first inequality holds by the inductive hypothesis since $u_j(x_j, \bm{x}_{-j}^{(n)}) = \frac{P(x_j)}{k_n(x_j)}$ and $u_j(x', \bm{x}_{-j}^{(n)}) = \frac{P(x')}{k_n(x') + 1}$ under equilibrium profile $\bm{x}^{(n)}.$
    By definition, then, $\bm{x}^{(n+1)}$ is an equilibrium.

    % We now prove the second statement: that if $n > \frac{1}{\pmin},$ every pure Nash equilibrium is extreme and covers $\X^*$.
    % Note this assumption implies that $P(x) > \frac{1}{n}$ for all $x \in \X^*$.
    % We first show that any equilibrium $\bm{x}$ covers $\X^*$, i.e. for all $x \in \X^*$, $\countv(x) \geq 1$.
    % Assume not, so that there is some $\hat x \in \X^*$ with $\countv(\hat x) = 0$.
    % Then by Observation~\ref{obs:agent-low-util}, there exists some agent $i$ with utility $u_i(x_i, \bm{x}_{-i}) \leq \frac{1}{n}$.
    % By Lemma~\ref{obs:pure-utility-bound}, agent $i$ can deviate to $\hat x$ and gain utility $u_i(\hat x, \bm{x}_{-i}) \geq \frac{P(\hat x)}{\countv(\hat x) + 1} = P(\hat x) > \frac{1}{n}$.

    We now prove the second statement: that if $n > \frac{2}{\pmin},$ in any pure Nash equilibrium, $\countv(x) \geq 2$ for each $x \in \X^*$.
    Assume not, so that there is some $\hat x \in \X^*$ with $\countv(\hat x) \leq 1$.
    Then by Observation~\ref{obs:agent-low-util}, there exists some agent $i$ with utility $u_i(x_i, \bm{x}_{-i}) \leq \frac{1}{n}$.
    By Lemma~\ref{obs:pure-utility-bound}, agent $i$ can deviate to $\hat x$ and gain utility $u_i(\hat x, \bm{x}_{-i}) \geq \frac{P(\hat x)}{\countv(\hat x) + 1} \geq \frac{P(\hat x)}{2} > \frac{1}{n}$.
\end{proof}
\end{subsection}

\begin{subsection}{Proof of Lemma \ref{lem:pure-lower-bound}}

\begin{proof}
For contradiction, assume in an equilibrium $\bm{x}$ there is an pseudo-target $x \in \X^*$ such that $\countv(x) < \lfloor cn\rfloor$.
By Theorems~\ref{thm:pure-extreme} and~\ref{thm:pure-existence}, the equilibrium profile $\bm{x}$ is extreme and covers all $\X^*$, with at least two players on pseudo-target; and by Observation~\ref{obs:agent-low-util}, there exists some agent $i$ with reported position $x_i = \hat x$ such that $u_i(x_i, \bm{x}_{-i}) \leq \frac{1}{n}$.

Since $n \geq \frac{1}{c} > \frac{2}{\pmin}$, $P(x) > \frac{2}{c} \geq \frac{2}{n}$ for all $x \in \X^*.$
It follows by Lemma~\ref{lem:covered-deviations} that the utility of any agent $j$ playing $x$ (whose existence is guaranteed by Theorem~\ref{thm:pure-extreme}) is $u_j(x, \bm{x}_{-j}) = \frac{P(x)}{\countv(x)} > \frac{P(x)}{\lfloor cn\rfloor} \geq \frac{1}{n}$.
(Thus $x \neq \hat x.$)
If forecaster $i$ deviates to $x$, by Lemma~\ref{lem:covered-deviations} their utility will satisfy $u_i(x, \mathbf{x}_{-i}) = \frac{P(x)}{\countv(x) + 1} > \frac{c}{\lfloor cn\rfloor-1+1} = \frac{1}{n}$; thus $u_i(x, \bm{x}_{-i}) > \frac{1}{n} \geq u_i(x_i, \bm{x}_{-i})$.
Since forecaster $i$ has a strictly profitable deviation to $x$, we have reached a contradiction.
\end{proof}
\end{subsection}

\begin{subsection}{Proof of Lemma \ref{lem:wp-convergence}}

\begin{proof}
Consider any equilibrium profile $\bm{x}$ of $\game$.
For ease of notation, we let $u_{\bm{x}}(x)=\frac{P(x)}{\countv(x)}$ and  $\hat{u}_{\bm{x}}(x')=\frac{P(x')}{\countv(x')+1}$ for any $x' \neq x$, $x' \in \X^*$.
By Theorems~\ref{thm:pure-extreme} and~\ref{thm:pure-existence}, any equilibrium is extreme with at least two agents on each pseudo-target. 
Thus by Lemmas~\ref{obs:pure-utility-bound} and~\ref{lem:covered-deviations}, $u_i(x, x_{-i}) = u_{\bm{x}}(x)$ and for any potential deviation $x' \in \X^*$ for agent $i$, $u_i(x', x_{-i}) = \hat{u}_{\bm{x}}(x')$.

Consider any two pseudo-targets $x, x' \in \X^*$.
Then in equilibrium there exist two agents $i, j$ with $x_i = x$ and $x_j = x'$ by Lemma~\ref{lem:pure-lower-bound}.
For $\bm{x}$ to be an equilibrium, we must have $u_{\bm{x}}(x) \geq \hat{u}_{\bm{x}}(x') = \frac{P(x')}{\countv(x') + 1}$ and $u_{\bm{x}}(x') \geq \hat{u}_{\bm{x}}(x) = \frac{P(x)}{\countv(x) + 1}$.
Thus for all $x \neq x'$,
\begin{align}
\frac{u_{\bm{x}}(x)}{u_{\bm{x}}(x')} &\leq \frac{P(x)}{\countv(x)} \cdot \frac{\countv(x) + 1}{P(x)} \label{eq:upper-bound-wp} \\
&= \frac{\countv(x)+1}{\countv(x)} \nonumber \\
&\leq \frac{\lfloor c n \rfloor + 1}{\lfloor c n \rfloor}  \nonumber,
\end{align}
where the last inequality follows from Lemma~\ref{lem:pure-lower-bound}.
We can use the same argument to show that for all $x \neq x'$,
\begin{align}
\frac{u_{\bm{x}}(x')}{u_{\bm{x}}(x)} &\geq \frac{\countv(x)}{P(x)} \cdot \frac{P(x)}{\countv(x) + 1} \label{eq:lower-bound-wp} \\
&= \frac{\countv(x)}{\countv(x)+1} \nonumber \\
&\geq \frac{\lfloor c n \rfloor}{\lfloor c n \rfloor + 1} \nonumber.
\end{align}
where the last inequality follows since $\countv(x) \geq \lfloor c n \rfloor$ by Lemma~\ref{lem:pure-lower-bound}, and the function $f(z) = \frac{z}{z+1}$ is increasing for $z \geq 1.$
Now note that by Observation~\ref{obs:agent-low-util}, there exists some $x'$ such that $u_{\bm{x}}(x') \leq \frac{1}{n}$, and some $x$ such that $u_{\bm{x}}(x) \geq \frac{1}{n}.$
The result follows by plugging these bounds into Inequalities~\ref{eq:upper-bound-wp} and~\ref{eq:lower-bound-wp} for $u_{\bm{x}}(x')$ and $u_{\bm{x}}(x)$.

\end{proof}
\end{subsection}

\begin{subsection}{Proof of Proposition~\ref{prop:pure-non-extreme}}

\begin{proof}
    Consider the forecasting setting with $\X = [0,1]$, $\Y = \X^* = \{0,1\}$, $P(0) = P(1) = \frac{1}{2}$ and $n = \frac{1}{\pmin} = 2$.
    Then we show it is an equilibrium for both agents $i$ and $j$ to play $x_i = x_j = \frac{1}{2}.$
    Note first that $u_i(x_i, \bm{x}_{-i}) = \frac{1}{n}$ for both agents $i$.
    WLOG, consider one of the agents $i$ deviating to a point $x' \in (\tfrac{1}{2}, 1].$
    Note that $\mathbbm{1}_{x'\in X_{\min}((x', x_j), 0)} = 0$ since the other player is closer to $x^*(0),$ and $| X_{\min}((x', x_j), 1)| = 1$ since $x'$ is closer to $x^*(1)$.
    Then that agent $i$ will have utility share $\frac{1}{2}$, since
    \[
    u_i(x', \bm{x}_{-i}) = \E_{y \sim Q} \frac{\mathbbm{1}_{x'\in X_{\min}((x', x_j), y)}}{| X_{\min}((x', x_j), y)|} = P(1) = \frac{1}{2}.
    \]
    A symmetric argument holds if $x' \in [0, \tfrac{1}{2}).$
    Thus $\bm{x} = (\tfrac{1}{2}, \tfrac{1}{2})$ is an equilibrium, which is not extreme and does not cover $\X^*$.
\end{proof}
\end{subsection}

\section{Proofs omitted from \S~\ref{sec:mixed-equilibria}} \label{appendix:mixed-equilibria}

\subsection{Existence of equilibria}

We prove that for any symmetric, constant-sum game with pure strategy set corresponding to a compact Hausdorff space, a mixed symmetric equilibrium exists.
This immediately implies Theorem~\ref{thm:mixed-existence}.

\begin{observation} \label{obs:constant-symmetric-game}
    Consider a symmetric game that is constant-sum for some constant $c$.
    Then the utility $u_i$ of any agent $i$ over any symmetric mixed strategy profile $\sigma$ is $u_i(\sigma, \bm{\sigma}_{-i}) = \frac{c}{n}$.
\end{observation}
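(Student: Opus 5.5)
The plan is to use only the two defining properties, symmetry and constant-sum, together with linearity of expected utility in the product measure $\bm{\sigma}=(\sigma,\ldots,\sigma)$.

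\textbf{Step 1 (constant sum passes to mixed profiles).} For every pure profile $\bm{x}$ we have $\sum_{i\in[n]} u_i(x_i,\bm{x}_{-i}) = c$. Taking expectations over $\bm{x}\sim\sigma^{\otimes n}$ and using linearity of expectation gives
\[
\sum_{i\in[n]} u_i(\sigma,\bm{\sigma}_{-i}) = \E_{\bm{x}\sim\sigma^{\otimes n}}\Bigl[\sum_{i} u_i(x_i,\bm{x}_{-i})\Bigr] = c .
\]
This requires the utilities to be measurable and bounded, which holds in our setting since $u_i\in[0,1]$.

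\textbf{Step 2 (all players get the same payoff).} Fix players $i\neq j$ and let $\pi$ be the transposition swapping $i$ and $j$. Symmetry of the game gives, for every pure profile,
\[
u_j(x_1,\ldots,x_n) = u_i(x_{\pi(1)},\ldots,x_{\pi(n)}).
\]
Under $\bm{x}\sim\sigma^{\otimes n}$ the coordinates are i.i.d., so the permuted vector $(x_{\pi(1)},\ldots,x_{\pi(n)})$ has the same law as $\bm{x}$. Taking expectations of both sides therefore yields $u_j(\sigma,\bm{\sigma}_{-j}) = u_i(\sigma,\bm{\sigma}_{-i})$.

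\textbf{Step 3 (conclude).} By Step 2 all $n$ payoffs equal a common value $v$. By Step 1, $nv = c$, so $v = c/n$.

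There is no real obstacle. The only point needing care is Step 2: it uses exchangeability of the i.i.d. product measure, and this is exactly where symmetry of the mixed profile is needed. For a non-symmetric profile the payoffs could differ across players.
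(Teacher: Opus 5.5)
Your proof is correct and follows the same route as the paper, which argues in one line that symmetry forces every agent's payoff under a symmetric profile to be equal and the constant-sum property forces these payoffs to total $c$, so each is $c/n$. Your Steps 1 and 2 make explicit the two facts the paper leaves implicit: linearity of expectation carries the constant-sum identity over to mixed profiles, and exchangeability of $\sigma^{\otimes n}$ carries symmetry over to equal expected payoffs.
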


\begin{theorem}
Any symmetric, constant-sum game with pure strategy set corresponding to a compact Hausdorff space has a mixed symmetric equilibrium. 
\end{theorem}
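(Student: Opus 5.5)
The plan is to apply Reny's theorem on symmetric discontinuous games \citep{reny1999existence}. It states that a compact, quasiconcave, \emph{diagonally better-reply secure} symmetric game has a symmetric pure equilibrium. We apply it to the mixed extension, with strategy space $\Delta(\X)$ and the weak-* topology. Since $\X$ is compact Hausdorff, $\Delta(\X)$ of regular Borel probability measures is compact by Banach--Alaoglu and Riesz. It is convex, and it sits in a locally convex topological vector space. Payoffs in the mixed extension are multilinear, so each $u_i(\cdot,\bm{\sigma}_{-i})$ is linear and hence quasiconcave in player $i$'s own strategy. Symmetry of the mixed extension is inherited from symmetry of the pure game. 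A symmetric pure equilibrium of the mixed extension is exactly a symmetric mixed equilibrium of the original game. The case of finite $\X$ is a special case; it also follows directly from Nash's theorem for symmetric finite games.

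The main step is verifying diagonal better-reply security. Write $v(\sigma,\tau)=u_1(\sigma,\tau,\dots,\tau)$ for a single player's payoff against $n-1$ copies of $\tau$. The condition to check is the following. Take any $\sigma$ that is not a symmetric equilibrium, and any payoff $\alpha$ in the closure of the diagonal payoff graph at $\sigma$. Then there must be a deviation $\hat\sigma$ and some $\epsilon>0$ such that $v(\hat\sigma,\tau)\ge\alpha+\epsilon$ for all $\tau$ in a neighborhood of $\sigma$. Here Observation~\ref{obs:constant-symmetric-game} does the essential work. On the diagonal, $v(\tau,\tau)=c/n$ identically, so the diagonal payoff function is constant and in particular continuous. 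The only admissible value is therefore $\alpha=c/n$, and no jump of payoffs along the diagonal has to be handled. Since $\sigma$ is not a symmetric equilibrium, some $\hat\sigma$ satisfies $v(\hat\sigma,\sigma)>c/n$. The remaining task is to make this strict gain robust to perturbing the opponents from $\sigma$ to nearby $\tau$.

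This robustness is the expected obstacle, because $v(\hat\sigma,\cdot)$ need not be lower semicontinuous when ties occur. The fix I would try is Reny's "payoff secure" route. First, show the mixed extension is payoff secure along the diagonal. Then combine this with the continuity of the diagonal sum $\sum_i u_i = c$, which is constant, using Reny's Proposition~5.1-style lemma: payoff security together with upper semicontinuity of the sum of payoffs implies better-reply security. Its symmetric analogue gives diagonal better-reply security. If payoff security is hard to establish in full generality, I would at least argue it in the form Reny needs. Deviating from $\hat\sigma$ to a slightly perturbed $\hat\sigma'$ can only lose an arbitrarily small amount against nearby $\tau$. The reason is that tie events involving $\hat\sigma'$ can be made null by shifting $\hat\sigma'$ off atoms of $\sigma$, and the constant-sum structure controls the total mass that ties can redistribute. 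With diagonal better-reply security in hand, Reny's theorem yields a symmetric equilibrium, and we conclude.
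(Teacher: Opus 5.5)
\textbf{Gap: diagonal payoff security is never established, and in this generality it cannot be.}
Your outline follows the same route as the paper: Reny's symmetric existence theorem applied to the mixed extension, with Observation~\ref{obs:constant-symmetric-game} making the diagonal payoff $v(\tau,\tau)\equiv c/n$ constant. The step you flag as the obstacle is a genuine gap.

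A constant diagonal payoff only handles the upper-semicontinuity half of diagonal better-reply security. Diagonal payoff security is a lower-semicontinuity-type requirement: a single $\bar\sigma$, fixed in advance, must secure $c/n-\epsilon$ against every $\tau$ near $\sigma$. Symmetry and constant-sum say nothing about this. Your repair via ``tie events'' and ``shifting off atoms'' uses structure of position-optimization games, not of an arbitrary symmetric constant-sum game.

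In fact the theorem is false as stated. Take $n=2$ and $\X=[0,1]$, and write $x\succ y$ if $x<1$ and either $y=1$ or $x>y$. So the higher number wins, except that $1$ loses to everything. The winner gets $1$ and a tie gives $\tfrac12$ each. This game is symmetric, constant-sum with $c=1$, and has Borel payoffs on a compact Hausdorff space.
\begin{itemize}
\item Against any $\sigma$, a pure $x<1$ earns at least $\sigma([0,x))+\sigma(\{1\})$. This tends to $1$ as $x\uparrow 1$, so it exceeds the diagonal value $\tfrac12$.
\item Hence there is no symmetric mixed equilibrium; indeed there is no mixed equilibrium at all.
\item The failure sits exactly at your robustness step. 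At $\sigma=\delta_1$, every $\bar\sigma$ satisfies $v(\bar\sigma,\delta_{1-\eta})\le\bar\sigma([1-\eta,1))\to 0$, while $\delta_{1-\eta}\to\delta_1$ weak-$*$. So not even $\tfrac12-\epsilon$ can be secured near $\delta_1$.
\end{itemize}

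\textbf{The paper's proof has the same hole.}
It declares the game payoff secure because, once the opponents move to $\sigma'$, player $i$ can answer with $\hat\sigma=\sigma'$ and get $c/n$. But Reny's (diagonal) payoff security requires one $\bar\sigma$, chosen before the perturbation, that secures $c/n-\epsilon$ against every $\sigma'$ in a neighborhood of $\sigma$. If the securing strategy may depend on $\sigma'$, the condition holds trivially for every symmetric constant-sum game, including the counterexample above.

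What is missing in both arguments is an additional hypothesis on payoffs that yields diagonal payoff security of the mixed extension, for example a uniform payoff-security condition on the pure game. For $\game$ this would then have to be verified from properties of $d$ relative to the topology on $\X$. That is where your tie-splitting and atom-shifting idea would belong, but it has to become an actual argument, and the model as stated imposes no such regularity on $d$.
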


\begin{proof}
    By Observation~\ref{obs:constant-symmetric-game}, the utility function $u_i(\sigma) = u_i(\sigma, \bm{\sigma}_{-i}) = \frac{c}{n}$ over symmetric strategy $\sigma$ is constant, and thus continuous. 
    Now, fix any agent $i$. 
    For any mixed symmetric strategy $\sigma$ and every $\epsilon > 0$, if all other agents deviate on the diagonal to some $\sigma'$, agent $i$ can secure utility $u_i(\hat \sigma, \bm{\sigma'}_{-i}) = \frac{c}{n} = u_i(\sigma, \bm{\sigma}_{-i})$ by deviating to $\hat \sigma = \sigma'$.
    By definition, then, the game is payoff secure (defined in~\citet{reny1999existence}), so all the conditions in Corollary 5.3 of~\citet{reny1999existence} are met and there exists a mixed symmetric equilibrium.
\end{proof}

\subsection{Characterization of equilibria}

\begin{lemma} \label{lem:deviation-lower-bound}
    For any mixed symmetric strategy profile $\bm{\sigma}$, where $\sigma_x \in (0, 1]$, the utility of deviating to any position $x \in X^*$  satisfies 
    \begin{equation}
        u_i(x, \bm{\sigma}_{-i}) \geq \frac{p_x}{n\sigma_x}(1-\sigma_x^n-(1-\sigma_x)^n)+\frac{1}{n}\sigma_x^{n-1}; 
    \end{equation}
    and when $\sigma_x=0$, $u_i(x,\bm{\sigma}_{-i})=p_x$.
\end{lemma}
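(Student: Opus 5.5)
The plan is to split the expected utility along the events $E_1,E_2,E_3$ from the proof sketch of Theorem~\ref{thm:mixed-convergence}. We drop $E_3$, since its contribution is nonnegative, and compute the $E_1$ and $E_2$ contributions exactly. Fix agent $i$ playing $x\in\X^*$. The other $n-1$ agents draw positions i.i.d.\ from $\sigma$, independently of $y\sim Q$. Let $K$ be the number of other agents on $x$, so $K\sim\mathrm{Bin}(n-1,\sigma_x)$.

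\emph{The $E_1$ term.} On $\{x^*(y)=x\}$, which has probability $p_x$, Condition~\ref{cond:ties-finite-x} lets us ignore the $Q$-null set where $x^*(y)$ is not a singleton. On this event $x$ uniquely minimizes $d(\cdot,y)$ over $\X$, so $X_{\min}$ consists exactly of the $K+1$ copies of $x$. Agent $i$ therefore gets $\frac{1}{K+1}$. Restricting to $K\le n-2$ gives the contribution $p_x\,\E[\frac{1}{K+1}\ones_{K\le n-2}]$.

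The key computation is the standard binomial identity
\[
\E\Bigl[\tfrac{1}{K+1}\Bigr]=\sum_{k=0}^{n-1}\binom{n-1}{k}\frac{\sigma_x^k(1-\sigma_x)^{n-1-k}}{k+1}=\frac{1}{n\sigma_x}\sum_{j=1}^{n}\binom{n}{j}\sigma_x^j(1-\sigma_x)^{n-j}=\frac{1-(1-\sigma_x)^n}{n\sigma_x},
\]
which uses $\binom{n-1}{k}\frac{1}{k+1}=\frac{1}{n}\binom{n}{k+1}$. Subtracting the $k=n-1$ term, $\frac{\sigma_x^{n-1}}{n}=\frac{\sigma_x^n}{n\sigma_x}$, gives exactly $\frac{p_x}{n\sigma_x}(1-\sigma_x^n-(1-\sigma_x)^n)$. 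This step needs $\sigma_x>0$.

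\emph{The $E_2$ term.} When $K=n-1$, which has probability $\sigma_x^{n-1}$, every agent sits at $x$. For every $y$ the minimizing set is then all $n$ agents, so agent $i$ gets $\frac1n$ regardless of $y$. This contributes $\frac1n\sigma_x^{n-1}$, and the disjointness of $E_1$ and $E_2$ (one has $K\le n-2$, the other $K=n-1$) lets us add the two terms.

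\emph{The $E_3$ term.} On the remaining event, $x^*(y)\ne x$ with $K\le n-2$, the share $\pi_i\ge0$, so dropping it yields the inequality.

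For $\sigma_x=0$, almost surely no other agent is on $x$. Agent $i$ then wins every $y$ with $x^*(y)=x$ outright, giving utility at least $p_x$. I expect this case to be the main obstacle, because matching the claimed equality $u_i=p_x$ additionally requires that $i$ win nothing from targets with $x^*(y)\ne x$. That does not follow from $\sigma_x=0$ alone. I would either read the claim as the limiting value of the bound as $\sigma_x\to0$, which is $p_x$, or establish it under coverage of $\X^*\setminus\{x\}$, which holds almost surely in the regime where Lemma~\ref{lem:mixed-extreme} applies. The safe statement to prove is $u_i(x,\bm\sigma_{-i})\ge p_x$, and that is all the downstream arguments use.
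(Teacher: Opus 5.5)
Your derivation of the inequality follows the paper's own proof. You use the same split into $E_1,E_2,E_3$, the same identity $\frac{1}{k+1}\binom{n-1}{k}=\frac{1}{n}\binom{n}{k+1}$, and you drop $E_3$ as nonnegative. Summing to $n-1$ and then subtracting the $k=n-1$ term, instead of summing to $n-2$ directly, is only a cosmetic difference.

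On the $\sigma_x=0$ clause your suspicion is justified, and the paper does not actually establish the equality. Its L'H\^opital step computes the limit as $\sigma_x\to 0$ of the right-hand side of the bound, which is indeed $p_x$ for $n\ge 2$. It does not compute anything about $u_i$ itself, which depends on all of $\sigma$ and not just on $\sigma_x$.

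The equality is in fact false in general. Take the forecasting game with $m=2$, let the other $n-1$ agents all play $(1,1)$, and let agent $i$ play $x=(0,0)$, so $\sigma_x=0$. The targets $(1,0)$ and $(0,1)$ are at $\ell_2$-distance $1$ from both positions, so $i$ ties with everyone on them. This gives $u_i(x,\bm{\sigma}_{-i})=Q(0,0)+\frac{1}{n}\bigl(Q(1,0)+Q(0,1)\bigr)$, which exceeds $p_x$ whenever $Q(1,0)+Q(0,1)>0$.

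What is true, and what your argument shows, is that at $\sigma_x=0$ the $E_1\cup E_2$ contribution equals exactly $p_x$, so $u_i\ge p_x$. That inequality is all Proposition~\ref{prop:mixed-lower-bound} uses. The exact value $p_x$ of that contribution is what Lemma~\ref{lem:lower-bound-util} needs for its own $\sigma_x=0$ clause.

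One correction to your aside: coverage does not hold almost surely. Lemma~\ref{lem:mixed-extreme} gives it only with probability at least $1-\frac{1}{n^2}$, and only for equilibria, where $\sigma_x=0$ cannot occur once $n>1/\pmin$ by Proposition~\ref{prop:mixed-lower-bound}. So that route yields $u_i\le p_x+\frac{1}{n^2}$, not equality.
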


\begin{proof}
    Let agent $i$ play pseudo-position $x \in \X^*.$
    We compute the utility conditioned on the events $E_1,E_2,$ and $E_3$ defined in \S\ref{subs:events}.
    Formally, note that by the law of total expectation,
    \[u_i(x, \bm{\sigma}_{-i}) = \sum_{\ell} \Pr[E_{\ell}] \E_{\substack{ y \sim Q \\ x_{-i} \sim \sigma_{-i}}} \left[ \pi_i(x_i, \bm{x}_{-i}, y) \mathrel{\big|} E_l \right]. \]
    We calculate each term in the sum for the exhaustive events below. 
    \begin{enumerate}
        \item Consider event $E_1$.
        Then $\mathbbm{1}_{x\in X_{\min}(x,y)} = 1$,  and  when $k$ other agents are on $x$, since $Q(\{y: x^*(y)| > 1\}) = 0$, $|X_{\min}(\bm{x},y)| = k+1$.
        Let $B(n; k, p)$ denote the binomial probability mass function. 
        It follows that
        \begin{align}
            \Pr[E_1] \E_{\substack{ y \sim Q \\ x_{-i} \sim \sigma_{-i}}} \left[ \frac{\mathbbm{1}_{x\in X_{\min}(x,y)}}{|X_{\min}(x,y)|} \; \mathrel{\Big|} \;  E_1 \right] &= P(x) \sum_{k=0}^{n-2} \frac{B(k; n-1, \sigma_x)}{k+1}, \label{eq:e1}
        \end{align}
        where we use that $\Pr[x^*(Y) = x] = P(x)$ and $\E [ \frac{1}{|X_{\min}(\bm{x},y)|} \mid E_1] = \frac{\sum_{k=0}^{n-2} \frac{B(k;n-1, \sigma_x)}{k+1}}{\Pr[E_1]}$, since the probability that $k$ of $n-1$ agents draw $x$ is $B(k; n-1, \sigma_x)$.
       \item For the second event, we have
       \begin{equation} \label{eq:e2}
           \Pr[E_2] \E \left[\pi_i(x_i, \bm{x}_{-i}, y)) \mid E_2 \right] = \frac{\sigma_x^{n-1}}{n},
       \end{equation}
       because the probability $n-1$ agents play $x$ is $\sigma_x^n$, and the utility of agent $i$ conditioned on everyone else playing $x$ is $\frac{1}{n}.$
       \item We simply lower bound the last term: 
       \begin{equation} \label{eq:e3}
           \Pr[E_3] \E[ \pi_i(x_i, \bm{x}_{-i}, y) \mid E_3] \geq 0.
       \end{equation}
    \end{enumerate}
    
    We now simplify Equation~\ref{eq:e1} to explicitly construct the expression for a lower bound on $u_i(x,\bm{\sigma}_{-i})$, which excludes the third term.
    We have
    \begin{align}
        p_x \sum_{k=0}^{n-2} \frac{B(k; n-1, \sigma_x)}{k+1} &= p_x\left(\sum_{k=0}^{n-2}\frac{\binom{n-1}{k}\sigma_x^k(1-\sigma_x)^{n-1-k}}{k+1}\right) \nonumber \\
        &= \frac{p_x}{n}\sum_{k=0}^{n-2}\binom{n}{k+1}\sigma_x^k(1-\sigma_x)^{n-(k+1)} && \text{ (since $\tfrac{1}{k+1}\tbinom{n-1}{k} = \tfrac{1}{n}\tbinom{n}{k+1}$}) \nonumber \\
        &= \frac{p_x}{n\sigma_x}\sum_{k=0}^{n-2}\binom{n}{k+1}\sigma_x^{k+1}(1-\sigma_x)^{n-(k+1)} \nonumber \\
        &= \frac{p_x}{n\sigma_x}(1-\sigma_x^n-(1-\sigma_x)^n) && \text{(by the binomial theorem.)} \label{eq:final-e1}
    \end{align}
    
    % $\frac{1}{k+1}\binom{n-1}{k} = \frac{1}{k+1}\frac{(n-1)!}{k!(n-1-k)!}$ can be rewritten as $\frac{1}{k+1}\frac{(n-1)!}{k!(n-1-k)!}$; then, rearranging the denominator to $\frac{(n-1)!}{k!(n-(k+1)!}$ we absorb the first term into the second, giving us$\frac{(n-1)!}{(k+1)!(n-(k+1)!)}$.
    % Factoring out $\frac{1}{n}$ from this quantity then gives us $\frac{1}{n}\frac{n!}{(k+1)!(n-(k+1)!}$, i.e., $\frac{1}{n}\binom{n}{k+1}$, to obtain
    % \[\frac{p_x}{n}\sum_{k=0}^{n-2}\binom{n}{k+1}\sigma_x^k(1-\sigma_x)^{n-(k+1)}.\]
    Now, combining Equation~\ref{eq:final-e1} with Equation~\ref{eq:e2} and the lower bound in Inequality~\ref{eq:e3}, we obtain
    \[u_i(x,\bm{\sigma}_{-i})\geq\frac{p_x}{n\sigma_x}(1-\sigma_x^n-(1-\sigma_x)^n)+\frac{1}{n}\sigma_x^{n-1}.\]

    When $\sigma_x=0$, take $\lim_{\sigma_x\rightarrow0}u(x,\bm{\sigma}_{-i})$; applying L'Hopital's rule once gives $\lim_{\sigma_x\rightarrow0}u(x,\bm{\sigma}_{-i})=p_x$.
\end{proof}

\begin{lemma}\label{lem:G-monotonicity}
    Let $G(\sigma_x)$ be defined as in Equation \ref{eq:G}.
    Then $G(\sigma_x)$ is continuously differentiable and monotone increasing for $\sigma_x\in(0,1).$
\end{lemma}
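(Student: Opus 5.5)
The plan is to cancel the common factors in the numerator and denominator of $G$ from Equation~\ref{eq:G}, so that $G$ becomes a ratio of two polynomials that are each visibly monotone. Write $s := \sigma_x \in (0,1)$ and $t := 1-s \in (0,1)$. Both numerator and denominator of $G$ factor through the geometric-series identity $1 - z^{n-1} = (1-z)A(z)$, where
\[
A(z) := \sum_{k=0}^{n-2} z^k .
\]

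\textbf{Step 1 (factor the numerator).} We have $s - s^n = s(1-s^{n-1}) = s(1-s)A(s) = s\,t\,A(s)$.

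\textbf{Step 2 (factor the denominator).} Since $1-s = t$, the denominator is $t - t^n = t(1-t^{n-1}) = t(1-t)A(t) = s\,t\,A(t)$.

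\textbf{Step 3 (cancel).} The factor $s t$ is strictly positive on $(0,1)$, so it cancels and we obtain the closed form
\[
G(s) = \frac{A(s)}{A(1-s)}, \qquad s \in (0,1).
\]

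\textbf{Step 4 (regularity).} The denominator satisfies $A(1-s) \geq 1 > 0$, because its constant term is $1$ and all its terms are nonnegative. Hence $G$ is a ratio of polynomials with a nonvanishing denominator on $(0,1)$, and in fact on all of $[0,1]$. It is therefore $C^\infty$ there, and in particular continuously differentiable.

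\textbf{Step 5 (monotonicity).} This is the only place any care is needed.
\begin{itemize}
\item On $(0,1)$, $A$ is positive and increasing, since $A'(z) = \sum_{k=1}^{n-2} k z^{k-1} > 0$ whenever $n \geq 3$.
\item Consequently $s \mapsto A(s)$ is increasing, while $s \mapsto A(1-s)$ is positive and decreasing.
\item So $G$ is a positive increasing function divided by a positive decreasing one, and is strictly increasing.
\item Explicitly, $G'(s) = \bigl(A'(s)A(1-s) + A(s)A'(1-s)\bigr)/A(1-s)^2$. Both terms in the numerator are strictly positive for $n \geq 3$, so $G'(s) > 0$.
\end{itemize}

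\textbf{Degenerate case $n=2$.} Here $A \equiv 1$, so $G \equiv 1$. This is constant and hence monotone (non-decreasing) in the weak sense. The paper's later use concerns large $n$ (for example $n > 43$), so strict monotonicity for $n \geq 3$ is what matters, and I would state it that way.

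\textbf{Inverse $G^{-1}$.} The closed form also gives $G(0^+) = 1/(n-1)$ and $G(1^-) = n-1$. Together with strict monotonicity and continuity, this shows $G$ is a bijection onto its range, which justifies the later use of $G^{-1}$.
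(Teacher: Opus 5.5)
Your algebra is correct for the formula as printed in Equation~\eqref{eq:G}, but that display contains a typo. As a result, you have proved monotonicity of a different function from the one the paper needs.

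The intended denominator is $1-\sigma_x^n-(1-\sigma_x)^n$, not $1-\sigma_x-(1-\sigma_x)^n$. Three places in the paper confirm this:
\begin{itemize}
\item it is what one gets by rearranging $g(p_x,\sigma_x)\le\frac1n$ into $p_x\le G(\sigma_x)$;
\item it is the function inside $F$ in Lemma~\ref{lem:G-equality};
\item it is the function the paper differentiates in its own proof of this lemma, whose $G'$ has denominator $1-(1-\sigma_x)^n-\sigma_x^n$.
\end{itemize}
Your Step~2 (``the denominator is $t-t^n$'') therefore concerns the wrong object. Your closing paragraph is also false for the paper's purposes. Your $G$ has range $(\frac{1}{n-1},n-1)$ and $G(\frac12)=1$. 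The later arguments instead need $G$ to map $(0,1)$ onto $(\frac1n,1-\frac1n)$ with $G(\frac12)=\frac12$, so that $G^{-1}$ exists on $[\frac1n,1-\frac1n]$ and stays within $\frac1n$ of the identity. Your $G$ exceeds $1$ while being compared with a probability $p_x$; that should have flagged the misprint.

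Your factorization repairs this in one line. Since $s+t=1$,
\[
1-s^n-t^n=(s-s^n)+(t-t^n)=st\,\bigl(A(s)+A(t)\bigr),
\qquad\text{hence}\qquad
G(s)=\frac{A(s)}{A(s)+A(1-s)}=\frac{H(s)}{1+H(s)},
\]
where $H(s)=A(s)/A(1-s)$ is exactly the function you analyzed. The map $h\mapsto h/(1+h)$ is strictly increasing on $(0,\infty)$, and $A(s)+A(1-s)\ge 2$. So your Steps~4 and~5 carry over unchanged: strictly for $n\ge 3$, while for $n=2$ one gets $G\equiv\frac12$. The closed form also yields $G(0^+)=\frac1n$, $G(1^-)=1-\frac1n$ and $G(1-s)=1-G(s)$ immediately.

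With this correction, your route genuinely differs from the paper's. The paper differentiates $G$ directly and compares the signs of the derivatives of numerator and denominator on $\bigl(0,(\frac1n)^{1/(n-1)}\bigr)$ and on its complement. Your cancellation is more elementary and more complete: such a sign comparison does not by itself control the quotient, since on $(0,\frac12)$ the denominator is increasing too. It also delivers the endpoint values and the symmetry about $(\frac12,\frac12)$, which the paper has to establish separately in Lemma~\ref{lem:G-equality}.
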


\begin{proof}
    First, we show that $G(\sigma_x)$ is continuously differentiable for $\sigma_x\in(0,1)$.
    We calculate $G'(\sigma_x)$ as
    \[G'(\sigma_x)=\frac{1-n\sigma_x^{n-1}}{1-(1-\sigma_x)^n-\sigma_x^n}-\frac{\left(n(1-\sigma_x)^{n-1}-n\sigma_x^{n-1}\right)(\sigma_x-\sigma_x^n)}{\left((1-(1-\sigma_x)^n-\sigma_x^n)\right)^2}.\]
    
    Observe that in both terms, the denominator is a continuous function for all $\sigma_x\in(0,1)$.
    
    In the first term, the numerator is continuous for all $\sigma_x\in(0,1)$.
    Then, the quotient of two continuous functions is continuous for all values of $\sigma_x$ where the divisor is not 0; thus, the first term is a continuous function for $\sigma_x\in(0,1)$.
    In the second term, the numerator is the product of two continuous functions; the second term is a continuous function for $\sigma_x\in(0,1)$ for similar reasons to the first term.
    Finally, the difference of two continuous functions is continuous; thus, $G'(\sigma_x)$ is continuous.

    Next, we show that $G(\sigma_x)$ is monotone increasing for $\sigma_x\in(0,1)$.
    Let $G_{\text{num}}'(\sigma_x)=1-n\sigma_x^{n-1}$ be the derivative of the numerator of $G(\sigma_x)$, and let $G_{\text{den}}'(\sigma_x)=n\left((1-\sigma_x)^{n-1}-\sigma^{n-1}\right)$ be the derivative of the denominator of $G(\sigma_x)$.
    For $\sigma_x\in\left(0,(\frac{1}{n})^{\frac{1}{n-1}}\right)$, $G_{\text{num}}'(\sigma_x)>0$, and the denominator of $G$ is less than 1, so $G(\sigma_x)$ is increasing for $\sigma_x\in\left(0,(\frac{1}{n})^{\frac{1}{n-1}}\right)$.
    Then, for $\sigma_x\geq(\frac{1}{n})^{\frac{1}{n-1}}$, observe that both $G_{\text{den}}'(\sigma_x)$ and $G_{\text{num}}'(\sigma_x)$ are negative, and $G_{\text{den}}'(\sigma_x)<G_{\text{num}}'(\sigma_x)$.
    It follows that $G(\sigma_x)$ is increasing for $\sigma\in(0,1)$.
\end{proof}

\begin{lemma} \label{lem:G-equality}
        Let $G(\sigma_x)$ be defined as in Equation \ref{eq:G}. 
        Then $G^{-1}(p_x)$ exists for $p_x \in \left[\frac{1}{n}, 1 - \frac{1}{n}\right],$ and 
        \[\sup_{p_x \in \left[\frac{1}{n}, 1 - \frac{1}{n}\right]} |G^{-1}(p_x) - p_x| = \frac{1}{n}. \]
\end{lemma}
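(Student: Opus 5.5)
The plan is to work with $G$ as a function of $\sigma_x$ on $[0,1]$, bound $|G(\sigma)-\sigma|$ directly, and then transfer that bound to $G^{-1}$. One caveat up front: the relation that actually comes out of Lemma~\ref{lem:deviation-lower-bound} is $p_x = g$-equality, i.e. setting $\frac{p_x}{n\sigma}(1-\sigma^n-(1-\sigma)^n)+\frac1n\sigma^{n-1}=\frac1n$. That gives
\[
G(\sigma)=\frac{\sigma-\sigma^n}{1-\sigma^n-(1-\sigma)^n},
\]
with denominator $D(\sigma):=1-\sigma^n-(1-\sigma)^n$. With the denominator as printed in \eqref{eq:G}, the limits at $0$ and $1$ would be $\tfrac1{n-1}$ and $n-1$, so I will use $D$. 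Write $s=\sigma$ and $a=1-s$.

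\textbf{Step 1 (range and inverse).} Expand near the endpoints. As $s\to0$, $D\approx ns$, so $G(s)\to\frac1n$. As $s\to1$, with $t=1-s$, the numerator is $\approx(n-1)t$ and $D\approx nt$, so $G(s)\to1-\frac1n$. Extending $G$ continuously by $G(0)=\frac1n$ and $G(1)=1-\frac1n$, Lemma~\ref{lem:G-monotonicity} (monotonicity and continuity) makes $G$ a continuous strictly increasing bijection $[0,1]\to[\frac1n,1-\frac1n]$. Hence $G^{-1}$ exists on that interval.

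\textbf{Step 2 (symmetry).} Since $D(s)=D(1-s)$, we get $1-G(s)=\frac{D-s+s^n}{D}=\frac{a-a^n}{D}=G(1-s)$. Therefore $h(s):=G(s)-s$ satisfies $h(1-s)=-h(s)$, and it suffices to treat $s\in(0,\frac12]$. A direct computation gives
\[
h(s)=\frac{s\,a\,(a^{n-1}-s^{n-1})}{D}\ \ge 0 \quad\text{on } (0,\tfrac12].
\]

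\textbf{Step 3 (key inequality, the main step).} We need $h(s)\le\frac1n$, i.e.
\[
n s a^n-n s^n a\le 1-a^n-s^n .
\]
The approach is the geometric-sum identity $1-a^n=s\sum_{k=0}^{n-1}a^k$. Since $a\le 1$, each $a^k\ge a^{n-1}$, so $1-a^n\ge nsa^{n-1}$. This yields
\[
1-a^n-nsa^n\ge ns a^{n-1}(1-a)=ns^2a^{n-1}\ge0 .
\]
It then remains to show $s^n-ns^na=s^n(1-na)\le ns^2a^{n-1}$. This holds because $a\ge\frac12$ forces $1-na\le0$ for $n\ge2$.

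\textbf{Step 4 (transfer and attainment).} From Steps 2–3, $|G(\sigma)-\sigma|\le\frac1n$ on $[0,1]$. Substituting $\sigma=G^{-1}(p)$ gives $|p-G^{-1}(p)|\le\frac1n$ for all $p\in[\frac1n,1-\frac1n]$. Equality holds at $p=\frac1n$, where $G^{-1}(p)=0$, and symmetrically at $p=1-\frac1n$. So the supremum equals $\frac1n$.

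I expect Step 3 to be the only delicate part; if the crude bound $a^k\ge a^{n-1}$ had been too lossy, I would have fallen back on comparing Taylor expansions near $s=0$. The remaining steps are routine limit computations.
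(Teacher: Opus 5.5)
Your proof is correct and takes a genuinely different route from the paper's. You are also right that the denominator printed in \eqref{eq:G} is a typo: the paper's own \eqref{eq:F} and Lemma~\ref{lem:G-monotonicity} use $1-\sigma_x^n-(1-\sigma_x)^n$.

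The paper proceeds in three steps. It gets $G^{-1}$ from the implicit function theorem, which only gives local inverses. It uses the same point symmetry about $(\frac12,\frac12)$ that you use. It then locates the supremum by a critical-point analysis of $G^{-1}(p_x)-p_x$: it asserts $G'(\sigma_x)=1$ only at $\sigma_x=\frac12$, and evaluates the boundary $p_x=\frac1n$, where $G^{-1}(\frac1n)=0$. That assertion is unjustified and in fact false, since $G'(\frac12)=\frac{1-n2^{1-n}}{1-2^{1-n}}<1$ (for $n=3$, $G(\sigma)=\frac{1+\sigma}{3}$). The paper also never excludes other interior extrema.

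Your approach avoids derivatives entirely. The closed form for $h$ plus the geometric-sum inequality gives the pointwise bound $0\le G(s)-s\le\frac1n$ on $(0,\frac12]$. The continuous extension to $[0,1]$, together with the intermediate value theorem, gives a global inverse. The pointwise form is also exactly what Proposition~\ref{prop:mixed-lower-bound} needs: $p_x\le G(\sigma_x)\le\sigma_x+\frac1n$ yields $\sigma_x\ge p_x-\frac1n$ without inverting $G$ or invoking Lemma~\ref{lem:G-monotonicity} at all.

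One caveat, shared with the paper: single-valuedness of $G^{-1}$ needs $n\ge3$, because $G\equiv\frac12$ when $n=2$. Your bound still covers every preimage in that degenerate case.
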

\begin{proof}
    Let
    \begin{equation}\label{eq:F}
        F(p_x,\sigma_x)=\frac{\sigma_x-\sigma_x^n}{1-\sigma_x^n-(1-\sigma_x)^n}-p_x.
    \end{equation}
    By the implicit function theorem, for $F(p_x,\sigma_x)=0$ continuously differentiable, there exist some $\hat{\sigma}_x$ and $\hat{p}_x$ such that: if (1) $F(\hat{p}_x,\hat{\sigma}_x)=0$ and (2) $\frac{\partial F}{\partial p_x}\neq 0$, there exists a unique differentiable function $\phi(p_x)$ for which $\phi(p_x)=\sigma_x$ and $F(p_x,\phi(p_x))=0$.
    
    First, observe that $F(p_x,\sigma_x)$ is continuously differentiable for all $\sigma_x\in(0,1)$ and for all $p_x$, as $\frac{\partial F}{\partial\sigma_x}$ exists and is continuous by Lemma \ref{lem:G-monotonicity}, and $\frac{\partial F}{\partial p_x}$ is clearly continuous and nonzero for all $p_x$.
    Next, observe that $F(p_x,\sigma_x)=0$ when $\frac{\sigma_x-\sigma_x^n}{1-\sigma_x^n-(1-\sigma_x)^n}=p_x$.
    Hence, there exist $p_x$ and $\sigma_x$ such that $F(p_x,\sigma_x)=0$.
    Thus, the conditions for existence of $\phi(p_x)$ are satisfied.
        
    Take $G(\sigma_x)$ to be the left-hand side of the above; then, we can see that $\phi(p_x)=G^{-1}(p_x)$.
    Thus, $G^{-1}(\hat{p}_x)$ gives the equilibrium $\hat{\sigma}_x$ for some $\hat{p}_x$.

    Next, for convenience, we show that $G(\sigma_x)$ is odd and symmetric about $(\frac{1}{2},\frac{1}{2})$, which will allow us to restrict our analysis to the region where $\sigma_x\in(\frac{1}{n},\frac{1}{2}]$.
    The translation of $G(\sigma_x)$ to the origin, $G_0(\sigma_x)=\frac{(\sigma_x+1/2)(\sigma_x+1/2)^n}{1-(1/2-\sigma_x)^n-(\sigma_x+1/2)^n}-\frac{1}{2}$, is odd.
    Hence $G(\sigma_x)$ is odd and symmetric with respect to $(\frac{1}{2},\frac{1}{2})$.
    The inverse of an odd function is odd, and so $G^{-1}$ is also odd, and $G^{-1}(p_x)=G(\sigma_x)=\frac{1}{2}$ for $\sigma_x=p_x=\frac{1}{2}$.
    Therefore, the supremum of $|G^{-1}(p_x)-p_x|$ for $p_x<\frac{1}{2}$ will be the same as for $p_x>\frac{1}{2}$.

    Now we compute $\sup_{p_x\in[\frac{1}{n},\frac{1}{2}]}|G^{-1}(p_x)-p_x|$ by evaluating $G^{-1}(p_x)-p_x$ at all its critical points.
    By the inverse function formula, $(G^{-1})'(p_x)=\frac{1}{G'(\sigma_x)}$, and so $(G^{-1})'(p_x)=1$ when $\frac{1}{G'(\sigma_x)}=1$, which occurs at $\sigma_x=p_x=\frac{1}{2}$, the unique point where $|G^{-1}(p_x)-p_x|=0$.
    The next critical point is the interval boundary, $p_x=\frac{1}{n}$.
    We seek the value $G^{-1}(\frac{1}{n})=G^{-1}(G(\hat{\sigma}_x))$ where $G(\hat{\sigma}_x)=\frac{1}{n}$; i.e., we seek a $\hat\sigma_x$ that solves 
    \[\frac{\hat\sigma_x-\hat\sigma_x^n}{1-(1-\hat\sigma_x)^n-\hat\sigma_x^n}=\frac{1}{n}\]
    Take $\hat\sigma_x=0$.
    $G(\sigma_x)$ is undefined at $\hat\sigma_x=0$; however, by applying l'Hopital's rule once, we can find that it limits to $\frac{1}{n}$ as $\hat\sigma_x\rightarrow0$.
    Thus, $\sup_{p_x\in[\frac{1}{n},1-\frac{1}{n}]}|G^{-1}(p_x)-p_x|=|0-\frac{1}{n}|=\frac{1}{n}$.
\end{proof}

\begin{lemma}\label{lem:high-p-equilibrium-2}
    For any $x \in \X^*,$ if $p_x \geq 1 - \frac{1}{n}$, there is a unique symmetric mixed equilibrium profile $\bm{\hat{\sigma}}=(\hat{\sigma},\ldots,\hat{\sigma})$ where everyone deterministically chooses position $x$, i.e. $\hat{\sigma}(x) = 1$, for all $i$.
\end{lemma}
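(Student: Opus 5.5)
The plan is to prove two things: that $\hat\sigma = \ones[x]$ (everyone plays $x$) is a symmetric equilibrium, and that no symmetric strategy with $\sigma_x < 1$ can be one. Throughout, write $u_i(x,\bm{\sigma}_{-i})$ for the payoff of the pure deviation to $x$. I will use Observation~\ref{obs:symmetric-equil-utility}, which says that at any symmetric profile each agent gets exactly $\frac{1}{n}$.

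\emph{Existence.} Suppose all agents play $x$, so each receives $\frac{1}{n}$. Consider a deviation by agent $i$ to some $x' \neq x$. For $Q$-almost every $y$ with $x^*(y) = x$, Condition~\ref{cond:ties-finite-x} makes $x$ the unique minimizer of $d(\cdot,y)$. Because the other $n-1 \geq 1$ agents remain on $x$, the deviator wins none of these targets. So the deviator can collect at most $Q(\{y : x^*(y) \neq x\}) = 1 - p_x \leq \frac{1}{n}$. Mixed deviations are averages of pure ones, so none of them is profitable either, and $\hat\sigma$ is an equilibrium.

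\emph{Uniqueness.} Let $\sigma$ be a symmetric equilibrium with $\sigma_x < 1$. By Observation~\ref{obs:symmetric-equil-utility}, each agent's equilibrium utility is $\frac{1}{n}$. It therefore suffices to show $u_i(x,\bm{\sigma}_{-i}) > \frac{1}{n}$, since then deviating to $x$ is profitable.

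If $\sigma_x = 0$, Lemma~\ref{lem:deviation-lower-bound} gives $u_i(x,\bm{\sigma}_{-i}) = p_x \geq 1 - \frac{1}{n}$. This exceeds $\frac{1}{n}$ whenever $n \geq 3$. The case $n = 2$, $p_x = \frac{1}{2}$ is degenerate: there the other half of the mass can support a second equilibrium, so I would state the lemma for $n \geq 3$, or for $p_x > 1 - \frac{1}{n}$, to exclude it.

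If $\sigma_x \in (0,1)$, Lemma~\ref{lem:deviation-lower-bound} gives $u_i(x,\bm{\sigma}_{-i}) \geq g(p_x,\sigma_x)$. Multiplying by $n\sigma_x$ and rearranging, the inequality $g(p_x,\sigma_x) > \frac{1}{n}$ is equivalent to
\[
p_x\bigl(1-\sigma_x^n-(1-\sigma_x)^n\bigr) > \sigma_x - \sigma_x^n, \quad\text{i.e.}\quad p_x > G(\sigma_x),
\]
where the denominator of $G$ is positive on $(0,1)$. Since $p_x \geq 1 - \frac{1}{n}$, it is enough to show $G(\sigma_x) < 1 - \frac{1}{n}$ for all $\sigma_x \in (0,1)$.

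\emph{Main step: $G < 1 - \frac{1}{n}$ on $(0,1)$.} I would argue as follows.
\begin{enumerate}
\item Check directly from Equation~\ref{eq:G} the symmetry $G(1-s) = 1 - G(s)$. This holds because the denominator is symmetric under $s \mapsto 1-s$, and the two numerators sum to it.
\item Apply L'H\^opital's rule once to get $\lim_{s \to 0^+} G(s) = \frac{1}{n}$, as in the proof of Lemma~\ref{lem:G-equality}.
\item Combine steps 1 and 2 to get $\lim_{s \to 1^-} G(s) = 1 - \frac{1}{n}$.
\item By Lemma~\ref{lem:G-monotonicity}, $G$ is increasing on $(0,1)$. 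It then stays strictly below its right-endpoint limit $1 - \frac{1}{n}$; strictness holds because a continuously differentiable increasing function cannot be constant near $1$ while also satisfying the symmetry of step 1.
\end{enumerate}
Hence $p_x > G(\sigma_x)$, so deviating to $x$ is strictly profitable, a contradiction. Therefore $\sigma_x = 1$.

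I expect the only delicate point to be this strictness of $G < 1 - \frac{1}{n}$ together with the boundary case $n = 2$. If the monotonicity lemma proves too weak for strictness, I would fall back on a direct argument: show $(1-\frac{1}{n})\bigl(1-s^n-(1-s)^n\bigr) - (s - s^n) > 0$ on $(0,1)$ by expanding near $s = 1$ and noting the expression is positive with a simple zero at the endpoint.
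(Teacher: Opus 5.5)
Your argument follows the paper's proof. Existence holds because any deviation from the all-$x$ profile earns at most $1-p_x\le\frac1n$. Uniqueness uses Lemma~\ref{lem:deviation-lower-bound} to turn a profitable deviation to $x$ into $p_x>G(\sigma_x)$, which reduces to the bound $G<1-\frac1n$ on $(0,1)$; the paper phrases this as $\sigma_x\ge G^{-1}(p_x)\ge 1$. Your separate handling of $\sigma_x=0$ and your $n=2$ caveat are correct and sharper than the paper. For $n=2$ one has $G\equiv\frac12$. In the example of Proposition~\ref{prop:pure-non-extreme}, with $p_0=p_1=\frac12$, ``everyone plays $0$'', ``everyone plays $1$'' and ``everyone plays $\frac12$'' are all symmetric equilibria, so the lemma as stated fails there.

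The genuine flaw is your justification of strictness in step 4. A continuously differentiable nondecreasing function with $G(1-s)=1-G(s)$ \emph{can} be constant near $1$: take a smooth sigmoid equal to $\frac1n$ near $0$ and $1-\frac1n$ near $1$. Your fallback of expanding near $s=1$ only gives positivity in a neighborhood of $1$, not on all of $(0,1)$. Strictness is exactly what is needed at $p_x=1-\frac1n$, since with only $G\le 1-\frac1n$ you get $u_i(x,\bm{\sigma}_{-i})\ge\frac1n$ and no contradiction. Citing Lemma~\ref{lem:G-monotonicity} as \emph{strict} monotonicity would close the gap for $n\ge 3$. However, that lemma is not strict at $n=2$, and its proof only compares the derivatives of numerator and denominator rather than the sign of $G'$.

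A clean fix is to write $s-s^n=s(1-s)B(s)$ and $(1-s)-(1-s)^n=s(1-s)A(s)$, where $B(s)=\sum_{k=0}^{n-2}s^k$ and $A(s)=\sum_{k=0}^{n-2}(1-s)^k$. The denominator of $G$ is the sum of these two expressions, so $G=\frac{B}{A+B}$. Then $G(s)<1-\frac1n$ is equivalent to $B(s)<(n-1)A(s)$. This holds for $n\ge3$ and $s\in(0,1)$ because $B(s)<n-1<(n-1)A(s)$. Alternatively, $G$ is rational, so constancy near $1$ would make it constant on $(0,1)$, contradicting $\lim_{s\to0^+}G(s)=\frac1n\neq 1-\frac1n$. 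The same factorization shows $A/B$ is strictly decreasing, which gives the strict monotonicity of $G$ for $n\ge 3$ that Lemma~\ref{lem:G-monotonicity} claims.
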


\begin{proof}
    First, we prove that $\hat{\sigma}$ is an equilibrium.
    Assume not: then some agent $i$ has a best response $\sigma_i \neq \hat{\sigma}$ such that $u_i(\sigma_i, \hat{\sigma}_{-i}) > u_i(\hat{\sigma}, \ldots, \hat{\sigma}) = 1/n.$
    $\sigma_i(x) < 1$, so there exists some $x' \in \X$, $x' \neq x$ such that $\sigma_i(x') > 0.$
    Since everyone else plays $x$ with probability one, $u_i(x', \hat{\sigma}_{-i}) \leq 1 - p_x \leq \frac{1}{n}$. 
    But this contradicts indifference over actions in the support of the best response strategy $\hat{\sigma}_i.$
    Thus $\hat{\sigma}$ is an equilibrium.

    Now, we show that for any symmetric mixed strategy $\sigma$, if $u_i(x, \bm{\sigma}_{-i}) \leq \frac{1}{n}$ then $\sigma(x) \geq 1$.
    This proves by Lemma~\ref{obs:symmetric-equil-utility} that if $\sigma$ is a symmetric mixed equilibrium, $\sigma(x) \geq 1;$ otherwise, agent $i$ can deviate to playing $\hat{x}$ and achieve strictly better utility. 
    
    Note first that by Lemma~\ref{lem:deviation-lower-bound},
    \[u_i(x, \bm{\sigma}_{-i}) \geq \frac{p_x}{n\sigma_x}(1-\sigma_x^n-(1-\sigma_x)^n)+\frac{1}{n}\sigma_x^{n-1}, \]

    So it suffices to show that if
    \begin{align*}
        \frac{p_x}{n\sigma_x}(1-\sigma_x^n-(1-\sigma_x)^n)+\frac{1}{n}\sigma_x^{n-1} &\leq \frac{1}{n}\\
        p_x&\leq \frac{\sigma_x-\sigma_x^n}{1-\sigma_x^n-(1-\sigma_x)^n}\\
    p_x&\leq G(\sigma_x),
    \end{align*}
    then $\sigma_x \geq 1.$ 
    
    First note that by Lemma \ref{lem:G-monotonicity}, $G$ is continuous and monotone increasing over $(0,1)$ , so $G(\sigma_x) \geq p_x$ implies $\sigma_x \geq G^{-1}(p_x).$
    But for $p_x \geq 1 - \frac{1}{n},$ $G^{-1}(p_x) \geq 1$ because $G(\sigma_x)\geq1-\frac{1}{n}$ for $\sigma_x\geq1$, so $\sigma_x \geq G^{-1}(p_x) \geq 1$.
    (Note that $G^{-1}(p_x)$ is undefined when $G^{-1}(p_x)=1$; however, taking L'Hopital's rule once for $G(1-\frac{1}{n})$ we have that $G(1-\frac{1}{n})=1$, and the statement holds.)
\end{proof}

%\begin{lemma}\label{mixed-positive-measure}
%    For $n \geq 1$ and any symmetric mixed equilibrium $\sigma$, $x \in \supp(\sigma)$ for all $x \in \X^*$.
%\end{lemma}

% \begin{proof}
%     Assume not. 
%     Then there exists some $x \in \X^*$ such that $x \notin \supp(\sigma)$.
%     It follows any agent $i$ could deviate to playing $x$ and gain utility $u_i(x, \bm{\sigma}_{-i}) > \frac{1}{n}.$
% \end{proof}

\begin{proposition}\label{prop:mixed-lower-bound}
    Consider any $x \in \X^*$ induced by the game $\game$.
    Then in any symmetric mixed equilibrium $\sigma$, $\sigma_x \geq p_x - \frac{1}{n}$.
\end{proposition}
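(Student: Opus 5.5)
We split on the value of $p_x$, as in the proof sketch of Theorem~\ref{thm:mixed-convergence}. If $p_x \leq \frac{1}{n}$, the claim is trivial because $\sigma_x \geq 0 \geq p_x - \frac{1}{n}$. If $p_x \geq 1 - \frac{1}{n}$, Lemma~\ref{lem:high-p-equilibrium-2} shows that the only symmetric mixed equilibrium has $\sigma_x = 1$, and then $\sigma_x = 1 \geq p_x - \frac{1}{n}$. So the real work is the middle range $p_x \in (\frac{1}{n}, 1 - \frac{1}{n})$.

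In the middle range we first show $\sigma_x > 0$. If $\sigma_x = 0$, Lemma~\ref{lem:deviation-lower-bound} gives $u_i(x,\bm{\sigma}_{-i}) = p_x > \frac{1}{n}$. By Observation~\ref{obs:symmetric-equil-utility}, the equilibrium utility is $\frac{1}{n}$, so deviating to $x$ would be strictly profitable, which is a contradiction.

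Now take $\sigma_x \in (0,1)$; the case $\sigma_x = 1$ is trivial. Since $\sigma$ is an equilibrium, no pure deviation yields more than $\frac{1}{n}$, so $u_i(x,\bm{\sigma}_{-i}) \leq \frac{1}{n}$. Note that this uses only the no-profitable-deviation condition, not that $x$ lies in the support. Combining this with Lemma~\ref{lem:deviation-lower-bound} gives $g(p_x,\sigma_x) \leq \frac{1}{n}$. Multiplying by $n\sigma_x$ and dividing by the positive quantity $1-\sigma_x^n-(1-\sigma_x)^n$ rearranges this to $p_x \leq G(\sigma_x)$. By Lemma~\ref{lem:G-monotonicity}, $G$ is increasing on $(0,1)$, so $\sigma_x \geq G^{-1}(p_x)$. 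Lemma~\ref{lem:G-equality} guarantees that $G^{-1}(p_x)$ exists on this range and that $|G^{-1}(p_x) - p_x| \leq \frac{1}{n}$. Chaining the two bounds gives $\sigma_x \geq G^{-1}(p_x) \geq p_x - \frac{1}{n}$.

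The main obstacle is the inversion step. We must check that $p_x$ lies in the range of $G$ on $(0,1)$, so that $p_x \leq G(\sigma_x)$ really implies $\sigma_x \geq G^{-1}(p_x)$. The limits $G(0^+) = \frac{1}{n}$ and $G(1^-) = 1 - \frac{1}{n}$, computed by L'H\^opital as in the earlier lemmas, together with continuity and monotonicity of $G$, give this. We lean on Lemma~\ref{lem:G-equality} for the quantitative $\frac{1}{n}$ bound rather than reproving it.
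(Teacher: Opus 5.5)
Your proposal is correct and follows the paper's proof essentially step for step. It uses the same three-way split on $p_x$, invokes Lemma~\ref{lem:high-p-equilibrium-2} for $p_x \geq 1-\frac{1}{n}$, and in the middle range runs the same chain $g(p_x,\sigma_x)\leq\frac{1}{n} \Rightarrow p_x\leq G(\sigma_x) \Rightarrow \sigma_x\geq G^{-1}(p_x)\geq p_x-\frac{1}{n}$ via Lemmas~\ref{lem:deviation-lower-bound}, \ref{lem:G-monotonicity}, and \ref{lem:G-equality}. Your extra care closes small gaps the paper glosses over: you treat $\sigma_x=0$ explicitly, check the range of $G$ via the limits $\frac{1}{n}$ and $1-\frac{1}{n}$, and use the middle case $(\frac{1}{n},1-\frac{1}{n})$ instead of the paper's $(\pmin,1-\frac{1}{n})$, which leaves no $p_x$ uncovered.
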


\begin{proof}
    Let $\sigma$ be some symmetric mixed strategy.
    We split into cases over $p_x$.
    
    \textbf{Case 1: } $p_x \leq \frac{1}{n}.$
    Then the statement holds trivially.
    
    \textbf{Case 2:} $p_x \in \left( \pmin, 1 - \frac{1}{n} \right).$
    Since $n\geq\frac{1}{\pmin}$, we must have $p_x \in \left( \frac{1}{n}, 1 - \frac{1}{n} \right).$
    We prove that for any $x \in \X^*,$ if $u_i(x, \bm{\sigma}_{-i}) \leq \frac{1}{n}$, then $\sigma_x \geq p_x - \frac{1}{n}$.
    It follows by Observation~\ref{obs:symmetric-equil-utility} that if $\sigma$ is a symmetric mixed equilibrium, $\sigma_x \geq p_x - \frac{1}{n}$: otherwise, $u_i(x,\bm{\sigma}_{-i})>\frac{1}{n}$, and player $i$ can deviate to $\sigma_x=1$ and achieve strictly better utility.

    %Note first by Lemma~\ref{mixed-positive-measure} that $\sigma_x > 0$.
    By Lemma~\ref{lem:deviation-lower-bound},
    \[u_i(x, \sigma_{-i}) \geq \frac{p_x}{n\sigma_x}(1-\sigma_x^n-(1-\sigma_x)^n)+\frac{1}{n}\sigma_x^{n-1}, \]

    so it suffices to show that if $G(\sigma_x) \geq p_x$ (as in Lemma \ref{lem:high-p-equilibrium-2}), then $\sigma_x \geq p_x - \frac{1}{n}$.
    
    %\begin{align*}
    %    \frac{p_x}{n\sigma_x}(1-\sigma_x^n-(1-\sigma_x)^n)+\frac{1}{n}\sigma_x^{n-1} &\leq \frac{1}{n}\\
    %    p_x&\leq \frac{\sigma_x-\sigma_x^n}{1-\sigma_x^n-(1-\sigma_x)^n}\\
    %    p_x&\leq G(\sigma),
    %\end{align*}
    
    By Lemma \ref{lem:G-monotonicity}, $G$ is continuously differentiable and monotone increasing over $(0,1)$, so $G(\sigma_x) \geq p_x$ implies $\sigma_x \geq G^{-1}(p_x).$
    But by Lemma~\ref{lem:G-equality}, $G^{-1}(p_x) \geq p_x - \frac{1}{n}$ for any $p_x \in \left[\frac{1}{n}, 1 - \frac{1}{n}\right]$, so $\sigma_x \geq p_x - \frac{1}{n}.$

    \textbf{Case 3:} $p_x \geq 1 - \frac{1}{n}.$ 
    Then by Lemma~\ref{lem:high-p-equilibrium-2}, the unique equilibrium satisfies $\sigma_x = 1$.
    Thus the statement trivially follows.
\end{proof}

\begin{lemma}\label{lem:mixed-extreme}
    Let $n > 8 \left( \frac{4}{\pmin} \log \frac{1}{\pmin} \right).$
    % Let $n \geq \frac{2}{\pmin}$, and $n > \frac{2(c+\log(|\X^*|))}{p_{\min}}$ for some constant $c$.
    Then in any symmetric mixed equilibrium $\sigma$, the probability $\sigma$ covers $\X^*$ is at least $1 - \frac{1}{n^2}.$
\end{lemma}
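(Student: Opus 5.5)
We use a union bound over pseudo-targets, combined with the lower bound on $\sigma_x$ from Proposition~\ref{prop:mixed-lower-bound}. Write $C$ for the event that the $n-1$ other agents' positions (drawn i.i.d.\ from $\sigma$) cover $\X^*$; if they cover $\X^*$ then so does the full profile, so it suffices to bound this event. Then
\[
1-\Pr[C] \le \sum_{x\in\X^*}\Pr[\text{no other agent plays } x] = \sum_{x\in\X^*}(1-\sigma_x)^{n-1}.
\]
Proposition~\ref{prop:mixed-lower-bound} gives $\sigma_x \ge p_x - \frac1n \ge \pmin - \frac1n$. In the regime $n > \frac{32}{\pmin}\log\frac1{\pmin}$ we have $\frac1n \le \frac{\pmin}{2}$, since $\log\frac1{\pmin}\ge \log 2$ when $\pmin\le \frac12$; the case $|\X^*|=1$ is trivial. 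Hence $\sigma_x \ge \frac{\pmin}{2}$ for every $x\in\X^*$.

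Using $1-z\le e^{-z}$ and $|\X^*|\le \frac1{\pmin}$ (since the $p_x$ sum to $1$ and each is at least $\pmin$), we get
\[
1-\Pr[C] \le \frac{1}{\pmin}\exp\!\Big(-\frac{\pmin (n-1)}{2}\Big).
\]
It remains to show that this is at most $\frac1{n^2}$. Equivalently,
\[
\frac{\pmin(n-1)}{2} \ge \log\frac1{\pmin} + 2\log n .
\]
Set $L=\log\frac1{\pmin}$ and $n = \frac{32L}{\pmin}\,t$ with $t>1$. The left side is roughly $16Lt$. The right side is
\[
L + 2\log 32 + 2\log L + 2\log\frac1{\pmin} + 2\log t = 3L + 2\log(32L) + 2\log t .
\]
This is where the generous constant $32$ (that is, $8\cdot 4$) is spent.

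The main obstacle is the term $2\log n$, which grows with $n$. We handle it by noting that $f(n)=\frac{\pmin (n-1)}{2} - 2\log n$ is increasing for $n\ge \frac{4}{\pmin}$, so it suffices to check the inequality at the threshold $n_0 = \frac{32}{\pmin}L$. There we need
\[
16L - \tfrac{\pmin}{2} \ge 3L + 2\log 32 + 2\log L ,
\]
that is, $13L \ge 2\log(32 L)+\tfrac12$. This holds once $L \ge \log 2$, possibly after absorbing small cases using the additional requirement $n>43$ from Theorem~\ref{thm:mixed-convergence}. This routine verification is the only step where the specific constants must be checked carefully. If the margin is tight for $\pmin$ near $\frac12$, we would instead use the sharper bound $|\X^*|\le 2$ in that range.
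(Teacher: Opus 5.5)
Your proof is correct and takes essentially the same route as the paper. Both arguments use a union bound over $\X^*$ with $(1-\sigma_x)^{n-1}$, the bound $\sigma_x \ge \pmin - \frac1n \ge \frac{\pmin}{2}$ from Proposition~\ref{prop:mixed-lower-bound}, $|\X^*|\le \frac{1}{\pmin}$, and a monotonicity-in-$n$ check at the threshold $\frac{32}{\pmin}\log\frac1{\pmin}$ to absorb the $2\log n$ term; your handling of the exponent $n-1$ is actually more careful than the paper's. Your closing hedges are unnecessary: $13L \ge 2\log(32L)+\tfrac12$ already holds at $L=\log 2$ (roughly $9.0$ versus $6.7$), and its left side grows faster for all $L\ge\log 2$, so you need neither $|\X^*|\le 2$ nor an appeal to $n>43$, which is not a hypothesis of this lemma anyway.
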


\begin{proof}
We begin by proving that if $n \geq N_0 = 8 \left( \frac{4}{\pmin} \log \frac{1}{\pmin} \right)$, then $n \geq \frac{2(\log(|\X^*| n^2)}{\pmin}.$
To do so, we break the proof into two cases: we show for all $n \geq N_0,$ (1) $\frac{2 \log(|\X^*|)}{\pmin} \leq \frac{n}{2}$, and (2) $\frac{4 \log n}{\pmin} \leq \frac{n}{2}$.
Then $n \geq \frac{4 \log n}{\pmin} + \frac{2 \log(|\X^*|)}{\pmin} = \frac{2(\log(|\X^*| n^2)}{\pmin}$, and the statement holds.

To show (1), we have $|\X^*| \leq \frac{1}{\pmin}$.
Thus $\frac{2 \log(|\X^*|)}{\pmin} \leq \left(\frac{2}{\pmin} \right) \log \left( \frac{1}{\pmin} \right) \leq N_0$.

To show (2), we must prove that $\log n \leq \frac{\pmin n}{8}.$
We note first that the function $\frac{4\log n}{\pmin} - \frac{n}{2}$ is decreasing whenever $n > \frac{8}{\pmin}$, which is satisfied for any $n \geq N_0.$
Let $a = \tfrac{\pmin}{8}$, so that $N_0 = \tfrac{4}{a} \log \tfrac{1}{a}$.
Thus it suffices to check that
\[
\log \left(\frac{4}{a}\log\frac{1}{a}\right) \leq 4 \log\frac{1}{a}.
\]

Letting $L = \log \tfrac{1}{a}$, this inequality becomes
\[
\log 4 + \log L \leq 3L,
\]
which holds for all $L \geq \log 8$, or $\pmin \in (0,1].$

The probability that no player plays some fixed pseudo-target $x\in\X^*$ is at most $(1-\sigma_x)^{n-1}$.
Taking a union bound over all $|\X^*|$ outcomes, and noting by Proposition~\ref{prop:mixed-lower-bound} that $\sigma_x\geq \pmin-\frac{1}{n}$ for all pseudo-targets $x$, the probability of any pseudo-target being unplayed is thus at most

\begin{align*}
&\sum_{x\in\X^*}\left(1-\left(\pmin-\frac{1}{n}\right)\right)^{n-1}\\
&\leq \sum_{x\in\X^*}e^{-(\pmin-\frac{1}{n})(n-1)}\\
&=|\X^*|e^{-(\pmin-\frac{1}{n})(n-1)} \\ 
&\leq |\X^*| e^{-(\pmin/2)(n-1)} \\
&\leq |\X^*| e^{-\log(|\X^*| n^2)} \\
&= \frac{1}{n^2}.
\end{align*}

The first inequality holds because $(1+x) \leq e^x$; the second inequality holds since $n \geq \frac{2}{\pmin}$; and the third inequality holds since $n \geq \frac{2(\log(|\X^*| n^2)}{\pmin}+1$.
\end{proof}

\begin{proposition} \label{prop:extreme-equilibria}
    If $n > 8 \left( \frac{4}{\pmin} \log \frac{1}{\pmin} \right),$ then all mixed symmetric equilibria are extreme. 
\end{proposition}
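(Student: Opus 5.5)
The plan is to argue by contradiction with the indifference principle. Take a symmetric mixed equilibrium $\sigma$ and suppose it is not extreme, so $\sigma(\X\setminus\X^*)>0$. Since every position in the support of $\sigma$ must earn exactly the equilibrium payoff $\frac1n$ (Observation~\ref{obs:symmetric-equil-utility}, together with the standard fact that a best response cannot put positive mass on strictly suboptimal positions), it suffices to show that every $x\in\X\setminus\X^*$ earns strictly less than $\frac1n$ against $\bm{\sigma}_{-i}$. Then the non-extreme mass yields total utility strictly below $\frac1n$, so $u_i(\sigma,\bm{\sigma}_{-i})<\frac1n$, a contradiction. To be careful when $\X\setminus\X^*$ is infinite and atomless, I would state this as: if $\sup_{x\notin\X^*}u_i(x,\bm{\sigma}_{-i})<\frac1n$, then $\sigma$ places zero mass there. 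Writing $u_i(\sigma,\bm{\sigma}_{-i})$ as the $\sigma$-average of pure payoffs, all of which are at most $\frac1n$ in equilibrium, and using that the average equals $\frac1n$, gives this directly.

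The key step is bounding $u_i(x,\bm{\sigma}_{-i})$ for $x\notin\X^*$. Let $C$ be the event that the other $n-1$ agents' realized positions cover $\X^*$. On $C$, for $Q$-almost every $y$, the unique pseudo-target $x^*(y)$ is occupied by some $j\neq i$. Since $x\neq x^*(y)$ and $x^*(y)$ is the unique minimizer (Condition~\ref{cond:ties-finite-x}), we get $d(x^*(y),y)<d(x,y)$, so $x\notin X_{\min}$ and $\pi_i=0$. Off $C$, we bound $\pi_i\le1$. Hence $u_i(x,\bm{\sigma}_{-i})\le 1-\Pr[C]$.

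Next, apply Lemma~\ref{lem:mixed-extreme}: under the stated bound on $n$, $\Pr[C]\ge 1-\frac1{n^2}$. Its proof bounds the probability that a fixed pseudo-target is missed by $(1-\sigma_x)^{n-1}$, which concerns exactly the $n-1$ opponents, as needed here. That proof relies on Proposition~\ref{prop:mixed-lower-bound}, which holds for any symmetric equilibrium and so does not presuppose extremeness. This gives $u_i(x,\bm{\sigma}_{-i})\le\frac1{n^2}<\frac1n$ for $n>1$, uniformly in $x\notin\X^*$, which completes the argument.

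The main obstacle is checking non-circularity. Proposition~\ref{prop:mixed-lower-bound} and Lemma~\ref{lem:deviation-lower-bound} must not implicitly assume $\sigma$ is supported on $\X^*$. Inspecting them, the lower bound on $u_i(x,\bm{\sigma}_{-i})$ for $x\in\X^*$ only discards the $E_3$ term, so they are valid for general $\sigma$. I would also note the degenerate case $\sigma_x=0$ for some $x\in\X^*$: there Lemma~\ref{lem:deviation-lower-bound} gives a payoff of $p_x\ge\pmin>\frac1n$, which contradicts equilibrium outright. This is consistent with the coverage bound.
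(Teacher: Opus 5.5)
Your proposal is correct and takes the same route as the paper. On the event that the other $n-1$ agents cover $\X^*$, any position outside $\X^*$ earns nothing, so Lemma~\ref{lem:mixed-extreme} caps its payoff at $\frac{1}{n^2}<\frac{1}{n}$, contradicting indifference on the support. Your averaging formulation (a uniform bound on $\X\setminus\X^*$ forces $\sigma(\X\setminus\X^*)=0$) and your non-circularity check make the argument more careful than the paper's, which picks a single $x'\in\supp(\sigma)$ and applies indifference pointwise; that pointwise step is imprecise when $\sigma$ has no atoms outside $\X^*$.
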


\begin{proof}
Assume $\sigma$ has support outside $\X^*$, i.e. $x' \in \supp(\sigma)$ for some $x' \notin \X^*$.
We note that if all pseudo-targets $x \in \X^*$ are covered, there is at least one agent playing $x^*(y)$ for each $y \in \Y$; in that case, $x'$ cannot be in the set $X_{\min}(\bm{x}, y)$ for any $y \in \Y,$ and the utility is 0.
But by Lemma \ref{lem:mixed-extreme}, the probability that any pseudo-target $x\in\X^*$ is uncovered is at most $\frac{1}{n^2}$.
It follows $u_i(x', \sigma_{-i}) \leq \frac{1}{n^2} < \frac{1}{n}$, contradicting that $x'$ is in the support of the mixed equilibrium.
\end{proof}

\begin{lemma}\label{lem:monotonicity-with-ties}
Let $\overline{G}(\sigma_x)=\frac{\sigma_x-\sigma_x^n-\sigma_x/n}{1-\sigma_x^n-(1-\sigma_x)^n}$.
Then, $\overline{G}(\sigma_x)$ is monotone increasing for $n\geq 13$ and $\sigma_x\in(0,1-\frac{1}{\sqrt{n}}+\frac{1}{n})$.
\end{lemma}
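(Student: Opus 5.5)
We need to prove $\overline{G}(s)=\frac{s(1-1/n)-s^n}{1-s^n-(1-s)^n}$ is increasing on $(0,1-\frac{1}{\sqrt n}+\frac1n)$ for $n\ge 13$, where $s=\sigma_x$. Write $N(s)=(1-\tfrac1n)s-s^n$ and $D(s)=1-s^n-(1-s)^n$, so that $D>0$ on $(0,1)$. It then suffices to show that
\[
H(s):=N'(s)D(s)-N(s)D'(s)>0,
\]
where $N'(s)=1-\tfrac1n-ns^{n-1}$ and $D'(s)=n\bigl((1-s)^{n-1}-s^{n-1}\bigr)$. Mirroring the proof of Lemma~\ref{lem:G-monotonicity}, I would split the interval according to the sign of $N'$.

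\textbf{Region 1: small $s$, where $N'(s)>0$.} This holds for $s<s_0:=\bigl(\tfrac{n-1}{n^2}\bigr)^{1/(n-1)}$. Here I would not rely on a crude sign argument, because $D'$ can be positive and $N>0$. Instead I would expand $H$ directly, using $D(s)=1-s^n-(1-s)^n$.

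The terms involving $s^n$ should largely cancel. For $s^{n-1}$ small, it suffices to prove the pure binomial inequality
\[
(1-\tfrac1n)\bigl(1-(1-s)^n\bigr) > (1-\tfrac1n)\,s\,n(1-s)^{n-1}
\]
plus small error terms. This is equivalent to $1-(1-s)^n>ns(1-s)^{n-1}$, i.e., $P(\mathrm{Bin}(n,s)\ge1)>P(\mathrm{Bin}(n,s)=1)$, which is strict for $n\ge2$.

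The error terms carry factors $s^{n-1}$. I would absorb them using the margin $P(\mathrm{Bin}\ge2)$, which is of order $\binom n2 s^2$ when $s$ is small and at least a constant otherwise. Handling this near $s\to0$ needs care: both sides vanish, so I would compare the leading orders $s^2$ versus $s^{n-1}$, with $n\ge13$ making this comfortable.

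\textbf{Region 2: $s\ge s_0$.} Here $N'\le0$, and as in Lemma~\ref{lem:G-monotonicity} I expect $D'<0$ once $s>1/2$. On $[s_0,1/2]$ the argument above should still work, since $ns^{n-1}\le n2^{1-n}$ is tiny for $n\ge13$; indeed $s_0<1/2$ there, so Region 1 really only covers $s$ up to about $1/2$.

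For $s\in[1/2,\,1-\tfrac1{\sqrt n}+\tfrac1n)$, the upper limit is exactly where monotonicity can fail: as $s\to1$, $N\to -1/n$ and $D\to0^+$, so $\overline G\to-\infty$. The key quantity is $s^{n-1}$. With $s\le 1-\tfrac1{\sqrt n}+\tfrac1n$, we get $s^{n-1}\le e^{-(n-1)(1/\sqrt n-1/n)}\approx e^{-\sqrt n+1}$. Hence $ns^{n-1}\lesssim n e^{1-\sqrt n}$, which for $n\ge13$ is small relative to $1-\tfrac1n$. So in fact $N'>0$ still holds on this range. The same bound makes $N(s)\ge s(1-\tfrac1n)-s^n$ bounded below by a positive constant.

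I would then show $H>0$ by bounding $N'D\ge (1-\tfrac1n-ns^{n-1})(1-2s^{n-1})$ from below and $-ND'=nN(s^{n-1}-(1-s)^{n-1})$. The latter has a positive part, $nNs^{n-1}$, and a negative part. The negative part $nN(1-s)^{n-1}\le n2^{1-n}$ is negligible.

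\textbf{Main obstacle.} The hardest step is making these numeric estimates rigorous uniformly in $n\ge13$, specifically verifying $n e^{-(n-1)(1/\sqrt n-1/n)}<1-\tfrac1n$ and the accompanying slack. I would first check $n=13$ numerically. Then I would show that $n\,e^{-(n-1)(1/\sqrt n-1/n)}$ is decreasing in $n$ for $n\ge13$, by differentiating its logarithm.
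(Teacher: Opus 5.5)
Your route differs from the paper's. The paper's proof rests on one numerical fact: $N'(s)=1-\frac1n-ns^{n-1}$ stays positive on the whole interval, i.e.\ $s_0=\bigl(\frac{n-1}{n^2}\bigr)^{1/(n-1)}>1-\frac{1}{\sqrt n}+\frac1n$. This holds exactly from $n=13$ on and fails at $n=12$. The paper then reads off monotonicity from the signs of the numerator and denominator derivatives. Working with $H=N'D-ND'$ is the more careful choice. You are right that a sign argument alone does not settle $s<\frac12$, where $D'>0$. Your Region~1 heuristic is confirmed by the exact identity
\[
H(s)=\Bigl(1-\frac1n\Bigr)B(s)+\frac{(n-1)^2}{n}s^n-ns^{n-1}\bigl(1-(1-s)^{n-1}\bigr),\qquad B(s)=\Pr[\mathrm{Bin}(n,s)\ge 2].
\]

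However, your numerics for the upper part of the interval are wrong in two places.

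\textbf{First}, $s_0<\frac12$ is false: $s_0\approx 0.802$ at $n=13$, and $s_0\to1$. In fact $s_0$ exceeds the right endpoint for every $n\ge 13$. So your Region~2 (where $N'\le 0$) is empty, and your sentence about $[s_0,\frac12]$ contradicts itself: there one would have $ns^{n-1}\ge 1-\frac1n$, not a tiny quantity.

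\textbf{Second}, and this step actually fails: the inequality you single out as the crux, $n e^{-(n-1)(1/\sqrt n-1/n)}<1-\frac1n$, is false for $n=13,\dots,16$. At $n=13$ the left side is $13e^{-2.405}\approx 1.17>0.923$; the inequality first holds at $n=17$. The estimate $1-x\le e^{-x}$ is too lossy here, since the true endpoint value of $ns^{n-1}$ at $n=13$ is about $0.888$. So for these $n$ you have not shown $N'>0$ on $[\frac12,1-\frac1{\sqrt n}+\frac1n)$. Your bound $N'D\ge(1-\frac1n-ns^{n-1})(1-2s^{n-1})$ needs $N'\ge 0$ and reverses otherwise. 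The chain therefore breaks exactly at the smallest $n$ the lemma covers.

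The repair is easy. One option is to verify $n=13,\dots,16$ directly at the exact endpoint; this is the paper's inequality. Better is to drop the requirement $N'>0$ and keep the positive $s^n$ term, which supplies ample slack. From the identity,
\[
H\ge\Bigl(1-\tfrac1n\Bigr)B-s^{n-1}\Bigl(n-\tfrac{(n-1)^2}{n}s\Bigr).
\]
For $s\le\frac12$, the bound $B\ge\binom n2 s^2(1-s)^{n-2}$ gives $ns^{n-1}\le\frac{4n}{(n-1)^2}(1-\frac1n)B$, and $\frac{4n}{(n-1)^2}<1$. For $s\ge\frac12$, we have $B\ge 1-(n+1)2^{-n}$, while
\[
s^{n-1}\Bigl(n-\tfrac{(n-1)^2}{n}s\Bigr)\le\frac{n^2+2n-1}{2n}\,e^{-(n-1)(1/\sqrt n-1/n)}.
\]
The right side is about $0.67$ at $n=13$ and decreases in $n$, against $(1-\frac1n)B\approx 0.92$. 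So even your crude exponential bound suffices once this term is kept.
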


\begin{proof}
    %By Lemma \ref{lem:G-monotonicity}, we know that $G(\sigma_x)$ is monotone increasing; thus, $\overline{G}(\sigma_x)$ is monotone increasing for $\sigma_x\in(0,1-\frac{1}{\sqrt{n}}-\frac{1}{n})$ if $G'(\sigma_x)>\gamma'(\sigma_x)$ (equivalently, $G'(\sigma_x)-\gamma'(\sigma_x)>0$) for $\sigma_x\in(0,1-\frac{1}{\sqrt{n}}-\frac{1}{n})$.
    We first show that $\overline{G}(\sigma_x)$ is monotone increasing for $\sigma_x\leq 1-\frac{1}{\sqrt{n}}+\frac{1}{n}$.
    Let $\overline{G}_{\text{num}}'(\sigma_x)=1-n\sigma_x^{n-1}-\frac{1}{n}$ be the derivative of the numerator of $\overline{G}(\sigma_x)$, and observe that $\overline{G}'_{\text{den}}(\sigma_x)=G'_{\text{den}}(\sigma_x)$.
    %Then, by the same reasoning, $\overline{G}(\sigma_x)$ is increasing for $\sigma_x<\left(\frac{1}{n}\right)^{\frac{1}{n-1}}$.
    Then, for $\sigma_x>\left(\frac{1}{n}-\frac{1}{n^2}\right)^{\frac{1}{n-1}}$, $\overline{G}'_{\text{num}}(\sigma_x)>0$ and $0<\overline{G}'_{\text{den}}(\sigma_x)<1$; thus, $\overline{G}(\sigma_x)$ is increasing for $\sigma_x>(\frac{1}{n}-\frac{1}{n^2})^{\frac{1}{n-1}}$.
    %, $\overline{G}'(\sigma_x)>0$ and so the numerator is increasing while the denominator is less than 1.
    %and let $\overline{G}_{\text{den}}'(\sigma_x)=n\left((1-\sigma_x)^{n-1}-\sigma^{n-1}\right)$ be the derivative of the denominator of $\overline{G}(\sigma_x)$.

    %We have \[G'(\sigma_x)=\frac{1-n\sigma_x^{n-1}}{1-\sigma_x^n-(1-\sigma_x)^n}-\frac{(\sigma_x-\sigma_x^n)\left(n(1-\sigma_x)^{n-1}-n\sigma_x^{n-1}\right)}{\left(1-\sigma_x^n-(1-\sigma_x)^{n}\right)^2}\]

    %\[\gamma'(\sigma_x)=\frac{1/n}{1-\sigma_x^n-(1-\sigma_x)^n}-\frac{\sigma_x\left(n(1-\sigma_x)^{n-1}-n\sigma_x^{n-1}\right)}{n\left(1-\sigma_x^n-(1-\sigma_x)^n\right)^2}\]

    %Observe that the negative term of $\gamma'(\sigma_x)$ is more negative than that of $G'(\sigma_x)$ for $\sigma_x<1-\frac{1}{\sqrt{n}}+\frac{1}{n}$, as $\frac{\sigma_x}{n}>\sigma_x-\sigma_x^n$ for such $\sigma_x$.
    %Also, the positive term of $G'(\sigma_x)$, $\frac{1-n\sigma_x^{n-1}}{1-\sigma_x^n-(1-\sigma_x)^n}$, is greater than the positive term of $\gamma'(\sigma_x)$, i.e., for $\sigma_x<1-\frac{1}{\sqrt{n}}-\frac{1}{n}$, \[\frac{1-n\sigma_x^{n-1}}{1-\sigma_x^n-(1-\sigma_x)^n}>\frac{1/n}{1-\sigma_x^n-(1-\sigma_x)^n}\] since $1-n\sigma_x^{n-1}<\frac{1}{n}$ for such values of $\sigma_x$.
    For sufficiently large $n$ (in particular, $n\geq 13$), $(\frac{1}{n}-\frac{1}{n^2})^{\frac{1}{n-1}}>1-\frac{1}{\sqrt{n}}+\frac{1}{n}$.
    Thus, for such $n$, $\overline{G}(\sigma_x)$ is monotonically increasing for $\sigma_x\leq1-\frac{1}{\sqrt{n}}+\frac{1}{n}$.
\end{proof}

\begin{lemma} \label{lem:lower-bound-util}
    Let $n > 8 \left( \frac{4}{\pmin} \log \frac{1}{\pmin} \right).$ For $\sigma_x\in(0,1]$ and for all $x \in \X^*$, 
    \[u_i(x,\bm{\sigma}_{-i}) \leq \frac{p_x}{n\sigma_x}(1-\sigma_x^n-(1-\sigma_x)^n)+\frac{1}{n}\sigma_x^{n-1}+\frac{1}{n^2};\]
    when $\sigma_x=0$, $u_i(x,\bm{\sigma}_{-i})\leq p_x+\frac{1}{n^2}$.
\end{lemma}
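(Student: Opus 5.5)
We decompose $u_i(x,\bm{\sigma}_{-i})$ exactly as in the proof of Lemma~\ref{lem:deviation-lower-bound}, by the law of total expectation over the disjoint, exhaustive events $E_1$, $E_2$, $E_3$. We compute the contributions of $E_1$ and $E_2$ exactly rather than bound them. Equation~\eqref{eq:e1} together with the binomial simplification \eqref{eq:final-e1} gives $\frac{p_x}{n\sigma_x}(1-\sigma_x^n-(1-\sigma_x)^n)$, and equation~\eqref{eq:e2} gives $\frac{1}{n}\sigma_x^{n-1}$. These were equalities in that proof, and the inequality there came only from dropping the nonnegative $E_3$ term. So it suffices to show $\E[\pi_i(x,\bm{x}_{-i},y)\ones_{E_3}] \le \frac{1}{n^2}$.

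To bound the $E_3$ term, we let $C$ be the event that the positions of the other $n-1$ agents cover $\X^*$. On $E_3 \cap C$, the target $y$ has $x^*(y) \ne x$ (up to the $Q$-null set of $y$ with $|x^*(y)|>1$, by Condition~\ref{cond:ties-finite-x}). Some other agent $j$ plays exactly $x^*(y)$, so $d(x_j,y) = \min_{x''\in\X} d(x'',y) < d(x,y)$, where strictness holds because the minimizer is unique. Hence $x \notin X_{\min}(\bm{x},y)$ and $\pi_i = 0$. On $E_3\setminus C$ we bound $\pi_i \le 1$. This gives
\[
\E[\pi_i\ones_{E_3}] \le \Pr[\neg C].
\]
We then invoke Lemma~\ref{lem:mixed-extreme}, which needs exactly the assumed bound $n > 8\left(\frac{4}{\pmin}\log\frac{1}{\pmin}\right)$, to get $\Pr[\neg C] \le \frac{1}{n^2}$. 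The subtlety is that Lemma~\ref{lem:mixed-extreme} is stated for a symmetric mixed equilibrium $\sigma$: its proof uses Proposition~\ref{prop:mixed-lower-bound} to lower bound each $\sigma_{x'}$. Its union bound is already over the $n-1$ other agents, via $(1-\sigma_{x'})^{n-1}$. So the plan is to read the present lemma as applying to equilibrium $\sigma$, the only setting in which it is used (Proposition~\ref{prop:mixed-upper-bound}), and to state this explicitly in the proof. I expect this to be the main obstacle: the statement as written quantifies over arbitrary $\sigma$, and the coverage bound genuinely depends on the equilibrium lower bound on $\sigma_{x'}$.

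For the case $\sigma_x=0$, no other agent ever plays $x$. The $E_2$ contribution vanishes, and $E_1$ reduces to the event $x^*(y)=x$, on which $i$ is the unique closest agent and wins with probability $p_x$. This is consistent with the limit computed via L'H\^opital in Lemma~\ref{lem:deviation-lower-bound}. The $E_3$ argument is unchanged, so $u_i(x,\bm{\sigma}_{-i}) \le p_x + \frac{1}{n^2}$. Combining the exact $E_1$ and $E_2$ terms with the $E_3$ bound yields both claimed inequalities.
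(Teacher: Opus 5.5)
Your proposal is correct and follows the paper's proof: the same decomposition into $E_1$, $E_2$, $E_3$, with the first two contributions computed exactly, and the $E_3$ contribution bounded by the probability that $\X^*$ is uncovered, which Lemma~\ref{lem:mixed-extreme} caps at $\frac{1}{n^2}$. Your caveat that $\sigma$ must be a symmetric equilibrium is well-taken: the paper's proof invokes Lemma~\ref{lem:mixed-extreme} in exactly this way without saying so, and the stated bound can fail for arbitrary $\sigma$. Your direct computation for $\sigma_x=0$ is also cleaner than the paper's L'H\^opital limit.
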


\begin{proof}
    By Lemma \ref{lem:mixed-extreme}, the probability that at least one pseudo-target $x\in\X^*$ is uncovered is at most $\frac{1}{n^2}$.
    Then, we can use the same breakdown into events $E_1, E_2,$ and $E_3$ as Lemma~\ref{lem:deviation-lower-bound} to upper bound the utility of playing $x$.
    Specifically, the first two terms as derived in Lemma~\ref{lem:deviation-lower-bound} are exact.
    Meanwhile, $E_3$ is the event that $x^*(Y) \neq x$, and at least one other agent does not play $x.$
    In this case, if $\X^*$ is covered (which occurs with probability at least $1 - \frac{1}{n^2}$), agent $i$ gains utility 0 (since they are never in the set $X_{\min}(\bm{x}, y)$).
    Otherwise, we upper bound their utility by 1.
    It follows that
    \begin{equation}\label{eq:utility-upper-bound}
        u_i(x,\bm{\sigma}_{-i}) \leq \frac{p_x}{n\sigma_x}(1-\sigma_x^n-(1-\sigma_x)^n)+\frac{1}{n}\sigma_x^{n-1}+\frac{1}{n^2}.
    \end{equation}
    When $\sigma_x=0$, taking L'Hopital's rule once gives $u_i(x,\bm{\sigma}_{-i})=p_x+\frac{1}{n^2}$.
\end{proof}

\begin{proposition} \label{prop:mixed-upper-bound}
    Let $n > \max \left\{ 43, 8 \left(\frac{4}{\pmin} \log \frac{1}{\pmin} \right) \right\}$. Under any symmetric mixed equilibrium $\sigma$ and for all $x \in \X^*$, $\sigma_x \leq p_x + \frac{1}{n}$.
\end{proposition}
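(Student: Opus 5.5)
Fix $x \in \X^*$ and a symmetric mixed equilibrium $\sigma$; assume $\sigma_x > 0$, since otherwise the claim is trivial. Then $x \in \supp(\sigma)$, so Observation~\ref{obs:symmetric-equil-utility} gives $u_i(x,\bm{\sigma}_{-i}) = \frac{1}{n}$. Lemma~\ref{lem:lower-bound-util} now gives
\[
\frac{1}{n} \le \frac{p_x}{n\sigma_x}\bigl(1-\sigma_x^n-(1-\sigma_x)^n\bigr)+\frac{1}{n}\sigma_x^{n-1}+\frac{1}{n^2}.
\]
Multiply through by $n\sigma_x$ and rearrange. This yields exactly $p_x \ge \overline{G}(\sigma_x)$, with $\overline{G}$ as in Lemma~\ref{lem:monotonicity-with-ties}: the extra $\frac{1}{n^2}$ term becomes the $-\sigma_x/n$ in the numerator. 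The goal is to convert $p_x \ge \overline{G}(\sigma_x)$ into $\sigma_x \le p_x + \frac{1}{n}$. Equivalently, whenever $\sigma_x > p_x + \frac{1}{n}$, we need $\overline{G}(\sigma_x) > p_x$. A sufficient condition is $\overline{G}(s) > s - \frac{1}{n}$ for the relevant $s$, since then $\overline{G}(\sigma_x) > \sigma_x - \frac1n > p_x$.

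I split into cases on $\sigma_x$, using the threshold $\tau := 1-\frac{1}{\sqrt n}+\frac1n$ from Lemma~\ref{lem:monotonicity-with-ties}.

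\textbf{Case A: $\sigma_x \le \tau$.} Write $\overline{G}(s) = G(s) - \frac{s}{n\,D(s)}$, where $D(s)=1-s^n-(1-s)^n$. I will show $\overline{G}(s) \ge s - \frac1n$ directly. Lemma~\ref{lem:G-equality} and the symmetry of $G$ about $(\frac12,\frac12)$ show that $G(s)-s$ is positive below $\frac12$. This is the extremizing behavior, i.e.\ $G^{-1}(p) < p$ there, so $G(s) \ge s$ on $(0,\frac12]$. On that range it then suffices that $\frac{s}{nD(s)} \le \frac1n$, i.e.\ $s \le D(s)$, which holds once $n$ is moderately large because $s^n+(1-s)^n \le 1-s$ for $s$ away from $0$.

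For $s \in [\frac12,\tau]$ this argument fails, since $G(s) < s$ there. Instead I use symmetry: $G(s) = 1-G(1-s)$, with $1-s \ge \frac{1}{\sqrt n}-\frac1n$. I will expand $G(1-s)$ near that end and bound $s - G(s) + \frac{s}{nD(s)}$. The target is to show this is at most $\frac{1}{n}$, or more realistically at most a quantity $\le \frac{1}{n}$ after using $p_x \le 1-(|\X^*|-1)\pmin$.

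If this direct comparison is too lossy, I fall back on monotonicity of $\overline{G}$ on $(0,\tau)$ from Lemma~\ref{lem:monotonicity-with-ties}. Then $\sigma_x \le \overline{G}^{-1}(p_x)$, and I bound $\overline{G}^{-1}(p) - p \le \frac{1}{n}$ by evaluating at endpoints, as was done for $G$ in Lemma~\ref{lem:G-equality}.

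\textbf{Case B: $\sigma_x > \tau$.} Here monotonicity of $\overline{G}$ is lost. I would argue directly instead: $\sigma_x$ is then very close to $1$, so every other pseudo-target $x'$ has $\sigma_{x'} \le 1-\sigma_x < \frac{1}{\sqrt n}$. Proposition~\ref{prop:mixed-lower-bound} gives $\sigma_{x'} \ge p_{x'} - \frac1n$. If $|\X^*| \ge 2$, combining these forces $\pmin < \frac{1}{\sqrt n}+\frac1n$. But the hypothesis $n > 8\cdot\frac{4}{\pmin}\log\frac{1}{\pmin}$ gives $\pmin \gtrsim \frac{\log n}{n}$, which is not strong enough for a contradiction. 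So instead I sum the lower bounds: $1-\sigma_x = \sum_{x'\ne x}\sigma_{x'} \ge (1-p_x) - \frac{|\X^*|-1}{n}$. This yields only $\sigma_x \le p_x + \frac{|\X^*|-1}{n}$, which is too weak.

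The refined route is to use exactness. When $\sigma_x$ is near $1$, I evaluate $u_i(x',\bm\sigma_{-i})$ for $x' \neq x$ using the $E_1$ formula. Because $\sigma_{x'}$ is tiny, the Lemma~\ref{lem:G-equality} behavior near $\sigma=0$ gives $G^{-1}(p_{x'})$ close to $p_{x'} - \frac{1}{n}$ only when $p_{x'}$ is near $\frac1n$, and otherwise much tighter. Summing the sharper bounds $\sigma_{x'} \ge G^{-1}(p_{x'})$ with the curvature of $G$ should recover a total slack of at most $\frac1n$.

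I expect Case B, and the upper part $[\frac12,\tau]$ of Case A, to be the main obstacle. The first thing I would try is to show that equilibria with $\sigma_x > \tau$ force $p_x \ge 1-\frac{1}{n}$, using the constraint from the pseudo-targets $x' \neq x$ that each need $u_i(x',\bm\sigma_{-i}) \le \frac{1}{n}$. Once that holds, Lemma~\ref{lem:high-p-equilibrium-2} gives $\sigma_x = 1 \le p_x + \frac1n$ immediately. The condition $n > 43$ would be used to make the numerical inequalities in both cases valid.
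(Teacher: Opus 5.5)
The reduction to $p_x\ge\overline{G}(\sigma_x)$, the sufficient condition $\overline{G}(s)\ge s-\frac1n$, and Case~A on $(0,\frac12]$ are all sound. Indeed $G(s)\ge s$ exactly when $s\le\frac12$, and $s\le 1-s^n-(1-s)^n$ there for every $n\ge 2$. Splitting on $\sigma_x$ rather than on $p_x$ is also the right call: the paper's Case~1 applies monotonicity of $\overline{G}$ at $\sigma_x$ without knowing $\sigma_x<\tau$.

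The genuine gap is Case~B. The interval $[\frac12,\tau]$ is only announced, but Case~B, which is the crux, is left as a menu of ideas. The idea you would try first cannot work, because $\sigma_x>\tau$ does not force $p_x\ge 1-\frac1n$. Take $|\X^*|=2$, i.e.\ one-event forecasting, which has a symmetric equilibrium by Theorem~\ref{thm:mixed-existence}. Extremeness makes the $E_3$ term vanish, so $\sigma_x=G^{-1}(p_x)\ge p_x$ whenever $p_x\ge\frac12$. With $p_x=1-\frac{1}{2\sqrt n}$ and $n=10^6$ the hypothesis on $n$ holds, yet $\sigma_x>\tau$ while $p_x<1-\frac1n$.

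You are right that the crude bound $\sigma_{x'}\ge p_{x'}-\frac1n$ is too weak here. The paper's Case~2 ($p_x\ge 1-\frac{1}{\sqrt n}$) nonetheless rests on exactly that bound, via a pigeonhole on $\min_{x'\ne x}\sigma_{x'}$. That step needs $\pmin-\frac1n>\frac{1/\sqrt n-1/n}{|\X^*|-1}$. This cannot hold in that regime when $|\X^*|\ge 3$, since $(|\X^*|-1)\pmin\le 1-p_x\le\frac{1}{\sqrt n}$. So it will not fill your gap either.

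What is missing is the quantitative form of your ``refined route'', and it proves the whole proposition with no case split. Assume $|\X^*|\ge 2$, since otherwise the claim is trivial. The hypothesis on $n$ then implies $n\pmin\ge 2\log n$, so in particular $\pmin>\frac1n$. For every $x'\in\X^*$, equilibrium gives $u_i(x',\bm{\sigma}_{-i})\le\frac1n$. This forces $0<\sigma_{x'}<1$, and Lemma~\ref{lem:deviation-lower-bound} rearranges to $p_{x'}\le G(\sigma_{x'})$. Hence
\[
p_{x'}-\sigma_{x'}\;\le\; G(\sigma_{x'})-\sigma_{x'}\;=\;\frac{\sigma_{x'}(1-\sigma_{x'})^n-\sigma_{x'}^n(1-\sigma_{x'})}{1-\sigma_{x'}^n-(1-\sigma_{x'})^n}\;\le\;\frac{\sigma_{x'}\,e^{-n\sigma_{x'}}}{1-\sigma_{x'}^n-(1-\sigma_{x'})^n}.
\]
By Proposition~\ref{prop:mixed-lower-bound}, every pseudo-target carries mass at least $\pmin-\frac1n$. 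So both $\sigma_{x'}$ and $1-\sigma_{x'}$ are at least $\pmin-\frac1n$. Therefore $e^{-n\sigma_{x'}}$, $\sigma_{x'}^n$ and $(1-\sigma_{x'})^n$ are each at most $e^{1-n\pmin}\le e/n^2$, which gives $p_{x'}-\sigma_{x'}\le 3\sigma_{x'}/n^2$.

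Sum over $x'\ne x$ and use $\sum_{x'}p_{x'}=1\ge\sum_{x'}\sigma_{x'}$. This gives $\sigma_x\le 1-\sum_{x'\ne x}\sigma_{x'}\le p_x+\frac{3}{n^2}$. The key fact is that $p-G^{-1}(p)$ is exponentially small in $np$, not merely at most $\frac1n$. With it you need neither $\overline{G}$ nor Lemma~\ref{lem:lower-bound-util}.
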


\begin{proof}
    Again, we break into cases:
    
    \textbf{Case 1.}
    $p_x\in\left[\pmin,1-\frac{1}{\sqrt{n}}\right]$.

    We will show that if $u_i(x, \bm{\sigma}_{-i}) \geq \frac{1}{n}$, then $\sigma_x \leq p_x + \frac{1}{n}$.
    Then $\sigma$ is a mixed Nash equilibrium if and only if $\sigma_x \leq p_x + \frac{1}{n}$ for all $x\in\X^*$; otherwise, $u_i(x,\bm{\sigma}_{-i})<\frac{1}{n}$ and thus $x$ could not be in the support of $\sigma$ by Lemma~\ref{obs:symmetric-equil-utility}.

    By Lemma~\ref{lem:lower-bound-util}, we know 
    \[
    u_i(x,\bm{\sigma}_{-i}) \leq \frac{p_x}{n\sigma_x}(1-\sigma_x^n-(1-\sigma_x)^n)+\frac{1}{n}\sigma_x^{n-1}+\frac{1}{n^2}.
    \]
    Moreover, by Lemma \ref{obs:symmetric-equil-utility}, we must have $u_i(x, \bm{\sigma}_{-i}) = \frac{1}{n}.$ 
    Thus, for any value of $p_x$, the equilibrium $\sigma_x$ must satisfy
    \[\frac{p_x}{n\sigma_x}(1-\sigma_x^n-(1-\sigma_x)^n)+\frac{1}{n}\sigma_x^{n-1}+\frac{1}{n^2}\geq\frac{1}{n}.\]
    Equivalently, 
    \begin{align}
        p_x &\geq\frac{\sigma_x-\sigma_x^n-\frac{\sigma_x}{n}}{1-\sigma_x^n-(1-\sigma_x)^n}, \; \text{or} \nonumber \\
         p_x &\geq \overline G(\sigma_x). \label{eq:upper-bound-p-g}
    \end{align}
    Define \[\overline{F}(p_x,\sigma_x)=\frac{\sigma_x-\sigma_x^n-\frac{\sigma_x}{n}}{1-\sigma_x^n-(1-\sigma_x)^n}-p_x.\]

    Observe that $\overline{G}(\sigma_x)$ is continuously differentiable and nonzero for $\sigma_x\in(0,1-\frac{1}{\sqrt{n}}+\frac{1}{n})$, because by Lemma \ref{lem:G-monotonicity}, $G(\sigma_x)$ is continuously differentiable and the difference of two continuously differentiable functions is also continuously differentiable.
    Further, there exists $\hat{p}_x$ and $\hat{\sigma}_x$ such that $\overline{F}(p_x,\sigma_x)=0$.
    Then, as in Lemma \ref{lem:G-equality}, by the implicit function theorem, $\overline{G}^{-1}(p_x)$ exists for $p_x$ such that $G^{-1}(p_x)\in(0,1-\frac{1}{\sqrt{n}})$ (i.e., for $\sigma_x\in(0,p_x+\frac{1}{n})$ where $p_x=1-\frac{1}{\sqrt{n}}$).

    We first show that we have $\overline{G}(p_x+\frac{1}{n})\geq p_x$:

    \begin{align*}
        \frac{(p_x+\frac{1}{n})-(p_x+\frac{1}{n})^{n-1}-\frac{p_x+1/n}{n}}{1-(p_x+\frac{1}{n})^n-(1-p_x-\frac{1}{n})^n}-p_x&>0\\
        \left(p_x+\frac{1}{n}\right)\left(1-\frac{1}{n}\right)-\left(p_x+\frac{1}{n}\right)^{n-1}&>p_x\left(1-\left(p_x+\frac{1}{n}\right)^n-\left(1-p_x-\frac{1}{n}\right)^n\right)\\
        \left(p_x+\frac{1}{n}\right)\left(1-\frac{1}{n}\right)-\left(p_x+\frac{1}{n}\right)^{n-1}&>p_x
    \end{align*}
    
    where the third inequality follows from observing that $1-\sigma_x^n-(1-\sigma_x)^n\leq1$ for all $\sigma_x\in(\pmin,1)$.
    
    %where $G(\sigma_x)\geq p_x$ follows from Lemma \ref{lem:G-monotonicity}.
    %The inequality holds because $1-\sigma_x^n-(1-\sigma_x)^n\leq1$ for $\sigma_x\in(\pmin,1)$, and so the sum of the rightmost two terms is less than 0.

    Simplifying gives us \[\frac{1-p_x}{n}-\frac{1}{n^2}-\left(p_x+\frac{1}{n}\right)^{n-1}>0.\]
    
    Taking $p_x=1-\frac{1}{\sqrt{n}}$, this inequality holds for sufficiently large $n$ (in particular, when $p_x=1-\frac{1}{\sqrt{n}}$, we require $n\geq43$).
    Moreover, we have that $\overline{G}(\sigma_x)-\sigma_x$ is a decreasing function of $\sigma_x$; thus, for all values of $p_x<1-\frac{1}{\sqrt{n}}$, $\overline{G}(p_x+\frac{1}{n})-p_x>0$ as well.

    Having shown that $\overline{G}(p_x+\frac{1}{n})\geq p_x$, then, by monotonicity of $\overline{G}(\sigma_x)$, we have that $p_x+\frac{1}{n}\geq\overline{G}^{-1}(p_x)$.
    Finally, by Equation \ref{eq:upper-bound-p-g}, we have $p_x\geq \overline{G}(\sigma_x)$; then, by monotonicity of $\overline{G}(\sigma_x)$, we have that $p_x+\frac{1}{n}\geq \overline{G}^{-1}(p_x)\geq\sigma_x$, which shows the result.

    \textbf{Case 2.} $p_x \geq 1-\frac{1}{\sqrt{n}}$.
    We show that if $p_x \geq 1-\frac{1}{\sqrt{n}}$, then if a symmetric strategy $\sigma$ satisfies $\sigma_x> p_x +\frac{1}{n}$, there exists some position $x'$ with $\sigma_{x'}<p_{x'}-\frac{1}{n}$.
    It follows by Proposition~\ref{prop:mixed-lower-bound} that $\sigma$ cannot be an equilibrium.

    Consider a position $x$ with $p_x \geq 1-\frac{1}{\sqrt{n}}$, and let $\sigma_x>p_x+\frac{1}{n}$.
    Then there must exist some position $x'$ with 
    \begin{align*}
        \sigma_{x'} &\leq \frac{1 - \sigma_x}{|\X^*| - 1} \\
        &< \frac{\frac{1}{\sqrt{n}}-\frac{1}{n}}{|\X^*|-1} \\
        &= \frac{n-\sqrt{n}}{n\sqrt{n}(|\X^*|-1)} \\
        %&\leq p_{\min}-\frac{1}{n} \\
        &\leq p_{x'} - \frac{1}{n},
    \end{align*}
    where the third inequality follows from the existence of some $x$ with $p_{x'}\geq\frac{1}{n}+\frac{n-\sqrt{n}}{n\sqrt{n}(|\X^*|-1)}$, a condition that is automatically satisfied given the condition on $n$ from Lemma \ref{lem:mixed-extreme}.
    Taking $n=\left(\frac{32}{\pmin}\right)\log\left(\frac{8}{\pmin}\right)$, plug into the bound on $p_{x'}$; $\frac{1}{n}+\frac{n-\sqrt{n}}{n\sqrt{n}(|\X^*|-1)}<\pmin$, so the bound holds.
    Thus $\sigma_{x'}< p_{x'} -\frac{1}{n}$, contradicting Proposition \ref{prop:mixed-lower-bound}.
    It follows that $\sigma_x\leq p_x+\frac{1}{n}$ for $p_x \geq 1-\frac{1}{\sqrt{n}}$.
    
\end{proof}

\section{Additional application results omitted from \S~\ref{sec:app}}
Here, we apply our main results to additional settings captured by our model. 

\subsection{Spatial voting}
In the spatial voting setting, the pseudo-target set $\X^*$ is the set of optimal ideologies for the voters in $\Y$ (that is, $d(x^*(y),y) = 0$). 
We can thus refer to each $x \in \X^*$ as a voter's ``ideal position.''
It follows a strategy profile is extreme if candidates only choose ideological positions that correspond to some voter's ideal position.

\begin{corollary}{(Spatial voting)}
    Consider the plurality rule with randomized tie-breaking.
    If $n \geq \frac{2}{\pmin}$, a pure Nash equilibrium exists that is extreme; moreover, the equilibrium covers the set of voters' ideal positions, i.e. for each voter $y$, at least one candidate chooses position $\arg\min_{x'}d(x', y)$.
    If $n > \frac{1}{\pmin}$, then all pure equilibria are extreme and cover all voters' ideal positions. 
    Moreover, for $n > \frac{1}{c} \geq \frac{2}{\pmin}$, the KL-divergence between the empirical distribution of candidate positions and the true distribution of voters' ideal positions is upper bounded by $\log \left( \frac{\lfloor cn\rfloor + 1}{\lfloor cn \rfloor} \right).$
    
    If $n > \max \left\{ 43, 8 \left(\frac{4}{\pmin} \log \frac{1}{\pmin} \right) \right\}$, any symmetric mixed Nash equilibrium $\sigma$ is extreme, and for each ideal position $x \in \X^*$, $|\sigma(x) - P(x)| \leq \frac{1}{n}.$
\end{corollary}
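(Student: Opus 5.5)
This corollary follows by instantiating the general theorems for the spatial voting instance of $\game$ described in Appendix~\ref{appendix:setting}. The first step is to check that the instance really is a position-optimization game satisfying Condition~\ref{cond:ties-finite-x}. Each voter $y$ has a unique ideal position $x^*(y)$ with $d(x^*(y),y)=0<d(x',y)$ for all $x'\neq x^*(y)$. Hence $|x^*(y)|=1$ for every $y$, so the tie condition holds trivially. Also $|\X^*|\le|\Y|=m<\infty$.

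Next we identify the objects in the general results with their voting counterparts:
\begin{itemize}
\item the pseudo-targets $\X^*$ are exactly the voters' ideal positions;
\item $P(x)$ is the fraction of voters whose ideal position is $x$;
\item $\pmin$ is the smallest such fraction;
\item the utility $u_i$ written in the appendix is exactly \eqref{eq:gen-utility} with $Q$ uniform on $\Y$, which is plurality rule with uniform tie-breaking.
\end{itemize}
Under this dictionary, ``extreme'' means every candidate sits at some voter's ideal position, and ``covers $\X^*$'' means every ideal position is occupied.

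With this in place, each clause is a direct citation.
\begin{itemize}
\item \emph{Existence for $n\ge\frac{2}{\pmin}$.} Theorem~\ref{thm:pure-existence} gives an extreme pure equilibrium with at least two agents on each pseudo-target, so it covers $\X^*$.
\item \emph{All equilibria for $n>\frac{1}{\pmin}$.} Theorem~\ref{thm:pure-extreme} shows every pure equilibrium is extreme and covers $\X^*$.
\item \emph{KL bound.} For $n\ge\frac{1}{c}>\frac{2}{\pmin}$, Theorem~\ref{thm:kl-div} gives $\kldiv{\emp}{P}\le\log\frac{\lfloor cn\rfloor+1}{\lfloor cn\rfloor}$, and here $\emp$ is the empirical distribution of candidate positions.
\item \emph{Symmetric mixed equilibria.} For $n>\max\{43,8(\frac{4}{\pmin}\log\frac{1}{\pmin})\}$, Theorem~\ref{thm:mixed-convergence}, via Propositions~\ref{prop:extreme-equilibria}, \ref{prop:mixed-lower-bound} and~\ref{prop:mixed-upper-bound}, gives extremeness and $|\sigma(x)-P(x)|\le\frac1n$ for each $x\in\X^*$.
\end{itemize}

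The only delicate point is that the corollary's hypotheses must match those of the cited theorems. The KL statement uses $n>\frac{1}{c}\ge\frac{2}{\pmin}$, whereas Theorem~\ref{thm:kl-div} assumes $n\ge\frac1c>\frac{2}{\pmin}$. In the boundary case $\frac1c=\frac{2}{\pmin}$, I would apply the theorem with a slightly smaller $c'<c$ satisfying $n\ge\frac{1}{c'}>\frac{2}{\pmin}$. This gives the bound with $\lfloor c'n\rfloor$ in place of $\lfloor cn\rfloor$, and since $\lfloor\cdot\rfloor$ is right-continuous, $c'$ can be chosen so that $\lfloor c'n\rfloor=\lfloor cn\rfloor$, which recovers the stated bound.

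Similarly, the first clause only asserts existence and covering at $n\ge\frac{2}{\pmin}$, and Theorem~\ref{thm:pure-existence} covers that case directly. No new argument is needed beyond verifying the modeling hypotheses.
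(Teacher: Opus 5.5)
Your overall route is how the paper obtains this corollary: check Condition~\ref{cond:ties-finite-x}, translate $\X^*$, $P$, $\pmin$ into voting terms, and cite Theorems~\ref{thm:pure-existence}, \ref{thm:pure-extreme}, \ref{thm:kl-div} and~\ref{thm:mixed-convergence}. The paper gives no separate argument. You are also right that the KL clause's hypothesis $n>\frac1c\ge\frac{2}{\pmin}$ does not literally match Theorem~\ref{thm:kl-div}.

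\textbf{The gap is in your patch for the boundary case $\frac1c=\frac{2}{\pmin}$.} Any admissible $c'$ has $c'<c$, so $c'n$ approaches $cn$ from \emph{below}. The floor function is right-continuous, not left-continuous, so right-continuity does not help here. When $cn=\frac{n\pmin}{2}$ is an integer, every $c'\in[\frac1n,c)$ gives $\lfloor c'n\rfloor=cn-1$. Since $a\mapsto\log\frac{a+1}{a}$ is decreasing, you only get the weaker bound $\log\frac{cn}{cn-1}$. For example, with $\pmin=\frac14$, $c=\frac18$, $n=16$, the corollary claims $\log\frac32$, but your argument gives only $\log 2$. Your perturbation does work when $cn\notin\mathbb{Z}$, by taking $c'\in[\max\{\frac1n,\frac{\lfloor cn\rfloor}{n}\},c)$.

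\textbf{How to fix it.} Do not perturb $c$. Instead, check that the proof of Lemma~\ref{lem:pure-lower-bound} uses $\frac1c>\frac{2}{\pmin}$ for only two things:
(i) $n>\frac{2}{\pmin}$, so every equilibrium is extreme and covers $\X^*$ with $\countv\ge 2$ (Theorems~\ref{thm:pure-extreme} and~\ref{thm:pure-existence});
(ii) $P(x)>c$ for all $x\in\X^*$, so that if $\countv(x)<\lfloor cn\rfloor$, an agent with utility at most $\frac1n$ gains at least $\frac{P(x)}{\countv(x)+1}\ge\frac{P(x)}{cn}>\frac1n$ by moving to $x$.

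At $c=\frac{\pmin}{2}$ with $n>\frac1c$, both still hold: $n>\frac{2}{\pmin}$ and $P(x)\ge\pmin=2c>c$. So Lemma~\ref{lem:pure-lower-bound}, Lemma~\ref{lem:wp-convergence} and Theorem~\ref{thm:kl-div} go through verbatim with $c=\frac{\pmin}{2}$, and $\lfloor cn\rfloor\ge 1$ because $cn>1$. Your other clauses are fine as direct citations.
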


In other words, as the number of candidates grows much larger than the number of voters, those candidates' projected ideological positions will mimic the underlying distribution of voters' ideal positions. 
Instead of ideological convergence to the median voter, this result implies that candidates will commit to matching individual voters' ideologies to capture their votes. 
However, it is important to note that in most traditional political elections the number of voters is in fact much larger than the number of candidates; our results do not hold in these settings.

\subsection{Discrete Voronoi games}
In a discrete Voronoi game, each pseudo-target $x \in \X^*$ corresponds exactly to a user $v \in V$.
A strategy profile is extreme if influencers only place themselves on vertices.

\begin{corollary}{(Discrete Voronoi games)}
    If $n \geq \frac{2}{\pmin}$, a pure Nash equilibrium exists that is extreme; moreover, the equilibrium covers the set of users, i.e. for each user $v \in V$, at least one influencer chooses position $v$.
    If $n > \frac{1}{\pmin}$, then all pure equilibria are extreme and cover all users.
    Moreover, for $n > \frac{1}{c} \geq \frac{2}{\pmin}$, the KL-divergence between the empirical distribution of influencers and the true distribution of users over the vertices is upper bounded by $\log \left( \frac{\lfloor cn\rfloor + 1}{\lfloor cn \rfloor} \right).$
    
    If $n > \max \left\{ 43, 8 \left(\frac{4}{\pmin} \log \frac{1}{\pmin} \right) \right\},$ any symmetric mixed Nash equilibrium $\sigma$ is extreme, and for each user $v \in V$, $|\sigma(v) - P(v)| \leq \frac{1}{n}.$
\end{corollary}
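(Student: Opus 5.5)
This corollary is a direct specialization of Theorems~\ref{thm:pure-extreme}, \ref{thm:pure-existence}, \ref{thm:kl-div} and~\ref{thm:mixed-convergence} to the discrete Voronoi game. So the plan is to verify that the Voronoi instance is a position-optimization game $\game$, identify its pseudo-targets and $P$, and then read off each clause. Following Appendix~\ref{appendix:setting}, set $\X=\Y=V$, take $d$ to be shortest-path distance, and let $Q$ be the vertex weights. Since $d(v,v)=0<d(u,v)$ for $u\neq v$ (edges have positive length, and distinct vertices are at positive distance), we get $x^*(v)=\{v\}$ uniquely. Hence $Q(\{y:|x^*(y)|>1\})=0$, and $\X^*=V$ is finite because $G$ is a finite graph. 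Condition~\ref{cond:ties-finite-x} therefore holds, and $P(v)=Q(v)$. One subtlety I would flag: $\pmin>0$ requires every vertex to have positive weight. If some vertex has zero weight, I would simply redefine $\Y$ as the support of $Q$, so that $\X^*$ consists of the vertices with positive weight. The statement implicitly assumes this anyway.

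The pure-equilibrium clauses then follow directly. For $n\geq 2/\pmin$, Theorem~\ref{thm:pure-existence} gives an extreme pure equilibrium with at least two agents on each pseudo-target, which in particular covers every user vertex. For $n>1/\pmin$, Theorem~\ref{thm:pure-extreme} shows that every pure equilibrium is extreme and covers $\X^*=V$. For $n\geq 1/c>2/\pmin$, Theorem~\ref{thm:kl-div} bounds $\kldiv{\emp}{P}$ by $\log\frac{\lfloor cn\rfloor+1}{\lfloor cn\rfloor}$. Here $\emp$ is exactly the empirical distribution of influencers over vertices, which is well defined because the equilibrium is extreme.

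For the mixed clause, I would invoke Theorem~\ref{thm:mixed-convergence} directly. Under the stated lower bound on $n$, every symmetric mixed equilibrium is supported on $\X^*=V$, i.e.\ influencers only place themselves on vertices, and $|\sigma(v)-P(v)|\leq 1/n$ with $P(v)=Q(v)$.

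The only real obstacle is a matter of bookkeeping. The corollary's hypothesis ``$n>1/c\geq 2/\pmin$'' must be reconciled with the hypothesis of Theorem~\ref{thm:kl-div}, which is ``$n\geq 1/c>2/\pmin$.'' In the boundary case $1/c=2/\pmin$, I would pick $c'$ slightly smaller than $c$ so that $n\geq 1/c'>2/\pmin$ while $\lfloor c'n\rfloor=\lfloor cn\rfloor$. This is possible because $n>1/c$ is a strict inequality, so $c$ can be perturbed without changing the floor. Otherwise the proof is a transcription of the general results.
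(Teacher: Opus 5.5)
Your route matches the paper's. The paper states this corollary without a separate proof, as a direct reading of Theorems~\ref{thm:pure-extreme}, \ref{thm:pure-existence}, \ref{thm:kl-div} and~\ref{thm:mixed-convergence} with $\X=\Y=\X^*=V$ and $P=Q$. Your check of Condition~\ref{cond:ties-finite-x} and of the pure and mixed clauses is fine.

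The one step that fails is your repair of the boundary case $1/c=2/\pmin$ in the KL clause. Lowering $c$ to some $c'<c$ preserves $\lfloor cn\rfloor$ only when $cn\notin\mathbb{Z}$. The floor is constant on $[m,m+1)$, so if $cn=m$ then $\lfloor c'n\rfloor\le m-1$ for every $c'<c$, and the strictness of $n>1/c$ does not help. For example, two vertices of weight $1/2$ give $\pmin=1/2$. Take $c=1/4$ and $n=8$: then $n>1/c=2/\pmin$ and $\lfloor cn\rfloor=2$. But every admissible $c'\in[1/8,1/4)$ has $\lfloor c'n\rfloor=1$, so Theorem~\ref{thm:kl-div} applied with $c'$ gives only $\log 2$, not the claimed $\log(3/2)$.

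The fix is to keep $c$ and note that the strict inequality $1/c>2/\pmin$ is not actually needed.
\begin{itemize}
\item Lemma~\ref{lem:wp-convergence} and Theorem~\ref{thm:kl-div} use only two things. First, $n>2/\pmin$, so that by Theorems~\ref{thm:pure-extreme} and~\ref{thm:pure-existence} every pure equilibrium is extreme with at least two agents per pseudo-target, which makes the utilities in Lemma~\ref{obs:pure-utility-bound} exact. Second, the conclusion $\countv(x)\ge\lfloor cn\rfloor$ of Lemma~\ref{lem:pure-lower-bound}.
\item That conclusion holds whenever $c<\pmin$. Suppose $\countv(x)\le\lfloor cn\rfloor-1$, and take an agent $i$ with $u_i\le 1/n$ (Observation~\ref{obs:agent-low-util}).
\item Agent $i$ is not on $x$, since agents there earn at least $P(x)/\countv(x)>\pmin/\lfloor cn\rfloor$.
\item By Lemma~\ref{obs:pure-utility-bound}, deviating to $x$ earns at least $P(x)/(\countv(x)+1)\ge\pmin/\lfloor cn\rfloor$.
\item Finally $\pmin/\lfloor cn\rfloor\ge\pmin/(cn)>1/n$, because $c<\pmin$. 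This contradicts equilibrium.
\end{itemize}
With $c=\pmin/2$ and $n>1/c=2/\pmin$ both requirements hold, and $\lfloor cn\rfloor\ge 1$ since $cn>1$.

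Separately, your zero-weight remark is unnecessary and slightly off. If some vertex has weight $0$, then $\pmin=0$ and every clause of the corollary is vacuous. Redefining $\Y=\supp(Q)$ instead would shrink $\X^*$, and for large $n$ it would make the claim that every $v\in V$ is covered false.
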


That is, when the number of influencers in a network grows much larger than the number of users, they will strategically spread out in the network according to the underlying distribution of users in an attempt to maximize their corresponding Voronoi cell.

\end{document}